\newcommand{\algocomment}[1]{{\tcp{#1}\vspace*{-0.25ex}}}
\newcommand{\algospace}{{\vspace*{0.3ex}}}
\newtheorem{assumption}{Assumption}[section]
\newtheorem{prop}{Proposition}[section]
\newtheorem{lemma}{Lemma}[section]
\theoremstyle{remark}
\newtheorem{remark}{Remark}[section]
\newcommand{\xzu}{{$\mathbf{x}$-$\mathbf{z}$-$\mathbf{u}$}}
\newcommand{\zxu}{{$\mathbf{z}$-$\mathbf{x}$-$\mathbf{u}$}}
\newcommand{\lev}[2]{{\mathscr{L}_{#2}^{#1}}}
\newcommand{\dom}[1]{{\mathrm{dom(}{#1}\mathrm{)}}}
\newcommand{\conv}[1]{{\mathrm{conv(}{#1}\mathrm{)}}}
\newcommand{\hatg}{\hat{g}}
\DeclareMathOperator*{\argmin}{argmin}
\newcommand{\squeezemath}[1]{
\medmuskip= #1 mu
\thinmuskip= #1 mu
\thickmuskip= #1 mu
}
\begin{document}
\title{Accelerating ADMM for Efficient Simulation and Optimization}

\author{Juyong Zhang}
\email{juyong@ustc.edu.cn}
\authornote{Equal contributions.}
\authornote{Corresponding author (\href{mailto:juyong@ustc.edu.cn}{juyong@ustc.edu.cn}).}
\affiliation{%
	\institution{University of Science and Technology of China}
}

\author{Yue Peng}
\email{echoyue@mail.ustc.edu.cn}
\authornotemark[1]
\affiliation{%
	\institution{University of Science and Technology of China}
}

\author{Wenqing Ouyang}
\email{wq8809@mail.ustc.edu.cn}
\authornotemark[1]
\affiliation{%
	\institution{University of Science and Technology of China}
}

\author{Bailin Deng}
\email{DengB3@cardiff.ac.uk}
\affiliation{%
	\institution{Cardiff University}
}

\authorsaddresses{Authors' addresses:
	$\{$Juyong Zhang, Yue Peng, Wenqing Ouyang$\}$, University of Science and Technology of China, 96 Jinzhai Road, Hefei 230026, Anhui, China, $\{$juyong@ustc.edu.cn, echoyue@mail.ustc.edu.cn, wq8809@mail.ustc.edu.cn$\}$; Bailin Deng, Cardiff University, 5 The Parade, Cardiff CF24 3AA, Wales, United Kingdom, DengB3@cardiff.ac.uk.}

\begin{abstract}
The alternating direction method of multipliers (ADMM) is a popular approach for solving optimization problems that are potentially non-smooth and with hard constraints. It has been applied to various computer graphics applications, including physical simulation, geometry processing, and image processing. However, ADMM can take a long time to converge to a solution of high accuracy. Moreover, many computer graphics tasks involve non-convex optimization, and there is often no convergence guarantee for ADMM on such problems since it was originally designed for convex optimization.
In this paper, we propose a method to speed up ADMM using Anderson acceleration, an established technique for accelerating fixed-point iterations. We show that in the general case, ADMM is a fixed-point iteration of the second primal variable and the dual variable, and Anderson acceleration can be directly applied. Additionally, when the problem has a separable target function and satisfies certain conditions, ADMM becomes a fixed-point iteration of only one variable, which further reduces the computational overhead of Anderson acceleration.
Moreover, we analyze a particular non-convex problem structure that is common in computer graphics, and prove the convergence of ADMM on such problems under mild assumptions. We apply our acceleration technique on a variety of optimization problems in computer graphics, with notable improvement on their convergence speed.
\end{abstract}

%
%
\begin{CCSXML}
	<ccs2012>
	<concept>
	<concept_id>10010147.10010371</concept_id>
	<concept_desc>Computing methodologies~Computer graphics</concept_desc>
	<concept_significance>500</concept_significance>
	</concept>
	<concept>
	<concept_id>10010147.10010371.10010352</concept_id>
	<concept_desc>Computing methodologies~Animation</concept_desc>
	<concept_significance>500</concept_significance>
	</concept>
	<concept>
	<concept_id>10003752.10003809.10003716.10011138.10011140</concept_id>
	<concept_desc>Theory of computation~Nonconvex optimization</concept_desc>
	<concept_significance>300</concept_significance>
	</concept>
	</ccs2012>
\end{CCSXML}

\ccsdesc[500]{Computing methodologies~Computer graphics}
\ccsdesc[500]{Computing methodologies~Animation}
\ccsdesc[300]{Theory of computation~Nonconvex optimization}

%
%

\keywords{Physics Simulation, Geometry Optimization, ADMM, Anderson Acceleration}

\begin{teaserfigure}
	\centering
	\includegraphics[width=\textwidth]{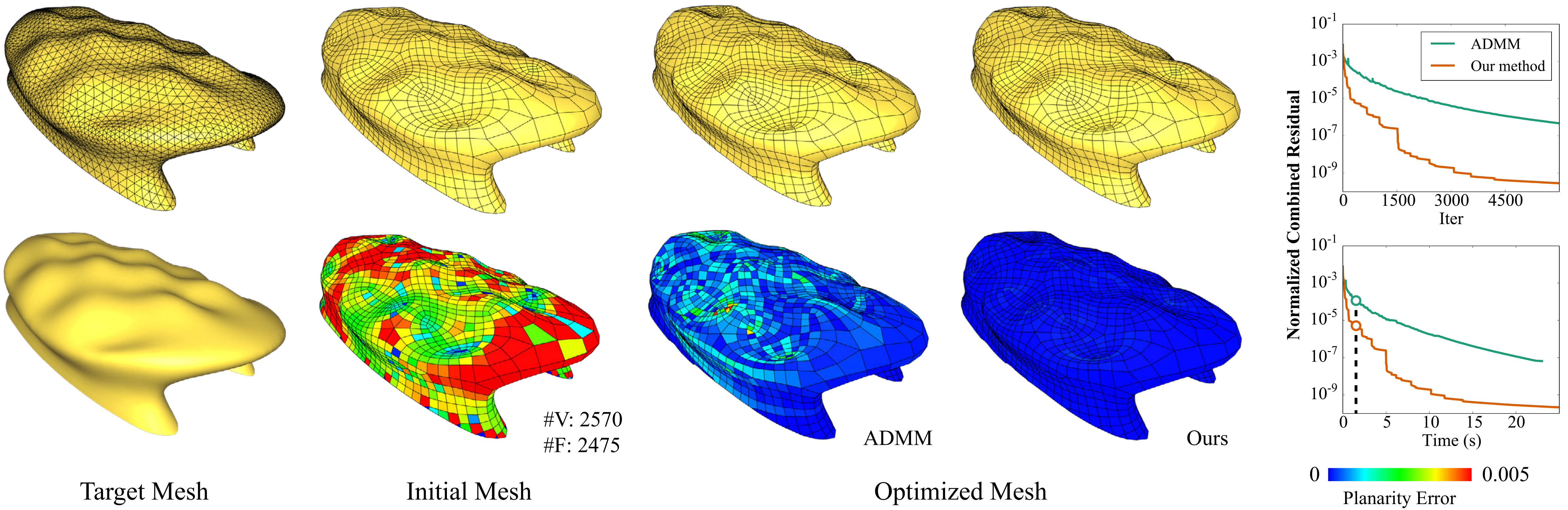}
	\caption{We apply our accelerated ADMM solver to optimize a quad mesh, subject to hard constraints of face planarity and soft constraints of closeness to a reference surface. Our solver leads to a faster decrease of combined residual than the original ADMM, achieving better satisfaction of hard constraints within the same computational time (highlighted in the plot in bottom right).}
	\label{Fig:Airport}
\end{teaserfigure}

\maketitle

\section{Introduction}

Many tasks in computer graphics involve solving optimization problems. For example, a geometry processing task may compute the vertex positions of a deformed mesh by minimizing its deformation energy~\cite{SorkineA07}, whereas a physical simulation task may optimize the node positions of a system to enforce physics laws that govern its behavior~\cite{Martin2011,Schumacher2012}. Such tasks are often formulated as \emph{unconstrained} optimization, where the target function penalizes the violation of certain conditions so that they are satisfied as much as possible by the solution. It has been an active research topic to develop fast numerical solvers for such problems, with various methods proposed in the past~\cite{SorkineA07,Liu2008,BouazizDSWP12,Liu2013,Bouaziz2014,Wang15,KovalskyGL16,LiuBK17,ShtengelPSKL17,RabinovichPPS17}.

On the other hand, some applications involve optimization with \emph{hard constraints}, i.e., conditions that need to be enforced strictly. Such \emph{constrained} optimization problems are often more difficult to solve~\cite{nocedal2006numerical}. One possible solution strategy is to introduce a quadratic penalty term for the hard constraints with a large weight, thereby converting it into an unconstrained problem that is easier to handle. However, to strictly enforce the hard constraints, their penalty weight needs to approach infinity~\cite{nocedal2006numerical}, which can cause instability for numerical solvers. More sophisticated techniques, such as sequential quadratic programming or the interior-point method, can enforce constraints without stability issues. However, these solvers often incur high computational costs and may not meet the performance requirements for graphics applications. It becomes even more challenging for non-smooth problems where the target function is not everywhere differentiable, as many constrained optimization solvers require gradient information and may not be applicable for such cases.

In recent years, the alternating direction method of multipliers (ADMM)~\cite{boyd2011distributed} has become a popular approach for solving optimization problems that are potentially non-smooth and with hard constraints. The key idea is to introduce auxiliary variables and derive an equivalent problem with a separable target function, subject to a linear compatibility constraint between the original variables and the auxiliary variables~\cite{Combettes2011}.
ADMM searches for a solution to this converted problem by alternately updating the original variables, the auxiliary variables, and the dual variables. With properly chosen auxiliary variables, each update step can reduce to simple sub-problems that can be solved efficiently, often in parallel with closed-form solutions. In addition, ADMM does not rely on the smoothness of the problem, and converges quickly to a solution of moderate accuracy~\cite{boyd2011distributed}. Such properties make ADMM an attractive choice for solving large-scale optimization problems in various applications such as signal processing~\cite{Chartrand2013,Simonetto2014}, image processing~\cite{Figueiredo2010,Almeida2013}, and computer vision~\cite{Liu2013-Tensor}. Recently, ADMM has also been applied for computer graphics problems such as geometry processing~\cite{Bouaziz2013,Neumann2013,Zhang2014-LBC,Xiong2014,Neumann2014-CMM}, physics simulation~\cite{Gregson2014,Pan2017,Overby2017}, and computational photography~\cite{Heide2016,Xiong2017,Wang2018-Megapixel}.  

Despite the effectiveness and versatility of ADMM, there are still two major limitations for its use in computer graphics. First, although ADMM converges quickly in initial iterations, its final convergence might be slow~\cite{boyd2011distributed}. This makes it impractical for problems with a strong demand for solution accuracy, such as those with strict requirements on the satisfaction of hard constraints. Recent attempts to accelerate ADMM such as~\cite{Goldstein2014,Kadkhodaie2015,Zhang2018-GMRES} are only designed for convex problems, which limits their applications in computer graphics. 
Second, ADMM was originally designed for convex problems, whereas many computer graphics tasks involve non-convex optimization. Although ADMM turns out to be effective for many non-convex problems in practice, its convergence for general non-convex optimization remains an open research question. Recent convergence results such as~\cite{li2015global,hong2016convergence,magnusson2016convergence,wang2019global} rely on strong assumptions that are not satisfied by many computer graphics problems.

This paper addresses these two issues of ADMM. First, we propose a method to accelerate ADMM for non-convex optimization problems. 
Our approach is based on Anderson acceleration~\cite{Anderson1965,Walker2011}, a well-established technique for accelerating fixed-point iterations. 
Previously, Anderson acceleration has been applied to local-global solvers for unconstrained optimization problems in computer graphics~\cite{Peng2018}. Our approach expands its applicability to many constrained optimization problems as well as other unconstrained problems where local-solver solvers are not feasible. To this end, we need to solve two problems: (i) we must find a way to interpret ADMM as a fixed-point iteration; (ii) as Anderson acceleration can become unstable, we should define criteria to accept the accelerated iterate and a fall-back strategy when it is not accepted, similar to~\cite{Peng2018}.
We show that in the general case ADMM is a fixed-point iteration of the second primal variable and the dual variable, and we can evaluate the effectiveness of an accelerated iterate via its \emph{combined residual} which is known to vanish when the solver converges.
Moreover, when the problem structure satisfies some mild conditions, one of these two variables can be determined from the other one; in this case ADMM becomes a fixed-point iteration of only one variable with less computational overhead, and we can accept an accelerated iterate based on a more simple condition.
We apply this method to a variety of ADMM solvers for computer graphics problems, and observe a notable improvement in their convergence rates.

Additionally, we provide a new convergence proof of ADMM on non-convex problems, under weaker assumptions than the convergence results in~\cite{li2015global,hong2016convergence,magnusson2016convergence,wang2019global}. For a particular problem structure that is common in computer graphics, we also provide sufficient conditions for the global linear convergence of ADMM. Our proofs shed new light on the convergence properties of non-convex ADMM solvers.
\section{Related Work}

\paragraph{Optimization solvers in computer graphics}
The development of efficient optimization solvers has been an active research topic in computer graphics. One particular type of method, called local-global solvers, has been widely used for unconstrained optimization in geometry processing and physical simulation.
For geometry processing, Sorkine and Alexa~\shortcite{SorkineA07} proposed a local-global approach to minimize deformation energy for as-rigid-as-possible mesh surface modeling.  Liu et al.~\shortcite{Liu2008} developed a similar method to perform conformal and isometric parameterization for triangle meshes. Bouaziz et al.~\shortcite{BouazizDSWP12} extended the approach to a unified framework for optimizing discrete shapes. For physical simulation, Liu et al.~\shortcite{Liu2013} proposed a local-global solver for optimization-based simulation of mass-spring systems. Bouaziz et al.~\shortcite{Bouaziz2014} extended this approach to the projective dynamics framework for implicit time integration of physical systems via energy minimization.

Local-global solvers often converge quickly to an approximate solution, but may be slow for final convergence. Other methods have been proposed to achieve improved convergence rates. For geometry processing, Kovalsky et al.~\shortcite{KovalskyGL16} achieved a fast convergence of geometric optimization by iteratively minimizing a local quadratic proxy function. Rabinovich et.al.~\shortcite{RabinovichPPS17} proposed a scalable approach to compute locally injective mappings, via local-global minimization of a reweighted proxy function.
Claici et al.~\shortcite{Claici2017} proposed a preconditioner for fast minimization of distortion energies.
Shtengel et al.~\shortcite{ShtengelPSKL17} applied the idea of majorization-minimization~\cite{Lange2004} to iteratively update and minimize a convex majorizer of the target energy in geometric optimization. Zhu et al.~\shortcite{Zhu2018-Blended} proposed a fast solver for distortion energy minimization, using a blended quadratic energy proxy together with improved line-search strategy and termination criteria.
For physical simulation, Wang~\shortcite{Wang15} proposed a Chebyshev semi-iterative acceleration technique for projective dynamics. Later, Wang and Yang~\shortcite{Wang2016} developed a GPU-friendly gradient descent method for elastic body simulation, using Jacobi preconditioning and Chebyshev acceleration. Liu et al.~\shortcite{LiuBK17} proposed an L-BFGS solver for physical simulation, with faster convergence than the projective dynamics solver from~\cite{Bouaziz2014}. Brandt et al.~\shortcite{Brandt2018-HPD} performed projective dynamics simulation in a reduced subspace, to compute fast approximate solutions for high-resolution meshes. 

\paragraph{ADMM} ADMM is a popular solver for optimization problems with separable target functions and linear side constraints~\cite{boyd2011distributed}. Using auxiliary variables and indicator functions, such formulation allows for non-smooth optimization with hard constraints, with wide applications in signal processing~\cite{Erseghe2011,Simonetto2014,Shi2014}, image processing~\cite{Figueiredo2010,Almeida2013}, computer vision~\cite{Hu2013-Fast,Liu2013-Tensor,Yang2017-Nuclear}, computational imaging~\cite{Chan2017-Plug}, automatic control~\cite{Lin2013}, and machine learning~\cite{Zhang2014-Asynchronous,Hajinezhad2016-Nonnegative}. ADMM has also been used in computer graphics to handle non-smooth optimization problems~\cite{Bouaziz2013,Neumann2013,Zhang2014-LBC,Xiong2014,Neumann2014-CMM} or to benefit from its fast initial convergence~\cite{Gregson2014,Heide2016,Xiong2017,Pan2017,Overby2017,Wang2018-Megapixel}.

ADMM was originally designed for convex optimization~\cite{gabay1975dual,fortin1983chapter,eckstein1992douglas}. For such problems, its global linear convergence has been established in~\cite{Lin2015,Deng2016-Global,Giselsson2017}, but these proofs require both terms in the target function to be convex. In comparison, our proof of global linear convergence allows for non-convex terms in the target function, which is better aligned with computer graphics problems.
In practice, ADMM works well for many non-convex problems as well~\cite{wen2012alternating,Chartrand2012,Chartrand2013,miksik2014distributed,lai2014splitting,liavas2015parallel}, but it is more challenging to establish its convergence for general non-convex problems. Only very recently have such convergence proofs been given under strong assumptions~\cite{li2015global,hong2016convergence,magnusson2016convergence,wang2019global}. We provide in this paper a general proof of convergence for non-convex problems under weaker assumptions.

It is well known that ADMM converges quickly to an approximate solution, but may take a long time to convergence to a solution of high accuracy~\cite{boyd2011distributed}. This has motivated researchers to explore acceleration techniques for ADMM. Goldstein et al.~\shortcite{Goldstein2014} and Kadkhodaie et al.~\shortcite{Kadkhodaie2015} applied Nesterov's acceleration~\cite{Nesterov83}, whereas Zhang and White~\shortcite{Zhang2018-GMRES} applied GMRES acceleration to a special class of problems where the ADMM iterates become linear. All these methods are designed for convex problems only, which limits their applicability in computer graphics.

\paragraph{Anderson acceleration} Anderson acceleration~\cite{Walker2011} is an established technique to speed up the convergence of a fixed-point iteration. It was first proposed in~\cite{Anderson1965} for solving nonlinear integral equations, and independently re-discovered later by Pulay~\shortcite{Pulay1980,Pulay1982} for accelerating the self-consistent field method in quantum chemistry. Its key idea is to utilize the $m$ previous iterates to compute a new iterate that converges faster to the fixed point. It is indeed a quasi-Newton method for finding a root of the residual function, by approximating its inverse Jacobian using previous iterates~\cite{Eyert1996,Fang2009,Rohwedder2011}. Recently, a renewed interest in this method has led to the analysis of its convergence~\cite{Toth2015,Toth2017}, as well as its application in various numerical problems~\cite{Sterck2012,Lipnikov2013,Pratapa2016,Suryanarayana2016,Ho2017}. Peng et al.~\cite{Peng2018} noted that local-global solvers in computer graphics can be treated as fixed-point iteration, and applied Anderson acceleration to improve their convergence. Additionally, to address the stability issue of classical Anderson acceleration~\cite{Walker2011,Potra2013}, they utilize the monotonic energy decrease of local-global solvers and only accept an accelerated iterate when it decreases the target energy. Fang and Saad~\shortcite{Fang2009} called classical Anderson acceleration the Type-II method in an Anderson family of multi-secant methods. Another member of the family, called the type-I method, uses quasi-Newton to approximate the Jacobian of the fixed-point residual function instead~\cite{Walker2011}, and has been analyzed recently in~\cite{Zhang2018-TypeI}.
In this paper, we focus our discussion on the type-II method.  
\section{Our Method}

\subsection{Preliminary}

\paragraph{ADMM} Let us consider an optimization problem
\begin{equation}
\min_{\mathbf{x}}~~\varPhi(\mathbf{x}, \mathbf{D} \mathbf{x} + \mathbf{h}).
\label{eq:GeneralOptimization}
\end{equation}
Here $\mathbf{x}$ can be the vertex positions of a discrete geometric shape, or the node positions of a physical system at a particular time instance. The quantity $\mathbf{D} \mathbf{x} + \mathbf{h}$ encodes a transformation of the positions $\mathbf{x}$ relevant for the optimization problem, such as the deformation gradient of each tetrahedron element in an elastic object. The notation $\varPhi(\mathbf{x}, \mathbf{D} \mathbf{x} + \mathbf{h})$ signifies that the target function contains a term that directly depends on $\mathbf{D} \mathbf{x} + \mathbf{h}$, such as elastic energy dependent on the deformation gradient.
In some applications, the optimization enforces \emph{hard constraints} on $\mathbf{x}$ or $\mathbf{D} \mathbf{x} + \mathbf{h}$, i.e., conditions that need to be strictly satisfied by the solution. Such hard constraints can be encoded using an \emph{indicator function} term within the target function. Specifically, suppose we want to enforce a condition $\mathbf{y} \in \mathcal{C}$ where $\mathbf{\mathbf{y}}$ is a subset from the components of $\mathbf{x}$ or $\mathbf{D} \mathbf{x} + \mathbf{h}$, and $\mathcal{C}$ is the \emph{feasible set}. Then we include the following term into $\varPhi$:
\[
	\sigma_{\mathcal{C}} (\mathbf{y})
	= \left\{
	\begin{array}{ll}
	0  & \textrm{if}~\mathbf{y} \in \mathcal{C}\\
	+ \infty & \textrm{otherwise}
	\end{array}
	\right..
\]
By definition, if $\mathbf{x}^\ast$ is a solution, then the corresponding components $\mathbf{y}^{\ast}$  must satisfy $\mathbf{y}^{\ast} \in \mathcal{C}$; otherwise it will result in a target function value $+\infty$ instead of the minimum. Examples of such an approach to modeling hard constraints can be found in~\cite{Deng2015}.

In many applications, the optimization problem~\eqref{eq:GeneralOptimization} can be non-linear, non-convex, and potentially non-smooth. It is challenging to solve such a problem numerically, especially when hard constraints are involved. One common technique is to introduce an auxiliary variable $\mathbf{z} = \mathbf{D} \mathbf{x} + \mathbf{h}$ to derive an equivalent problem
\begin{equation}
\min_{\mathbf{x},\mathbf{z}} ~~ \varPhi(\mathbf{x}, \mathbf{z})\quad
\textrm{s.t.} ~~ \mathbf{W}(\mathbf{z} - \mathbf{D} \mathbf{x} - \mathbf{h}) = 0,
\label{eq:ConvertedProblem}
\end{equation}
where $\mathbf{W}$ is a diagonal matrix with positive diagonal elements. $\mathbf{W}$ can be the identity matrix in the trivial case, or a diagonal scaling matrix that improves conditioning~\cite{Giselsson2017,Overby2017}.
ADMM~\cite{boyd2011distributed} is widely used to solve such problems. For ease of discussion, let us consider the problem
\begin{equation}
\min_{\mathbf{x},\mathbf{z}} ~~ \varPhi(\mathbf{x}, \mathbf{z})\quad
\textrm{s.t.} ~~ \mathbf{A} \mathbf{x} - \mathbf{B} \mathbf{z} = \mathbf{c},
\label{eq:ADMMProblem}
\end{equation}
Its solution corresponds to a stationary point of the augmented Lagrangian function
\begin{align}
L(\mathbf{x},\mathbf{z},\mathbf{u})&=\varPhi(\mathbf{x}, \mathbf{z}) + \langle \mu \mathbf{u},\mathbf{A}\mathbf{x}-\mathbf{B}\mathbf{z}-\mathbf{c} \rangle + \frac{\mu}{2} \| \mathbf{A}\mathbf{x} - \mathbf{B} \mathbf{z} - \mathbf{c} \|^2\nonumber\\
&=\varPhi(\mathbf{x}, \mathbf{z})+\frac{\mu}{2}\| \mathbf{A}\mathbf{x}-\mathbf{B}\mathbf{z}+\mathbf{u}-\mathbf{c}\|^2-\frac{\mu}{2}\| \mathbf{u}\|^2.
\end{align}
Here $\mathbf{u}$ is the \emph{dual variable} and $\mu > 0$ is the penalty parameter. Following~\cite{boyd2011distributed}, we also call $\mathbf{x}$ and $\mathbf{z}$ the \emph{primal variables}. ADMM searches for a stationary point by alternately updating $\mathbf{x}$, $\mathbf{z}$ and $\mathbf{u}$, resulting in the following iteration scheme~\cite{boyd2011distributed}:
\begin{equation}
\begin{aligned}
\mathbf{x}^{k+1}&=\argmin_{\mathbf{x}}~L(\mathbf{x}, \mathbf{z}^{k},\mathbf{u}^k),\\
\mathbf{z}^{k+1}&=\argmin_{\mathbf{z}}~L(\mathbf{x}^{k+1}, \mathbf{z},\mathbf{u}^k),\\
\mathbf{u}^{k+1}&=\mathbf{u}^k+\mathbf{A}\mathbf{x}^{k+1}-\mathbf{B}\mathbf{z}^{k+1}-\mathbf{c}.
\end{aligned}
\label{eq:xzu}
\end{equation}
We can also update $\mathbf{z}$ before $\mathbf{x}$, resulting in an alternative scheme:
\begin{equation}
\begin{aligned}
\mathbf{z}^{k+1}&=\argmin_{\mathbf{z}}~L(\mathbf{x}^{k}, \mathbf{z},\mathbf{u}^k),\\
\mathbf{x}^{k+1}&=\argmin_{\mathbf{x}}~L(\mathbf{x}, \mathbf{z}^{k+1},\mathbf{u}^k),\\
\mathbf{u}^{k+1}&=\mathbf{u}^k+\mathbf{A}\mathbf{x}^{k+1}-\mathbf{B}\mathbf{z}^{k+1}-\mathbf{c}.
\end{aligned}
\label{eq:zxu}
\end{equation}
In this paper, we refer to the scheme \eqref{eq:xzu} as \xzu{} iteration, and the scheme \eqref{eq:zxu} as \zxu{} iteration. In both cases, the updates for $\mathbf{z}$ and $\mathbf{x}$ often reduce to simple subproblems that can potentially be solved in parallel.
According to~\cite{boyd2011distributed}, the optimality condition of ADMM is that both its \emph{primal residual} and \emph{dual residual} vanish. For both iteration schemes above, the primal residual is defined as
\[
	\mathbf{r}_{\textrm{p}}^{k+1} = \mathbf{A}\mathbf{x}^{k+1}-\mathbf{B}\mathbf{z}^{k+1}-\mathbf{c}.
\]
As for the dual residual: for the \xzu{} iteration it is defined as
\begin{equation}
	\mathbf{r}_{\textrm{d}}^{k+1} = \mu \mathbf{A}^T \mathbf{B} (\mathbf{z}^{k+1} - \mathbf{z}^{k}),
	\label{eq:xzudualresidual}
\end{equation}
whereas for the \zxu{} iteration it is defined as
\begin{equation}
	\mathbf{r}_{\textrm{d}}^{k+1} = \mu \mathbf{B}^T \mathbf{A} (\mathbf{x}^{k+1} - \mathbf{x}^{k}).
	\label{eq:zxudualresidual}
\end{equation}
Intuitively, the primal residual measures the violation of the linear side constraint, whereas the dual residual measures the violation of the dual feasibility condition~\cite{boyd2011distributed}. Accordingly, ADMM is terminated when both $\|\mathbf{r}_{\textrm{p}}^{k+1}\|$ and $\|\mathbf{r}_{\textrm{d}}^{k+1}\|$ are small enough.

\paragraph{Anderson acceleration} ADMM is easy to parallelize and convergences quickly to an approximate solution. However, it can take a long time to converge to a solution of high accuracy~\cite{boyd2011distributed}. In the following subsections, we will discuss how to apply Anderson acceleration~\cite{Walker2011} to improve its convergence. Anderson acceleration is a technique to speed up the convergence of a fixed-point iteration $G: \mathbb{R}^n \mapsto \mathbb{R}^n$, by utilizing the current iterate as well as $m$ previous iterates. Let $\mathbf{q}^{k-m}, \mathbf{q}^{k-m+1}, \ldots,  \mathbf{q}^{k}$ be the latest $m+1$ iterates, and denote their residuals under mapping $G$ as $F^{k-m}, F^{k-m+1}, \ldots, F^{k}$, where $F^j = G(\mathbf{q}^{j}) - \mathbf{q}^{j}$ ($j = k-m, \ldots, k$). Then the accelerated iterate is computed as
\begin{eqnarray}
	\mathbf{q}_{\textrm{AA}}^{k+1} &=&
	(1 - \beta) \left(\mathbf{q}^k - \sum_{j=1}^m \theta_j^\ast (\mathbf{q}^{k-j+1} - \mathbf{q}^{k-j})\right)\nonumber\\
	&&+ ~\beta \left(G(\mathbf{q}^k) - \sum_{j=1}^m \theta_j^\ast (G(\mathbf{q}^{k-j+1}) - G(\mathbf{q}^{k-j}))\right),
	\label{eq:AndersonAcceleration}
\end{eqnarray}
where $(\theta_1^\ast, \ldots, \theta_m^\ast)$ is the solution to a linear least-squares problem:
\begin{equation}
	\min_{(\theta_1, \ldots, \theta_m)}~~\left\| F^k - \sum_{j=1}^m \theta_j (F^{k-j+1} - F^{k-j}) \right\|^2.
	\label{eq:AALS}
\end{equation}
In Eq.~\eqref{eq:AndersonAcceleration}, $\beta \in (0, 1]$ is a mixing parameter, and is typically set to~1~\cite{Walker2011}. We follow this convention throughout this paper.
Previously, Anderson acceleration has been applied to speed up local-global solvers in computer graphics~\cite{Peng2018}.

\subsection{Anderson acceleration of ADMM: the general approach}
\label{sec:GeneralAA}
\begin{algorithm}[t]
	\KwData{
		\hspace*{1ex}$\mathbf{x}^0$, $\mathbf{z}^0$, $\mathbf{u}^0$: initial values of variables;\\
		\hspace*{1ex}$L$: the augmented Lagrangian function;\\
		\hspace*{1ex}$m$: the number of previous iterates used for acceleration;\\
		\hspace*{1ex}$\textrm{AA}(\mathcal{G},\mathcal{F})$: Anderson accleration from a sequence $\mathcal{G}$ of fixed-point mapping results of previous iterates, and a sequence $\mathcal{F}$ of their corresponding fixed-point residuals;\\
		\hspace*{1ex}$I_{\max{}}$: the maximum number of iterations;\\
		\hspace*{1ex}$\varepsilon$: convergence threshold for combined residual.
	}
	\BlankLine
	$\mathbf{x}_{\textrm{default}} = \mathbf{x}^0$; {~~}
	$\mathbf{z}_{\textrm{default}} = \mathbf{z}^0 $; {~~}
	$\mathbf{u}_{\textrm{default}} = \mathbf{u}^0$\;
	$r_{\mathrm{prev}} = +\infty$; {~~}
	$j=0$; {~~}
	reset = TRUE; {~~}
	$k=0$\;
	\While{TRUE}{
		\algocomment{Run one iteration of ADMM}
		$\mathbf{x}_{\star} = \argmin_{\mathbf{x}} L(\mathbf{x}, \mathbf{z}^{k},\mathbf{u}^k)$\;
		$\mathbf{z}_{\star} = \argmin_{\mathbf{z}} L(\mathbf{x}_{\star}, \mathbf{z},\mathbf{u}^k)$\;
		$\mathbf{u}_{\star} = \mathbf{u}^k+\mathbf{A}\mathbf{x}_{\star}-\mathbf{B}\mathbf{z}_{\star}-\mathbf{c}$\;
		
		\algospace
		\algocomment{Compute the combined residual}
		$r = \|\mathbf{A}\mathbf{x}_{\star}-\mathbf{B}\mathbf{z}_{\star}-\mathbf{c}\|^2 + \|\mathbf{B} (\mathbf{z}_{\star} - \mathbf{z}^{k})\|^2$\;
		\eIf{ reset == TRUE \textbf{OR} $r < r_{\mathrm{prev}}$}
		{
			\algocomment{Record the latest accepted iterate}
			$\mathbf{x}_{\textrm{default}} = \mathbf{x}_{\star}$; {~~}
			$\mathbf{z}_{\textrm{default}} = \mathbf{z}_{\star}$; {~~}
			$\mathbf{u}_{\textrm{default}} = \mathbf{u}_{\star}$\;
			$r_{\mathrm{prev}} = r$; {~~} reset = FALSE\;
			\algospace
			\algocomment{Compute the accelerated iterate}
			$\mathbf{g}_j= (\mathbf{z}_{\star}, \mathbf{u}_{\star})$; \quad $\mathbf{f}_j= (\mathbf{z}_{\star} - \mathbf{z}^k, \mathbf{u}_{\star} - \mathbf{u}^k)$\;
			$j = j + 1$; {~~} $\overline{m} = \min(m - 1, j)$\;
			$(\mathbf{z}^{k+1}, \mathbf{u}^{k+1}) = \textrm{AA}\left([\mathbf{g}_j, \ldots, \mathbf{g}_{j-\overline{m}}], [\mathbf{f}_j, \ldots, \mathbf{f}_{j-\overline{m}}]\right)$\;
			$k = k+1$;
		}
		{
			\algocomment{Revert to the last accepted iterate}
			$\mathbf{z}^{k} = \mathbf{z}_{\textrm{default}}$; {~~}
			$\mathbf{u}^{k} = \mathbf{u}_{\textrm{default}}$; {~~}
			reset = TRUE;
		}
		\If(\tcp*[f]{Check termination}){$k \geq I_{\max{}}$ \textbf{OR} $r < \varepsilon$}
		{
			\Return $\mathbf{x}_{\textrm{default}}$;
			\tcp*[f]{Return the last accepted $\mathbf{x}$}
		}
	}
	\caption{Anderson acceleration for ADMM with \xzu{} iteration.}
	\label{algo:xzuGeneralAA}
\end{algorithm}
To speed up ADMM with Anderson acceleration, we must first define its iteration scheme as a fixed-point iteration. For the \xzu{} iteration, we note that $\mathbf{x}^{k+1}$ is dependent only on $\mathbf{z}^{k}$ and $\mathbf{u}^k$. Therefore, by treating $\mathbf{x}^{k+1}$ as a function of $(\mathbf{z}^{k}, \mathbf{u}^k)$, we can rewrite $\mathbf{z}^{k+1}$, and subsequently $\mathbf{u}^{k+1}$, as a function of $(\mathbf{z}^{k}, \mathbf{u}^k)$ as well. In this way, the \xzu{} iteration can be treated as a fixed-point iteration of $(\mathbf{z}, \mathbf{u})$:
\[
	(\mathbf{z}^{k+1}, \mathbf{u}^{k+1}) = G(\mathbf{z}^{k}, \mathbf{u}^{k}).
\]
Similarly, we can treat the \zxu{} scheme as a fixed-point iteration of $(\mathbf{x}, \mathbf{u})$.
In addition, to ensure stability for Anderson acceleration, we should define criteria to evaluate the effectiveness of an accelerated iterate, as well as a fall-back strategy when the criteria are not met. Goldstein et al.~\shortcite{Goldstein2014} pointed out that if the problem is convex, then its \emph{combined residual} is monotonically decreased by ADMM. For the \xzu{} iteration, the combined residual is defined as
\begin{equation}
	r_{\textrm{\xzu{}}}^{k+1} = \mu \|\mathbf{A}\mathbf{x}^{k+1}-\mathbf{B}\mathbf{z}^{k+1}-\mathbf{c}\|^2 + \mu \|\mathbf{B} (\mathbf{z}^{k+1} - \mathbf{z}^{k})\|^2.
	\label{eq:xzucombinedresidual}
\end{equation}
Here the first term is a measure of the primal residual, whereas the second term is related to the dual residual~\eqref{eq:xzudualresidual} but without the matrix $\mathbf{A}^T$. The combined residual for the \zxu{} iteration is defined as
\begin{equation}
	r_{\textrm{\zxu{}}}^{k+1} = \mu \|\mathbf{A}\mathbf{x}^{k+1}-\mathbf{B}\mathbf{z}^{k+1}-\mathbf{c}\|^2 + \mu \|\mathbf{A} (\mathbf{x}^{k+1} - \mathbf{x}^{k})\|^2.
	\label{eq:zxucombinedresidual}
\end{equation}
Although \cite{Goldstein2014} only proved the monotonic decrease of the combined residual for convex problems, our experiments show that the combined residual is decreased by the majority of iterates from the non-convex ADMM solvers considered in this paper. Indeed, if ADMM converges to a solution, then both the primal residual $\mathbf{A}\mathbf{x}^{k+1}-\mathbf{B}\mathbf{z}^{k+1}-\mathbf{c}$ and the variable changes $\mathbf{z}^{k+1} - \mathbf{z}^{k}$ and $\mathbf{x}^{k+1} - \mathbf{x}^{k}$ must converge to zero, so the combined residual must converge to zero as well. Therefore, we evaluate the effectiveness of an accelerated iterate by checking whether it decreases the combined residual compared with the previous iteration, and revert to the un-accelerated ADMM iterate if this is not the case.

\begin{algorithm}[t]
	$\mathbf{x}_{\textrm{default}} = \mathbf{x}^0$; \hfill $\mathbf{u}_{\textrm{default}} = \mathbf{u}^0$; \hfill $r_{\mathrm{prev}} = +\infty$; \hfill $j=0$; \hfill reset = TRUE; \hfill $k = 0$\;
	\While{TRUE}{
		$\mathbf{z}_{\star} = \argmin_{\mathbf{z}} L(\mathbf{x}^k, \mathbf{z},\mathbf{u}^k)$\;
		$\mathbf{x}_{\star} = \argmin_{\mathbf{x}} L(\mathbf{x}, \mathbf{z}_{\star},\mathbf{u}^k)$\;
		$\mathbf{u}_{\star} = \mathbf{u}^k+\mathbf{A}\mathbf{x}_{\star}-\mathbf{B}\mathbf{z}_{\star}-\mathbf{c}$\;
		$r = \|\mathbf{A}\mathbf{x}_{\star}-\mathbf{B}\mathbf{z}_{\star}-\mathbf{c}\|^2 + \|\mathbf{A} (\mathbf{x}_{\star} - \mathbf{x}^{k})\|^2$\;
		\eIf{ reset == TRUE \textbf{OR} $r < r_{\mathrm{prev}}$}
		{
			$\mathbf{x}_{\textrm{default}} = \mathbf{x}_{\star}$; {~~}
			$\mathbf{u}_{\textrm{default}} = \mathbf{u}_{\star}$; {~~}
			$r_{\mathrm{prev}} = r$; {~~} reset = FALSE\;
			$j = j + 1$; {~~} $\overline{m} = \min(m - 1, j)$\;
			$\mathbf{g}_j= (\mathbf{x}_{\star},\mathbf{u}_{\star})$; \quad $\mathbf{f}_j= (\mathbf{x}_{\star} - \mathbf{x}^k, \mathbf{u}_{\star} - \mathbf{u}^k)$\;
			$(\mathbf{x}^{k+1}, \mathbf{u}^{k+1}) = \textrm{AA}\left([\mathbf{g}_j, \ldots, \mathbf{g}_{j-\overline{m}}], [\mathbf{f}_j, \ldots, \mathbf{f}_{j-\overline{m}}]\right)$\;
			$k = k + 1$;
		}
		{
			$\mathbf{x}^{k} = \mathbf{x}_{\textrm{default}}$; {~~}
			$\mathbf{u}^{k} = \mathbf{u}_{\textrm{default}}$; {~~}
			reset = TRUE;
		}
		\If{$k \geq I_{\max{}}$ \textbf{OR} $r < \varepsilon$}
		{
			\Return $\mathbf{x}_{\textrm{default}}$;
		}
	}
	\caption{Anderson acceleration for ADMM with \zxu{} iteration.}
	\label{algo:zxuGeneralAA}
\end{algorithm}

Algorithm~\ref{algo:xzuGeneralAA} summarizes our Anderson acceleration approach for the \xzu{} iteration. Note that the evaluation of combined residual requires computing the change of $\mathbf{z}$ in one un-accelerated ADMM iteration. However, given an accelerated iterate $(\mathbf{z}_{\textrm{AA}}, \mathbf{u}_{\textrm{AA}})$, it is often difficult to find a pair $(\mathbf{z}_{\dagger}, \mathbf{u}_{\dagger})$ that leads to $(\mathbf{z}_{\textrm{AA}}, \mathbf{u}_{\textrm{AA}})$ after one ADMM iteration (i.e., $(\mathbf{z}_{\textrm{AA}}, \mathbf{u}_{\textrm{AA}}) = G(\mathbf{z}_{\dagger}, \mathbf{u}_{\dagger})$). Therefore, we run one ADMM iteration on $(\mathbf{z}_{\textrm{AA}}, \mathbf{u}_{\textrm{AA}})$ instead, and use the resulting values $(\mathbf{z}_{\star}, \mathbf{u}_{\star}) = G (\mathbf{z}_{\textrm{AA}}, \mathbf{u}_{\textrm{AA}})$ to evaluate the combined residual. If the accelerated iterate is accepted, then the computation of $(\mathbf{z}_{\star}, \mathbf{u}_{\star})$ can be reused in the next step of the algorithm and incurs no overhead. We can derive an acceleration method for the \xzu{} iteration in a similar way, by swapping $\mathbf{x}$ and $\mathbf{z}$ and adopting Eq.~\eqref{eq:zxudualresidual} for the computation of combined residual, as summarized in Algorithm~\ref{algo:zxuGeneralAA}.

\begin{remark}
	If the target function $\varPhi$ contains an indicator function for a hard constraint on the primal variable updated in the second step of an ADMM iteration (i.e., $\mathbf{z}$ in the \xzu{} iteration, or $\mathbf{x}$ in the \zxu{} iteration), then after each iteration this variable must satisfy the hard constraint. However, as Anderson acceleration computes the accelerated iterate via an affine combination of previous iterates, the accelerated $\mathbf{z}_{\textrm{AA}}$ or $\mathbf{x}_{\textrm{AA}}$ may violate the constraint unless its feasible set is an affine space. In other words, the accelerated iterate may not correspond to a valid ADMM iteration, and may cause issues if it is used as a solution. Therefore, to apply Anderson acceleration, we should ensure that $\varPhi$ contains no indicator function associated with the primal variable updated in the second step of the original ADMM iteration. This does not limit the applicability of our method, because it can always be achieved by introducing auxiliary variables and choosing an appropriate iteration scheme. The simulation in Fig.~\ref{fig:WindyFlag} is an example of changing the iteration scheme to allow acceleration.
	\label{remark:NoConstraint}
\end{remark}

\subsection{ADMM with a separable target function}
\label{sec:SeparableADMM}
The general approach in Section~\ref{sec:GeneralAA} does not assume any special structure of the target function. When the target function terms for $\mathbf{x}$ and $\mathbf{z}$ are separable, it is possible to improve the efficiency of acceleration further. To this end, we consider the following problem
\begin{equation}
\min_{\mathbf{x},\mathbf{z}} ~~ f(\mathbf{x}) + g(\mathbf{z}), \quad
\textrm{s.t.} ~~ \mathbf{A} \mathbf{x} - \mathbf{B} \mathbf{z} = \mathbf{c}.
\label{eq:SeparableADMMProblem}
\end{equation}
Moreover, we assume this problem satisfies the following properties:
\begin{assumption}
	Matrix $\mathbf{B}$ is invertible.
	\label{assump:InvertibleB}
\end{assumption}
\begin{assumption}
	$f(\mathbf{x})$ is a strongly convex quadratic function
	\begin{equation}
	f(\mathbf{x}) = \frac{1}{2} (\mathbf{x} - \tilde{\mathbf{x}})^T \mathbf{G} (\mathbf{x} - \tilde{\mathbf{x}}),
	\label{eq:fx}
	\end{equation}
	where $\tilde{\mathbf{x}}$ is a constant and $\mathbf{G}$ is a symmetric positive definite matrix.
	\label{assump:SPDG}
\end{assumption}
One example of such optimization is the implicit time integration of elastic bodies in~\cite{Overby2017}, where $\tilde{\mathbf{x}}$ is the predicted values of node positions ${\mathbf{x}}$ without internal forces, $\mathbf{G} = \mathbf{M}/\Delta t^2$ where $\mathbf{M}$ is the mass matrix and $\Delta t$ is the integration time step, the auxiliary variable $\mathbf{z}$ stacks the deformation gradient of each element, and $g(\mathbf{z})$ sums the elastic potential energy for all elements.
For the problem \eqref{eq:SeparableADMMProblem}, the \xzu{} iteration of ADMM becomes
\begin{align}
\mathbf{x}^{k+1}&=(\mathbf{G}+\mu\mathbf{A}^T\mathbf{A})^{-1}(\mathbf{G}\tilde{\mathbf{x}}+\mu\mathbf{A}^T(\mathbf{B}\mathbf{z}^k+\mathbf{c}-\mathbf{u}^k)),\label{eq:Sepxzu_x}\\
\mathbf{z}^{k+1}&=\argmin_{\mathbf{z}}\left(g(\mathbf{z})+\frac{\mu}{2}\| \mathbf{A}\mathbf{x}^{k+1}-\mathbf{B}\mathbf{z}-\mathbf{c}+\mathbf{u}^k\|^2\right),\label{eq:Sepxzu_z}\\
\mathbf{u}^{k+1}&=\mathbf{u}^k+\mathbf{A}\mathbf{x}^{k+1}-\mathbf{B}\mathbf{z}^{k+1}-\mathbf{c}.
\label{eq:Sepxzu_u}
\end{align}
And the \zxu{} iteration becomes
\begin{align}
\mathbf{z}^{k+1}&=\argmin_{\mathbf{z}} \left(g(\mathbf{z})+\frac{\mu}{2}\| \mathbf{A}\mathbf{x}^{k}-\mathbf{B}\mathbf{z}-\mathbf{c}+\mathbf{u}^k\|^2 \right),\label{eq:Sepzxu_z}\\
\mathbf{x}^{k+1}&=(\mathbf{G}+\mu\mathbf{A}^T\mathbf{A})^{-1}(\mathbf{G}\tilde{\mathbf{x}}+\mu\mathbf{A}^T(\mathbf{B}\mathbf{z}^{k+1}+\mathbf{c}-\mathbf{u}^k)),\label{eq:Sepzxu_x}\\
\mathbf{u}^{k+1}&=\mathbf{u}^k+\mathbf{A}\mathbf{x}^{k+1}-\mathbf{B}\mathbf{z}^{k+1}-\mathbf{c}.
\label{eq:Sepzxu_u}
\end{align}
Similar to Remark~\ref{remark:NoConstraint}, we assume that the target function contains no indicator function for the primal variable updated in the second step.
The general acceleration algorithms in Section~\ref{sec:GeneralAA} treat ADMM as a fixed-point iteration of $(\mathbf{z}, \mathbf{u})$ or $(\mathbf{x}, \mathbf{u})$. Next, we will show that if the problem satisfies certain conditions, then ADMM becomes a fixed-point iteration of only one variable, allowing us to reduce the overhead of Anderson acceleration and improve its effectiveness.

\begin{remark}
Without assuming the convexity of function $g(\cdot)$, there may be multiple solutions for the minimization problems in \eqref{eq:Sepxzu_z} and \eqref{eq:Sepzxu_z}. Throughout this paper, we assume the solver adopts a deterministic algorithm for \eqref{eq:Sepxzu_z} and \eqref{eq:Sepzxu_z}, so that given the same values of $\mathbf{x}$ and $\mathbf{u}$ it always returns the same value of $\mathbf{z}$.
\label{remark:UniqueMinimum}
\end{remark}

\subsubsection{\xzu{} iteration}
For the \xzu{} iteration~\eqref{eq:Sepxzu_x}-\eqref{eq:Sepxzu_u}, under certain conditions $\mathbf{u}^{k+1}$ can be represented as a function of $\mathbf{z}^{k+1}$:
\begin{prop}
If the optimization problem~\eqref{eq:SeparableADMMProblem} satisfies Assumptions~\ref{assump:InvertibleB} and \ref{assump:SPDG}, and the function $g(\mathbf{z})$ is differentiable, then the \xzu{} iteration~\eqref{eq:Sepxzu_x}-\eqref{eq:Sepxzu_u} satisfies
\begin{equation}
	\mathbf{u}^{k+1} = \frac{1}{\mu}\mathbf{B}^{-T} \nabla g(\mathbf{z}^{k+1}).
	\label{eq:ufromz}
\end{equation}
\label{prop:ufromz}
\end{prop}
A proof is given in Appendix~\ref{sec:proof:ufromz}. Proposition~\ref{prop:ufromz} shows that $\mathbf{u}^{k+1}$ can be recovered from $\mathbf{z}^{k+1}$. Therefore, we can treat the \xzu{} iteration~\eqref{eq:Sepxzu_x}-\eqref{eq:Sepxzu_u} as a fixed-point iteration of $\mathbf{z}$ instead of $(\mathbf{z},\mathbf{u})$, and apply Anderson acceleration to $\mathbf{z}$ alone. From the accelerated $\mathbf{z}_{\textrm{AA}}$, we recover its corresponding dual variable $\mathbf{u}_{\textrm{AA}}$ via Eq.~\eqref{eq:ufromz}. This approach brings two major benefits. First, the main computational overhead for Anderson acceleration in each iteration is to update the normal equation system for the problem~\eqref{eq:AALS}, which involves inner products of time complexity $O(mn)$ where $n$ is the dimension of variables that undergo fixed-point iteration~\cite{Peng2018}. Since $\mathbf{B}$ is invertible, $\mathbf{u}$ and $\mathbf{z}$ are of the same dimension; thus this new approach reduces the computational cost of inner products by half. Another benefit is a more simple criterion for the effectiveness of an accelerated iterate, based on the following property:
\begin{algorithm}[t]
	$r_{\mathrm{prev}} = +\infty$; {~~} $j=0$; {~~} reset = TRUE; {~~} $k = 0$\;
	\While{TRUE}{
		\algocomment{Update $\mathbf{x}$ with~\eqref{eq:Sepxzu_x} and compute residual with~\eqref{eq:xzuResidual}}
		$\mathbf{x}^{k+1} = (\mathbf{G}+\mu\mathbf{A}^T\mathbf{A})^{-1}(\mathbf{G}\tilde{\mathbf{x}}+\mu\mathbf{A}^T(\mathbf{B}\mathbf{z}^k+\mathbf{c}-\mathbf{u}^k))$\;
		$r = \|\mathbf{A} \mathbf{x}^{k+1} - \mathbf{B} \mathbf{z}^{k} - \mathbf{c}\|$\;
		\If(\tcp*[f]{Check residual}){reset == FALSE \textbf{AND} $r \geq r_{\mathrm{prev}}$}
		{
			$\mathbf{z}^{k} = \mathbf{z}_{\textrm{default}}$\ \tcp*[r]{Revert to un-accelerated $\mathbf{z}$}
			\algocomment{Re-compute $\mathbf{u}$ and $\mathbf{x}$ with~\eqref{eq:Sepxzu_u} and \eqref{eq:Sepxzu_x}}
			$\mathbf{u}^{k} = \mathbf{u}^{k-1} + \mathbf{A}\mathbf{x}^{k}-\mathbf{B}\mathbf{z}^{k}-\mathbf{c}$\;
			$\mathbf{x}^{k+1} =  (\mathbf{G}+\mu\mathbf{A}^T\mathbf{A})^{-1}(\mathbf{G}\tilde{\mathbf{x}}+\mu\mathbf{A}^T(\mathbf{B}\mathbf{z}^k+\mathbf{c}-\mathbf{u}^k))$\;
			
			\algospace
			\algocomment{Re-compute residual}
			r = $\|\mathbf{A} \mathbf{x}^{k+1} - \mathbf{B} \mathbf{z}^k - \mathbf{c}\|$; {~~}
			reset = TRUE;
		}
		\algospace
		\algocomment{Check termination criteria}
		\If{$k + 1 \geq I_{\max{}}$ \textbf{OR} $r < \varepsilon$}
		{
			\Return $\mathbf{x}^{k+1}$\;
		}
		
		\algospace
		\algocomment{Compute un-accelerated $\mathbf{z}$ value with~\eqref{eq:Sepxzu_z}}
		$\mathbf{z}_{\textrm{default}} = \argmin_{\mathbf{z}}\left(g(\mathbf{z})+\frac{\mu}{2}\|
		\mathbf{A}\mathbf{x}^{k+1}-\mathbf{B}\mathbf{z}-\mathbf{c}+\mathbf{u}^k\|^2\right)$
		
		\algospace
		\algocomment{Compute accelerated $\mathbf{z}$ value}
		$j = j + 1$; {~~} $\overline{m} = \min(m, j)$\;
		$\mathbf{g}_j= \mathbf{z}_{\textrm{default}}$; \quad $\mathbf{f}_j= \mathbf{z}_{\textrm{default}} - \mathbf{z}^k$\;
		$\mathbf{z}^{k+1} = \textrm{AA}\left([\mathbf{g}_j, \ldots, \mathbf{g}_{j-\overline{m}}], [\mathbf{f}_j, \ldots, \mathbf{f}_{j-\overline{m}}]\right)$\;
		
		\algospace
		\algocomment{Recover compatible $\mathbf{u}$ value with~\eqref{eq:ufromz}}
		$\mathbf{u}^{k+1} = \frac{1}{\mu}\mathbf{B}^{-T} \nabla g(\mathbf{z}^{k+1})$\;
		
		\algospace
		$k = k + 1$; {~~} $r_{\mathrm{prev}} = r$\;
	}
	\caption{Anderson acceleration for ADMM with \xzu{} iteration, on a problem~\eqref{eq:SeparableADMMProblem} that satisfies Assumptions~\ref{assump:InvertibleB}, \ref{assump:SPDG} and with a differentiable $g$.}
	\label{algo:xzuSeparableAA}
\end{algorithm}
\begin{prop}
Suppose the problem~\eqref{eq:SeparableADMMProblem} satisfies Assumptions~\ref{assump:InvertibleB} and \ref{assump:SPDG}, and the function $g(\mathbf{z})$ is differentiable. Let $\mathbf{z}^{k+1} = G_{\textrm{xzu}}(\mathbf{z}^{k})$ denote the fixed-point iteration of $\mathbf{z}$ induced by the \xzu{} iteration~\eqref{eq:Sepxzu_x}-\eqref{eq:Sepxzu_u}. Then $\mathbf{z}^{k+1}$ is a fixed point of mapping $G_{\textrm{xzu}}(\cdot)$ if and only if
\begin{equation}
	\mathbf{A} \mathbf{x}^{k+2} - \mathbf{B} \mathbf{z}^{k+1} - \mathbf{c} = \mathbf{0}.
	\label{eq:xzuResidual}
\end{equation}
\label{Prop:xzuResidual}
\end{prop}
A proof is given in Appendix~\ref{sec:proof:xzuResidual}. Note that the left-hand side of~\eqref{eq:xzuResidual} has a similar form as the primal residual, but involves the value of $\mathbf{x}$ in the next iteration.
Accordingly, we evaluate the effectiveness of an accelerated iterate $\mathbf{z}_{\textrm{AA}}$ and its corresponding dual variable $\mathbf{u}_{\textrm{AA}}$ by first computing a new value $\mathbf{x}_{\star}$ according to the $\mathbf{x}$-update step~\eqref{eq:Sepxzu_x}, then evaluating a residual
$\hat{\mathbf{r}}_{\textrm{\xzu{}}}  = \mathbf{A} \mathbf{x}_{\star} - \mathbf{B} \mathbf{z}_{\textrm{AA}} - \mathbf{c}.$
We only accept $\mathbf{z}_{\textrm{AA}}$ if it leads to a smaller norm of this residual compared to the previous iteration; otherwise, we revert to the last un-accelerated iterate. If  $\mathbf{z}_{\textrm{AA}}$ is accepted, then $\mathbf{x}_{\star}$ can be reused in the next step. The main benefit here is that we do not need to run an additional ADMM iteration to verify the effectiveness of $\mathbf{z}_{\textrm{AA}}$, which incurs less computational overhead when the accelerated iterate is rejected. This acceleration strategy is summarized in Algorithm~\ref{algo:xzuSeparableAA}. Fig.~\ref{fig:ThreeBars} shows an example where accelerating $\mathbf{z}$ alone leads to a faster decrease of combined residual than accelerating  $\mathbf{z}, \mathbf{u}$ together.

\subsubsection{\zxu{} iteration}
Similar to the previous discussion, when the problem satisfies certain conditions, the \zxu{} scheme is a fixed-point iteration of only one variable. In particular, we have:
\begin{prop}
	If the optimization problem~\eqref{eq:SeparableADMMProblem} satisfies Assumptions~\ref{assump:InvertibleB} and \ref{assump:SPDG}, then the \zxu{} iteration~\eqref{eq:Sepzxu_z}-\eqref{eq:Sepzxu_u} satisfies
	\begin{equation}
	\mathbf{x}^{k+1} = \tilde{x} - \mu \mathbf{G}^{-1} \mathbf{A}^T \mathbf{u}^{k+1}.
	\label{eq:xfromu}
	\end{equation}
	\label{prop:xfromu}
\end{prop}
A proof is given in Appendix~\ref{sec:proof:xfromu}. This property implies that $\mathbf{x}^{k+1}$ can be recovered from $\mathbf{u}^{k+1}$; thus we can treat the \zxu{} scheme~\eqref{eq:Sepzxu_z}-\eqref{eq:Sepzxu_u} as a fixed-point iteration of $\mathbf{u}$ instead of $(\mathbf{x}, \mathbf{u})$. In theory, we can apply Anderson acceleration to the history of $\mathbf{u}$ to obtain an accelerated iterate $\mathbf{u}_{\textrm{AA}}$, and recover the corresponding $\mathbf{x}_{\textrm{AA}}$ from Eq.~\eqref{eq:xfromu}. However, this would require solving a linear system with matrix $\mathbf{G}$, and can be computationally expensive. Instead, we note that $\mathbf{x}^{k+1}$ and $\mathbf{u}^{k+1}$ are related to by an affine map, and this relation is satisfied by any previous pair of $\mathbf{x}$ and $\mathbf{u}$ values. Then since $\mathbf{u}_{\textrm{AA}}$ is an affine combination of previous $\mathbf{u}$ values, we can apply the same affine combination coefficients to the corresponding previous $\mathbf{x}$ values to obtain $\mathbf{x}_{\textrm{AA}}$, which is guaranteed to satisfy Eq.~\eqref{eq:xfromu} with $\mathbf{u}_{\textrm{AA}}$. As the affine combination coefficients are computed from $\mathbf{u}$ only, this still reduces the computational cost compared to applying Anderson acceleration to $(\mathbf{x}, \mathbf{u})$.
Similar to the \xzu{} case, we can verify the convergence of the \zxu{} iteration by comparing $\mathbf{x}$ in the current iteration with the value of $\mathbf{z}$ in the next iteration:
\begin{prop}
	Suppose the problem~\eqref{eq:SeparableADMMProblem} satisfies Assumptions~\ref{assump:InvertibleB} and \ref{assump:SPDG}. Let $\mathbf{u}^{k+1} = G_{\textrm{zxu}}(\mathbf{u}^{k})$ denote the fixed-point iteration of $\mathbf{u}$ induced by the \xzu{} iteration~\eqref{eq:Sepzxu_z}-\eqref{eq:Sepzxu_u}. Then $\mathbf{u}^{k+1}$ is a fixed point of mapping $G_{\textrm{zxu}}(\cdot)$ if and only if
	\begin{equation}
	\mathbf{A} \mathbf{x}^{k+1} - \mathbf{B} \mathbf{z}^{k+2} - \mathbf{c} = \mathbf{0}.
	\label{eq:zxuResidual}
	\end{equation}
	\label{Prop:zxuResidual}
\end{prop}
Accordingly, we evaluate the effectiveness of $\mathbf{u}_{\textrm{AA}}$ and $\mathbf{x}_{\textrm{AA}}$ by computing from them a $\mathbf{z}_{\star}$ using Eq.~\eqref{eq:Sepzxu_z}, and evaluating the residual
$\hat{\mathbf{r}}_{\textrm{\zxu{}}} = \mathbf{A} \mathbf{x}_{\textrm{AA}} - \mathbf{B} \mathbf{z}_{\star} - \mathbf{c}$.
We accept $\mathbf{u}_{\textrm{AA}}$ if the norm of this residual is smaller than the previous iteration, and revert to the last un-accelerated iterate otherwise. If $\mathbf{u}_{\textrm{AA}}$ is accepted, then $\mathbf{z}_{\star}$ is reused in the next step. Algorithm~\ref{algo:zxuSeparableAA} summarizes our approach.

\begin{remark}
We have shown that ADMM can be reduced to a fixed-point iteration of the second primal variable or the dual variable based on Assumptions~\ref{assump:InvertibleB} and \ref{assump:SPDG}, and (for the \xzu{} iteration) the smoothness of $g$. In fact, these assumptions can be further relaxed. We refer the reader to Appendix~\ref{sec:FurtherDiscussion} for more details. Fig.~\ref{fig:AAQP} is an example of using such relaxed conditions to reduce the fixed-point iteration to one variable.
\end{remark}

\subsection{Convergence analysis}
\label{sec:ConvergenceAnalysis}

\begin{algorithm}[t]
	$r_{\mathrm{prev}} = +\infty$; {~~} $j=0$; {~~} reset = TRUE; {~~} $k = 0$\;
	\While{TRUE}{
		\algocomment{Update $\mathbf{z}$ with~\eqref{eq:Sepzxu_z} and compute residual with~\eqref{eq:zxuResidual}}
		$\mathbf{z}^{k+1}=\argmin_{\mathbf{z}}\left(g(\mathbf{z})+\frac{\mu}{2}\| \mathbf{A}\mathbf{x}^{k}-\mathbf{B}\mathbf{z}-\mathbf{c}+\mathbf{u}^k\|^2\right)$\;
		$r = \|\mathbf{A} \mathbf{x}^{k} - \mathbf{B} \mathbf{z}^{k+1} - \mathbf{c}\|$\;
		\algospace
		\algocomment{Check whether the residual increases}
		\If{reset == FALSE \textbf{AND} $r \geq r_{\mathrm{prev}}$}
		{
			\algocomment{Revert to un-accelerated $\mathbf{x}, \mathbf{u}$}
			$\mathbf{x}^{k} = \mathbf{x}_{\textrm{default}}$; {~~} $\mathbf{u}^{k} = \mathbf{u}_{\textrm{default}}$\;
			$\mathbf{z}^{k+1}=\argmin_{\mathbf{z}}\left(g(\mathbf{z})+\frac{\mu}{2}\| \mathbf{A}\mathbf{x}^{k}-\mathbf{B}\mathbf{z}-\mathbf{c}+\mathbf{u}^k\|^2\right)$\;
			r = $\|\mathbf{A} \mathbf{x}^{k} - \mathbf{B} \mathbf{z}^{k+1} - \mathbf{c}\|$\;
			reset = TRUE;
		}
		\If{$k + 1 \geq I_{\max{}}$ \textbf{OR} $r < \varepsilon$}
		{
			\Return $\mathbf{x}^{k}$\;
		}
		
		\algospace
		\algocomment{Compute un-accelerated $\mathbf{x}$ and $\mathbf{u}$}
		$\mathbf{x}_{\textrm{default}} =(\mathbf{G}+\mu\mathbf{A}^T\mathbf{A})^{-1}(\mathbf{G}\tilde{\mathbf{x}}+\mu\mathbf{A}^T(\mathbf{B}\mathbf{z}^{k+1}+\mathbf{c}-\mathbf{u}^k))$\;
		$\mathbf{u}_{\textrm{default}}=\mathbf{u}^k+\mathbf{A}\mathbf{x}_{\textrm{default}} - \mathbf{B}\mathbf{z}^{k+1}-\mathbf{c}$\;
		
		\algospace
		\algocomment{Use history of $\mathbf{u}$ to compute affine coeffients}
		$j = j + 1$; {~~} $\overline{m} = \min(m, j)$\;
		$\mathbf{g}_j^{\mathbf{x}}= \mathbf{x}_{\textrm{default}}$; {~~} $\mathbf{g}_j^{\mathbf{u}}= \mathbf{u}_{\textrm{default}}$; {~~} $\mathbf{f}_j^{\mathbf{u}} = \mathbf{u}_{\textrm{default}} - \mathbf{u}^k$\;
		$(\theta_{1}^{\ast}, \ldots, \theta_{\overline{m}}^{\ast}) = \argmin\limits_{(\theta_1, \ldots, \theta_{\overline{m}})} \left\| \mathbf{f}_j^{\mathbf{u}} - \sum_{i=1}^{\overline{m}} \theta_i (\mathbf{f}_{j-i+1}^{\mathbf{u}} - \mathbf{f}_{j-i}^{\mathbf{u}}) \right\|^2$\;
		
		\algospace
		\algocomment{Compute accelerated $\mathbf{x}$ and $\mathbf{u}$ with the coefficients}
		$\mathbf{x}^{k+1} = \mathbf{g}_j^{\mathbf{x}} - \sum_{i=1}^{\overline{m}} \theta_i^\ast \left( \mathbf{g}_{j-i+1}^{\mathbf{x}} - \mathbf{g}_{j-i}^{\mathbf{x}}\right)$\;
		$\mathbf{u}^{k+1} = \mathbf{g}_j^{\mathbf{u}} - \sum_{i=1}^{\overline{m}} \theta_i^\ast \left( \mathbf{g}_{j-i+1}^{\mathbf{u}} - \mathbf{g}_{j-i}^{\mathbf{u}}\right)$\;
		
		\algospace
		$k = k + 1$; {~~} $r_{\mathrm{prev}} = r$\;
	}
	\caption{Anderson acceleration for ADMM with \zxu{} iteration, on a problem~\eqref{eq:SeparableADMMProblem} that satisfies Assumptions~\ref{assump:InvertibleB} and \ref{assump:SPDG}.}
	\label{algo:zxuSeparableAA}
\end{algorithm}

For Anderson acceleration to be applicable, an ADMM solver must be convergent already. However, many ADMM solvers used in computer graphics lack a convergence guarantee due to the non-convexity of the problems they solve.
Although ADMM works well for many non-convex problems in practice, convergence proofs on such problems rely on strong assumptions that are often not satisfied by graphics problems~\cite{li2015global,hong2016convergence,magnusson2016convergence,wang2019global}.
In this subsection, we discuss the convergence of ADMM on the problem~\eqref{eq:SeparableADMMProblem} where the term $g$ in the target function can be non-convex. We first provide a set of conditions for linear convergence of ADMM on such problems, and then give more general convergence proofs using weaker assumptions than existing results in the literature. As the problem structure~\eqref{eq:SeparableADMMProblem} is common in computer graphics, our new results can potentially expand the applicability of ADMM for graphics problems.

To ease the presentation, we first introduce some notation. To account for the fact that the target function may be unbounded from above due to an indicator function, we suppose all the functions are mappings to $\mathbb{R}\bigcup\{+\infty\}$. Following~\cite{rockafellar2015convex}, for a function $F$ we define its effective domain and level set as:
\begin{align*}
\dom{F} &\coloneqq \{\mathbf{x} \mid f(\mathbf{x})<+\infty\},\\
\lev{F}{\alpha} & \coloneqq \{\mathbf{x} \mid f(\mathbf{x})\leq\alpha\}, ~\text{given}~\alpha \in \mathbb{R}.
\end{align*}
A function $F$ is \emph{level-bounded} if $\lev{F}{\alpha}$ is a bounded set for any $\alpha \in \mathbb{R}$.
Given a set $\mathcal{S}$, let $\mathcal{I}_\mathcal{S}$ and $\mathcal{B}_\mathcal{S}$ denote the interior and the boundary of $\mathcal{S}$, respectively.
A function $F$ is continuous on $\mathbb{R}^n$ if: (i) it is continuous within $\mathcal{I}_{\dom{F}}$ in the conventional sense; and (ii) $\forall \mathbf{x}_k \rightarrow \mathbf{x} \in \mathcal{B}_{\dom{F}}$, we have $F(\mathbf{x}_k)\rightarrow F(\mathbf{x})= +\infty$.
We say a function is \emph{Lipschitz differentiable} if it is differentiable and its gradient is Lipschitz continuous.
Unless specified otherwise, $\mathbf{I}$ denotes the identity matrix and the identity map.
The symbol $\conv{\mathcal{S}}$ denotes the convex hull of a set $\mathcal{S}$,
and $\partial F$ denotes the set of all sub-differentials for a function $F$ (see~\cite[Definition 8.3(b)]{rockafellar2009variational}). For matrix $\mathbf{Q}$, we use $\rho(\mathbf{Q})$ to represent its spectral radius.
We will discuss conditions for the ADMM iterates $\{(\mathbf{x}^k, \mathbf{z}^k, \mathbf{u}^k)\}$ to converge to a stationary point $(\mathbf{x}^{\ast}, \mathbf{z}^{\ast}, \mathbf{u}^{\ast})$ of the augmented Lagrangian for problem~\eqref{eq:SeparableADMMProblem}, which is defined by the conditions~\cite{boyd2011distributed}:
\begin{equation}
\mathbf{A}\mathbf{x}^{\ast}-\mathbf{B}\mathbf{z}^{\ast}=\mathbf{c}, \quad \mathbf{0} \in \partial f(\mathbf{x}^{\ast})+\mathbf{A}^{T}\mathbf{u}^{\ast},
\quad \mathbf{0} \in \partial g(\mathbf{z}^{\ast})-\mathbf{B}^{T}\mathbf{u}^{\ast}.
\label{eq:StationaryPoint}
\end{equation}

\begin{figure*}[t!]
	\centering
	\includegraphics[width=\textwidth]{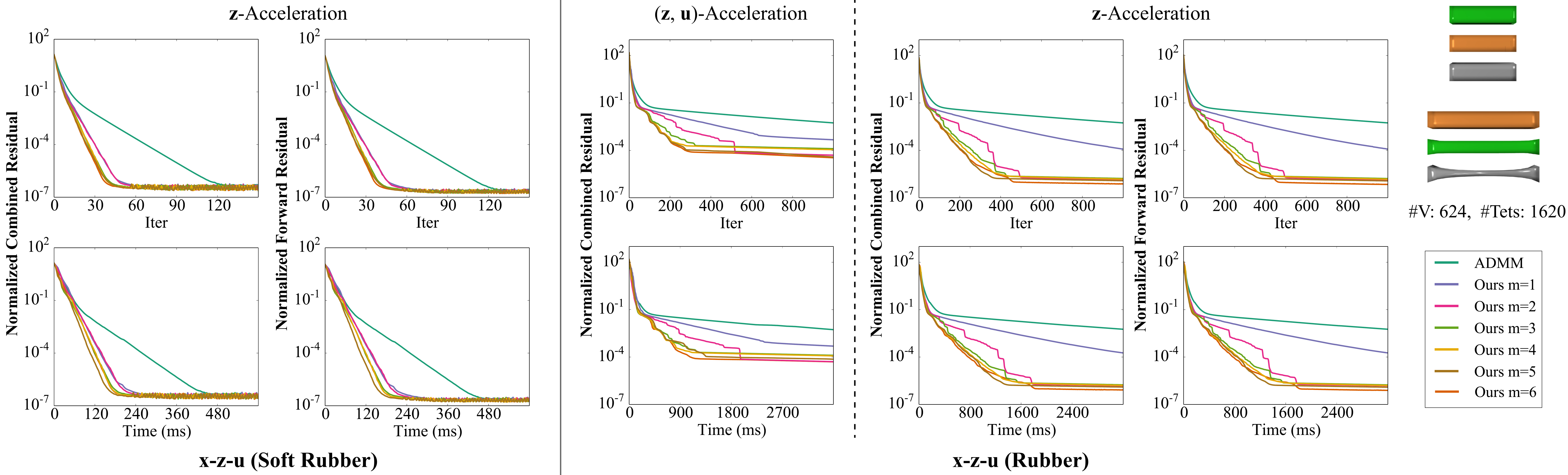}
	\caption{Comparison between the ADMM solver in~\cite{Overby2017} and our method according to Algorithm~\ref{algo:xzuSeparableAA}, for computing the same frame of a simulation sequence with three elastic bars. Two material stiffness settings (``soft rubber'' and ``rubber'') are used for testing. In both case, our method leads to faster decrease of residuals and accelerates the convergence. For the case with rubber, we also test Algorithm~\ref{algo:xzuGeneralAA} that applies Anderson acceleration to $(\mathbf{z}, \mathbf{u})$, which also speeds up the convergence but is less effective than accelerating $\mathbf{z}$ alone.}
	\label{fig:ThreeBars}
\end{figure*} 

\paragraph{Linear convergence}
Our discussion involves the following definitions related to the problem~\eqref{eq:SeparableADMMProblem} and Assumptions~\ref{assump:InvertibleB} and \ref{assump:SPDG}:
\begin{equation}
\hatg(\mathbf{z})  \coloneqq g(\mathbf{B}^{-1}\mathbf{z}), \quad
\mathbf{K}  \coloneqq \mathbf{A}\mathbf{G}^{-1}\mathbf{A}^{T}. \label{eq:DefinitionHatGAndK}
\end{equation}
We denote by $\rho(\mathbf{K})$ the spectral radius of matrix $\mathbf{K}$. To prove linear convergence of ADMM for the problem~\eqref{eq:SeparableADMMProblem} regardless of its initial value, we need the following assumption:
\begin{assumption}
	$\nabla\hat{g}$ is Lipschitz differentiable on $\mathbb{R}^n$ with a Lipschitz constant $L$, i.e.
	$\| \nabla\hat{g}(\mathbf{z}_1)-\nabla\hat{g}(\mathbf{z}_2)\|\leq L\| \mathbf{z}_1-\mathbf{z}_2\|$ $\forall \mathbf{z}_1, \mathbf{z}_1 \in \mathbb{R}^n$.
	\label{assump:GlobalConvergenceGhatAndK}
\end{assumption}
Then we have:
\begin{theorem}
	If Assumptions~\ref{assump:InvertibleB}-\ref{assump:GlobalConvergenceGhatAndK} are satisfied and $\rho(\mathbf{K})<\frac{1}{2L}$, then for a sufficiently large $\mu$ the \xzu{} iteration~\eqref{eq:Sepxzu_x}-\eqref{eq:Sepxzu_u} converges to a stationary point defined in Eq.~\eqref{eq:StationaryPoint}. Moreover,
	\[
	\| \mathbf{B}\mathbf{z}^{n+1}-\mathbf{B}\mathbf{z}^{n}\|\leq\gamma_1\| \mathbf{B}\mathbf{z}^{n}-\mathbf{B}\mathbf{z}^{n-1}\|,
	\]
	where $\gamma_1=\displaystyle\frac{\frac{\mu\rho(\mathbf{K})}{1+\mu\rho(\mathbf{K})}+\frac{L}{\mu}}{1-\frac{L}{\mu}} < 1$ is a constant.
	\label{thm:xzuConvergenceGlobalLipschitz}
\end{theorem}
\begin{theorem}
	If Assumptions~\ref{assump:InvertibleB}-\ref{assump:GlobalConvergenceGhatAndK} are satisfied, $\rho(\mathbf{K})<\frac{1}{L}$ and $\mathbf{I}-\mu\mathbf{K}$ is invertible, then for a sufficiently large $\mu$ the \zxu{} iteration~\eqref{eq:Sepzxu_z}-\eqref{eq:Sepzxu_u} converges to a stationary point defined in Eq.~\eqref{eq:StationaryPoint}. Moreover,
	\[
	\| \mathbf{v}^{k+1}-\mathbf{v}^{k}\|\leq \gamma_2\| \mathbf{v}^{k}-\mathbf{v}^{k-1}\|,
	\]
	where
	$\mathbf{v}^k=(\mathbf{I}-\mu\mathbf{K})\mathbf{u}^k$ and $\gamma_2=\frac{\mu\rho(\mathbf{K})}{1+\mu\rho(\mathbf{K})}+\frac{L}{\mu-L}<1$.
	\label{thm:zxuConvergenceGlobalLipschitz}
\end{theorem}
Proofs are provided in Appendix~\ref{sec:ConvergenceTheoremProof}. The theorems above rely on Assumption~\ref{assump:GlobalConvergenceGhatAndK} which requires the function $g$ to be globally Lipschitz differentiable. This may not be the case for some graphics problems. For example, the StVK energy used for simulation of hyperelastic materials is a quartic function of the deformation gradient, and is locally Lipschitz differentiable but not globally so. For such problems, we can still prove linear convergence with additional conditions on its initial value and penalty parameter. In the following, we use $T(\mathbf{x}, \mathbf{z})$ to denote the target function~\eqref{eq:SeparableADMMProblem}. We make the following relaxed assumption about $\hatg$:
\begin{assumption}
	\begin{itemize}
		\item[(1)] $\hatg$ is level-bounded, and $\hatg(\mathbf{z})\geq0~\forall \mathbf{z} \in \mathbb{R}^{n}$.
		\item[(2)] $\hatg$ is continuous on $\mathbb{R}^{n}$ and differentiable in $\mathcal{I}_{\dom{\hatg}}$.
		\item[(3)] $\hatg$ is Lipschitz differentiable on any compact convex set in $\dom{\hatg}$.
	\end{itemize}
	\label{assump:g}
\end{assumption}
For linear convergence of the \xzu{} iteration, we assume the following for the initial value $(\mathbf{x}^0, \mathbf{z}^0, \mathbf{u}^0)$ and penalty parameter $\mu$:
\begin{assumption}
	\begin{itemize}
		\item[(1)] $\mathbf{z}^0=\mathbf{B}^{-1}(\mathbf{A}\mathbf{x}^0-\mathbf{c}),\mathbf{u}^0=\frac{1}{\mu}\mathbf{B}^{-T}\nabla g(\mathbf{z}^{0})$. $\mathbf{z}^0\in\text{dom}(g)$.
		\item[(2)] $\mu$ is large enough such that $c_1\leq 1$, where
		\[
		c_1=\sup\limits_{\mathbf{z}\in \lev{{g}}{T^0+1}} \frac{1}{2\mu} \| \mathbf{B}^{-T}\nabla g(\mathbf{z})\|^2
		\]
		and $T^0 = T(\mathbf{x}^0,\mathbf{z}^0)$.
		Moreover, suppose $\conv{\lev{\hatg}{T^0+c_1}}\subset\dom{\hatg}$ and let $L_c$ be a Lipschitz constant of $\nabla\hatg$ over this set.
	\end{itemize}
	\label{assump:initialValue}
\end{assumption}
\begin{theorem}
	Suppose Assumptions~\ref{assump:InvertibleB}, \ref{assump:SPDG}, \ref{assump:g}, \ref{assump:initialValue} are satisfied,
	$\frac{\mu}{2}-\frac{L_c^2}{\mu}>\frac{L_c}{2}$, and $\rho(\mathbf{K})<\frac{1}{2L_c}$.
	Then for a sufficiently large $\mu$ the \xzu{} iteration~\eqref{eq:Sepxzu_x}-\eqref{eq:Sepxzu_u} converges to a stationary point defined in Eq.~\eqref{eq:StationaryPoint}, and
	$
	\| \mathbf{B}\mathbf{z}^{n+1}-\mathbf{B}\mathbf{z}^{n}\|\leq\gamma_3\| \mathbf{B}\mathbf{z}^{n}-\mathbf{B}\mathbf{z}^{n-1}\|
	$
	with $\gamma_3=\displaystyle\frac{\frac{\mu\rho(\mathbf{K})}{1+\mu\rho(\mathbf{K})}+\frac{L_c}{\mu}}{1-\frac{L_c}{\mu}}<1$.
	\label{thm:xzuConvergenceLocalLipschitz}
\end{theorem}
For the \zxu{} iteration, we need a different assumption that relies on the following proposition which is proved in Appendix~\ref{Proof_Thm3.6}:
\begin{prop}
	\label{prop:i-1}
	Let $R(\mathbf{A})$ be the range of matrix $\mathbf{A}$. Then for any $\mathbf{x} \in R(\mathbf{A})$, $\|\mathbf{K}\mathbf{x}\|\geq \eta\|\mathbf{x}\|$ where $\eta>0$ is a constant depending on $\mathbf{K}$.
\end{prop}
\begin{assumption}
	\label{initial_zxu}
	The initial value $(\mathbf{x}^0, \mathbf{z}^0, \mathbf{u}^0)$ satisfies:\begin{itemize}
		\item[(1)] $\mathbf{z}^0=\mathbf{B}^{-1}(\mathbf{A}\mathbf{x}^0-\mathbf{c}),\mathbf{x}^0=\mathbf{\tilde{x}},\mathbf{u}^0=0.$ $\mathbf{z}^0\in\text{dom}(g)$.
		\item[(2)]$\mu$ is large enough such that $c_2+c_3\leq 1$, where
		\begin{align*}
		c_2& = \sup\limits_{(\mathbf{x},\mathbf{z})\in \lev{T}{T^0+1}} \frac{2}{\eta^2\mu}\|\mathbf{A}\mathbf{x}-\mathbf{A}\mathbf{\tilde{x}}\|^2+ (\frac{2\rho(\mathbf{K})^2}{\mu\eta^2} +\frac{1}{\mu}) \| \mathbf{B}^{-T}\nabla g(\mathbf{z})\|^2,\\
		c_3& =(\frac{8\rho(\mathbf{K})^2}{\mu\eta^2}+\frac{4}{\mu})\|\mathbf{B}^{-T}\nabla g(\mathbf{z}^{0})\|^2,
		\end{align*}
		where $\eta$ is defined in Proposition~\ref{prop:i-1}.
		Moreover, let $L_d$ be a Lipschitz constant of $\nabla\hatg$ over $\conv{\lev{\hatg}{T^0+c_2+c_3}}$, and suppose $\conv{\lev{\hatg}{T^0+c_2+c_3}}\subset\dom{\hatg}$.
	\end{itemize}
\end{assumption}
\begin{theorem}
	Suppose Assumptions~\ref{assump:InvertibleB}, \ref{assump:SPDG}, ~\ref{assump:g}, \ref{initial_zxu} are satisfied, $\rho(\mathbf{K})<\frac{1}{L_d}$, and $\mathbf{I}-\mu\mathbf{K}$ is invertible. Then for a sufficiently large $\mu$ the \zxu{} iteration~\eqref{eq:Sepzxu_z}-\eqref{eq:Sepzxu_u} converges to a stationary point defined in Eq.~\eqref{eq:StationaryPoint}, and
	$
	\| \mathbf{v}^{k+1}-\mathbf{v}^{k}\|\leq \gamma_4\| \mathbf{v}^{k}-\mathbf{v}^{k-1}\|,
	$
	with $\mathbf{v}^k=(\mathbf{I}-\mu\mathbf{K})\mathbf{u}^k$ and $\gamma_4=\frac{\mu\rho(\mathbf{K})}{1+\mu\rho(\mathbf{K})}+\frac{L_d}{\mu-L_d}<1$.
	\label{thm:zxuConvergenceLocalLipschitz}
\end{theorem}
The proofs for these two theorems are given in Appendix~\ref{sec:ConvergenceTheoremProof}.

\begin{remark}
Unlike existing linear convergence proofs such as~\cite{Lin2015,Deng2016-Global,Giselsson2017}, we do not require both $f$ and $g$ to be convex.
This makes our proofs applicable to some graphics problems with a non-convex $g$, such as the elastic body simulation problem in~\cite{Overby2017} where $g$ is an elastic potential energy. In the supplementary material we provide numerical verification of  linear convergence on such a problem.
\end{remark}

\paragraph{General convergence under weak assumptions}
If a linear convergence rate is not needed, the assumptions above can be further relaxed to prove the convergence of ADMM on problem~\eqref{eq:SeparableADMMProblem}: instead of the relation between the matrix $\mathbf{K}$ and the Lipschitz constant $L$, we require the following  weak assumption on function $g$.
\begin{assumption}
	$g$ is a semi-algebraic function.
	\label{assump:semialgebraic}
\end{assumption}
A function $F: \mathbb{R}^n \mapsto \mathbb{R}$ is called semi-algebraic if its graph $\{\bigl(\mathbf{y}, F(\mathbf{y})\bigr) \mid \mathbf{y} \in \mathbb{R}^n\} \subset \mathbb{R}^{n+1}$
is a union of finitely many sets each defined by a finite number of polynomial equalities and strict inequalities~\cite{li2015global}.
This assumption covers a large range of functions used in computer graphics. For example, polynomials (such as StVK energy) and rational functions (such as NURBS) are both semi-algebraic.
Then we have:
\begin{theorem}
Suppose Assumptions~\ref{assump:InvertibleB}, \ref{assump:SPDG}, \ref{assump:g}, \ref{assump:initialValue} and \ref{assump:semialgebraic} are satisfied, and $\frac{\mu}{2}-\frac{L_c^2}{\mu}>\frac{L_c}{2}$. Then for a sufficiently large $\mu$ the \xzu{} iteration~\eqref{eq:Sepxzu_x}-\eqref{eq:Sepxzu_u} converges to a stationary point defined in Eq.~\eqref{eq:StationaryPoint}, and $\sum_{n=1}^{+\infty}\| \mathbf{z}^{k+1}-\mathbf{z}^k\|<\infty$.
\label{thm:GeneralConvergenceXZU}
\end{theorem}
\begin{theorem}
	If Assumptions~\ref{assump:InvertibleB}, \ref{assump:SPDG}, \ref{assump:g}, \ref{initial_zxu} and \ref{assump:semialgebraic} are satisfied,
	then for a sufficiently large $\mu$ the \zxu{} iteration~\eqref{eq:Sepzxu_z}-\eqref{eq:Sepzxu_u} converges to a stationary point defined in Eq.~\eqref{eq:StationaryPoint}, and $\sum_{n=1}^{+\infty}\| \mathbf{A}\mathbf{x}^{k+1}-\mathbf{A}\mathbf{x}^k\|<\infty$.
	\label{thm:GeneralConvergenceZXU}
\end{theorem}
Proofs are given in Appendix~\ref{Proof_Thm3.5} and \ref{Proof_Thm3.6}.

\begin{remark}
Compared with existing convergence results for non-convex ADMM such as~\cite{li2015global,wang2019global}, for the \xzu{} iteration we do not require the function $g$ to be globally Lipschitz differentiable, and for the \zxu{} iteration we do not require the matrix $\mathbf{A}$ to be of full row rank.  
This makes our results applicable to a wider range of problems in computer graphics. In particular, for geometry optimization, the reduction matrix $\mathbf{A}$ that relates vertex positions to auxiliary variables may not be of full row rank, potentially due to the presence of  auxiliary variables that are derived in the same way from vertex positions but involved in different constraints.
Although for the \zxu{} iteration our assumptions on $g$ are more restrictive than those in \cite{li2015global,wang2019global}, such assumptions are still general enough to be satisfied by many graphics problems.
\end{remark}


\section{Results}

\begin{figure}[t!] 
	\centering
	\includegraphics[width=\columnwidth]{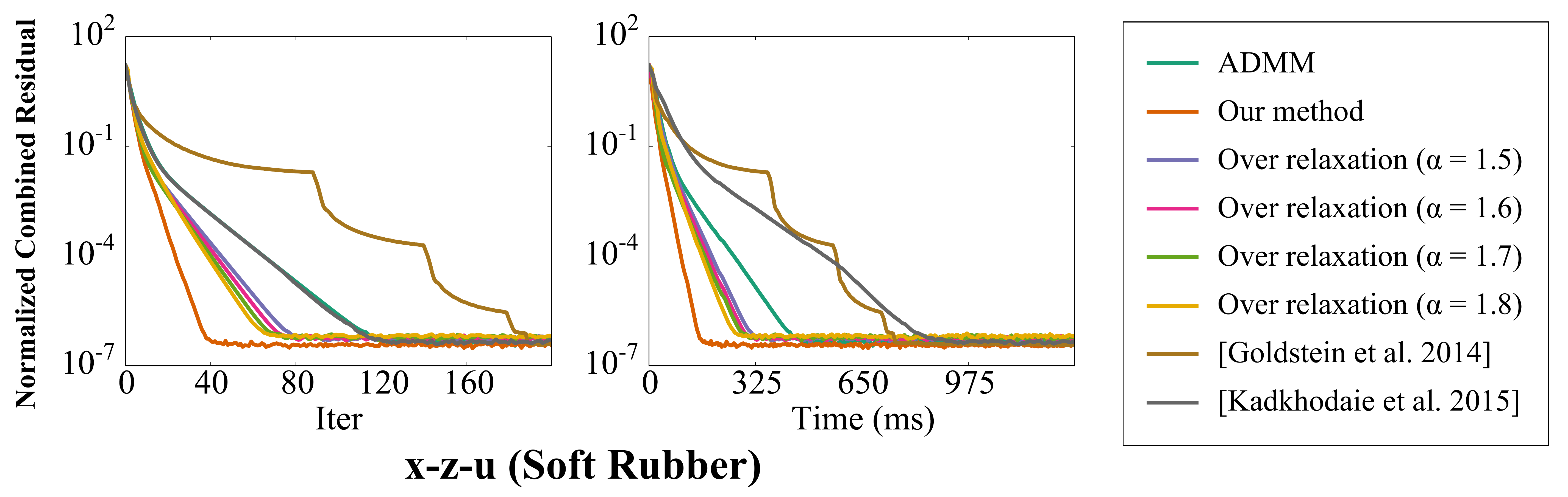}
	\caption{Comparison with other ADMM acceleration schemes on the same non-convex problem for rubber simulation as Fig.~\ref{fig:ThreeBars}. The methods from~\protect\cite{Goldstein2014} and \cite{Kadkhodaie2015}, which are designed for convex problems, are ineffective for this problem instance. Over-relaxation is effective in accelerating the convergence, but not as much as our approach.}
	\label{fig:AccelerationComparison}
\end{figure}

\begin{figure*}[t!] 
	\centering
	\includegraphics[width=\textwidth]{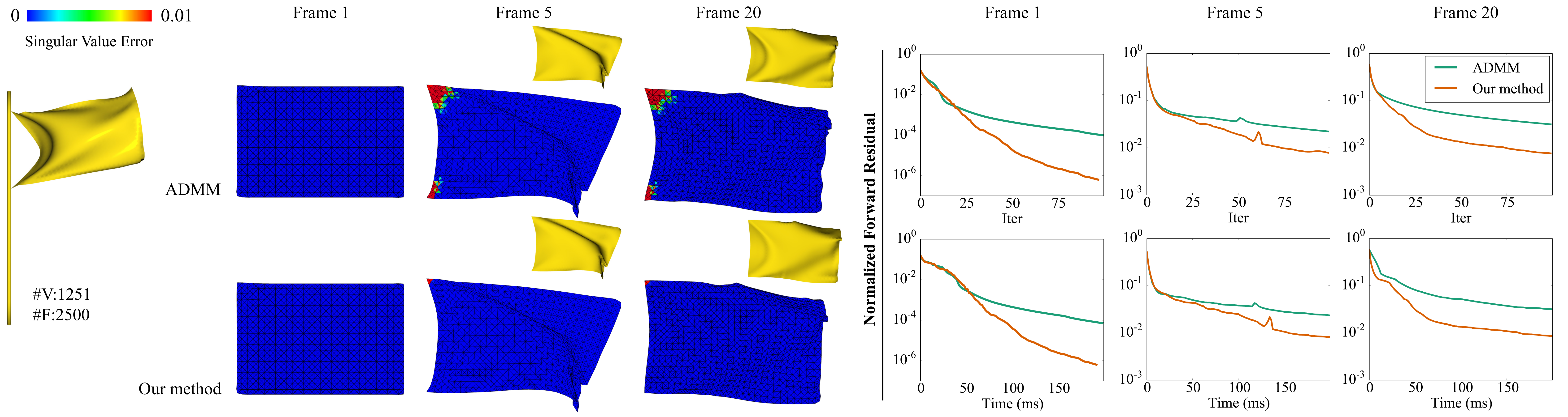}
	\caption{For the simulation of a discretized flag with hard constraints that limit its strain, our accelerated solver convergences faster than an ADMM solver. Here the color-coding shows the deviation from the deformation gradient singular values from their prescribed range. Using the same computational budget to compute a frame, the results with our solver satisfy the strain limiting constraints better.}
	\label{fig:WindyFlag}
\end{figure*}
We apply our methods to a variety of ADMM solvers in graphics. We implement Anderson acceleration following the source code released by the authors of~\cite{Peng2018}\footnote{\url{https://github.com/bldeng/AASolver}}. The source code of our implementation is available at {\url{https://github.com/bldeng/AA-ADMM}. All examples are run on a desktop PC with 32GB of RAM and a quad-core CPU at 3.6GHz. To account for the dimension and the numerical range of the variables, we use the following \emph{normalized combined residual} $R_c$ and \emph{normalized forward residual} $R_f$ to measure convergence:
\begin{equation}
	R_c = \sqrt{\frac{r_c}{N_\mathbf{z} \cdot a^2}}, \quad R_f = \sqrt{\frac{r_f}{N_\mathbf{z} \cdot a^2}},
	\label{eq:NormalizedResiduals}
\end{equation}
where $r_c$ is the combined residual computed from Eq.~\eqref{eq:xzucombinedresidual} or \eqref{eq:zxucombinedresidual}, $r_f$ is the squared norm of the residual of Eq.~\eqref{eq:xzuResidual} or \eqref{eq:zxuResidual}, $N_\mathbf{z}$ is the dimension of $\mathbf{z}$, and $a > 0$ is a scalar that indicates the typical variable range. In the following, for all physical simulation and geometry optimization problems, we set $a$ to the average edge length of the initial discretized model. For image processing problems, we simply set $a = 1$.
For the choice of parameter $m$, similar to~\cite{Peng2018} we observe that a large $m$ tends to improve the reduction of iteration count but increases the computational overhead per iteration (see Fig.~\ref{fig:ThreeBars}). We choose $m = 6$ by default.

\subsection{Physical simulation}
Overby et al.~\shortcite{Overby2017} performed physical simulation via the following optimization problem:
\begin{equation}
\min_{\mathbf{x},\mathbf{z}} ~~ f(\mathbf{x}) + g(\mathbf{z})\quad
\textrm{s.t.} ~~ \mathbf{W}(\mathbf{z} - \mathbf{D} \mathbf{x}) = 0,
\label{eq:OverbyProblem}
\end{equation}
Here $\mathbf{x}$ is the node positions of the discretized object, $f(\mathbf{x})$ is a momentum energy of the form~\eqref{eq:fx} with $\mathbf{G}$ being a scaled mass matrix, $\mathbf{D} \mathbf{x}$ collects the deformation gradient of each element, $g(\mathbf{z})$ is the elastic potential energy, and $\mathbf{W}$ is a diagonal scaling matrix that improves conditioning. This problem is solved in~\eqref{eq:OverbyProblem} using ADMM with the \xzu{} iteration. As it satisfies the assumptions in Proposition~\ref{prop:ufromz}, we apply Anderson acceleration to variable $\mathbf{z}$ according to Algorithm~\ref{algo:xzuSeparableAA}. Our method is implemented based on the source code released by the authors of~\cite{Overby2017}\footnote{\url{https://github.com/mattoverby/admm-elastic}}.  
Fig.~\ref{fig:ThreeBars} compares the simulation performance on three elastic bars subject to horizontal external forces on their two ends. We use the same material stiffness for all bars, and a different elastic potential energy model for each bar (corotational, StVK and neo-Hookean, respectively). 
We apply the original solver and our solver with different $m$ values to the same problem for a particular frame, and plot their normalized combined residuals and normalized forward residuals through the iterations. 
The methods are compared on two types of material stiffness (``soft rubber'' and ``rubber'' as defined in the code from~\cite{Overby2017}, with the latter one being stiffer).
Our method decreases both residuals much faster than the original ADMM solver for each stiffness settings. Moreover, these two residuals are highly correlated, which demonstrates the effectiveness of using the forward residual to verify accelerated iterates according to Proposition~\ref{Prop:xzuResidual}.  
On the rubber models, we also evaluate the performance of the general approach in Algorithm~\ref{algo:xzuGeneralAA} that accelerates $\mathbf{z}$ and $\mathbf{u}$ together.
We can see that accelerating $\mathbf{z}$ alone leads to a faster decrease of the combined residual. One possible reason is that Algorithm~\ref{algo:xzuSeparableAA} explicitly enforces the compatibility condition~\eqref{eq:ufromz}, so that the accelerated $\mathbf{z}$ and the recovered $\mathbf{u}$ always correspond to a valid intermediate value for a certain ADMM iterate sequence. This property does not hold for the general approach, since it only performs affine combination to obtain the accelerated $\mathbf{z}$ and $\mathbf{u}$, which is more akin to finding a new initial value for an ADMM sequence.
In Fig.~\ref{fig:AccelerationComparison}, we use the same soft rubber simulation problem to compare our method with existing ADMM acceleration techniques, including~\cite{Goldstein2014} and \cite{Kadkhodaie2015} which combined Nesterov's acceleration scheme with a restarting rule based on combined residual, as well as over-relaxation~\cite{eckstein1992douglas} with a relaxation parameter $\alpha \in [1.5, 1.8]$ as explained in~\cite[\S3.4.3]{boyd2011distributed}. As \cite{Goldstein2014,Kadkhodaie2015} rely on the convexity of the problem, they are ineffective for this non-convex problem and in fact increases the computational time. Although over-relaxation speeds up the decrease of residual, it achieves less acceleration than our method.

\begin{figure}[t!] 
	\centering
	\includegraphics[width=\columnwidth]{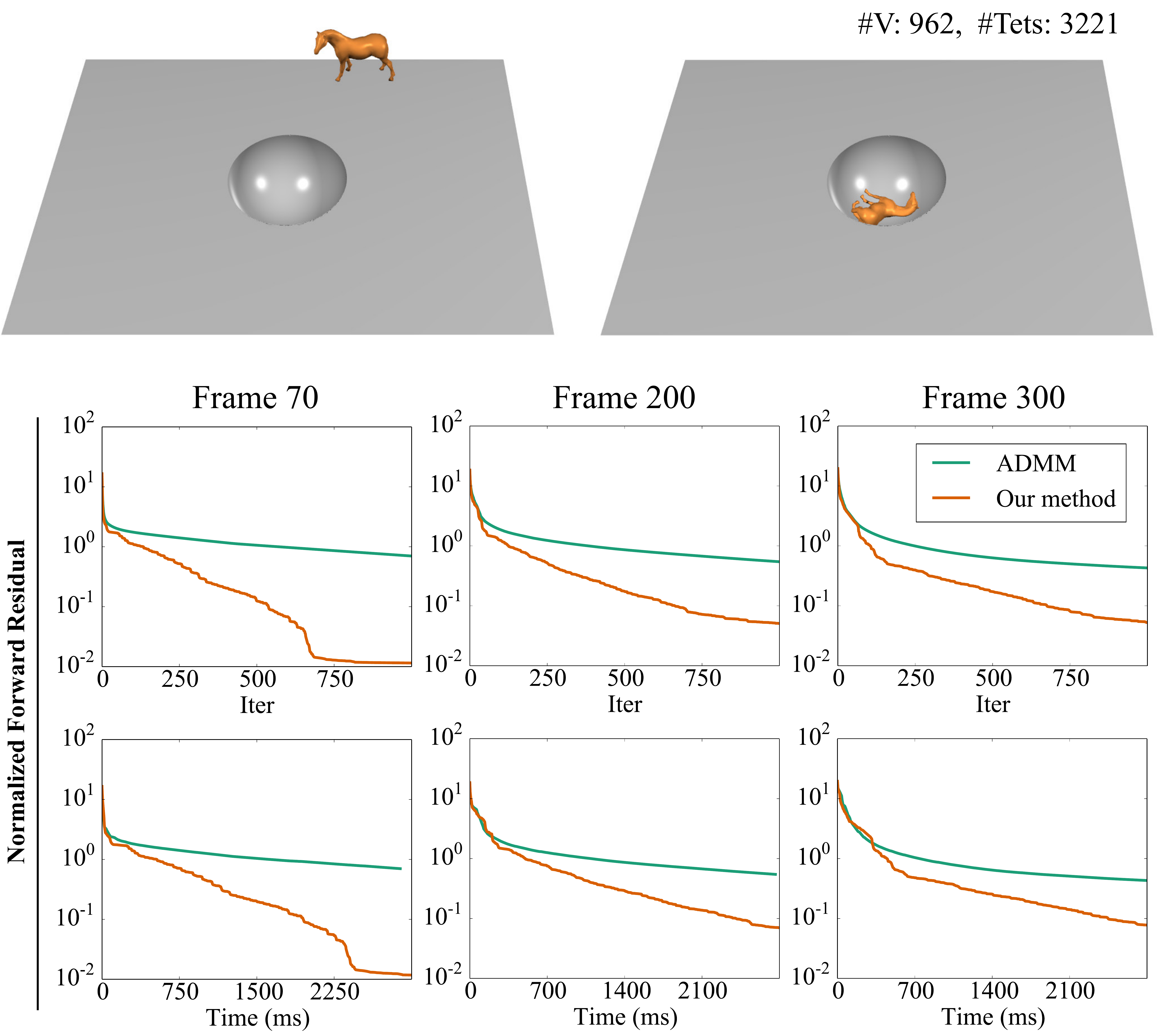}
	\caption{Simulation of a falling horse, with hard constraints on node positions that prevent them from penetrating the static objects. Our method achieves faster convergence than ADMM, as shown by the plots of normalized forward residual for three frames.}
	\label{fig:HorseHole}
\end{figure}

The solver in~\cite{Overby2017} allows enforcing hard constraints on node positions. Our method can be applied in such cases as well. In Fig.~\ref{fig:WindyFlag}, we simulate the movement of a triangulated flag under the wind force. Within $g(\mathbf{z})$, the elastic potential energy for each triangle is defined as the squared Euclidean distance from its deformation gradient to the closest rotation matrix. In addition, $g(\mathbf{z})$ contains an indicator function term for the strain limit of each triangle that requires all singular values of the deformation gradient to be within the range  $[0.95, 1.05]$.
Due to such hard constraints for $\mathbf{z}$, we cannot apply our method to the \xzu{} iteration (see Remark~\ref{remark:NoConstraint}). Instead, we adopt the \zxu{} iteration and apply Algorithm~\ref{algo:zxuSeparableAA} to accelerate $\mathbf{u}$ alone, because the iteration satisfies the assumptions in Proposition~\ref{prop:xfromu}. 
We compare the original ADMM solver with our accelerated solver with $m=6$.
To this end, we first apply our solver to compute a simulation sequence, and then re-solve the optimization problem using the original ADMM solver. 
Fig.~\ref{fig:WindyFlag} plots the normalized forward residual from each solver on three frames, where we see a faster decrease of the residual using our solver. 
In addition, for these three frames we take the results from both solvers within the same computational time, and use color-coding to illustrate the maximum deviation of its deformation gradient singular values from the prescribed range on each triangle.
We can see that our solver leads to better satisfaction of the strain limiting constraints.

Hard constraints are also used in~\cite{Overby2017} to handle collision between objects. In Figs.~\ref{fig:HorseHole} and \ref{fig:HorseObjects}, we apply our method in such scenarios. Here an elastic solid horse model falls under gravity and collides with static objects in the scene. In~\cite{Overby2017}, this is handled by enforcing hard constraints on $\mathbf{x}$ that prevent the nodes from penetrating the static objects. As this would reduce the $\mathbf{x}$-update step to a time-consuming quadratic programming problem, \cite{Overby2017} linearizes the constraints and solve the resulting linear system. 
However, with such modification it is no longer an ADMM algorithm.
Therefore, we apply the constraints on $\mathbf{z}$ instead and solve the problem using \zxu{} iteration, with acceleration according to Algorithm~\ref{algo:zxuSeparableAA}. Figs.~\ref{fig:HorseHole} and \ref{fig:HorseObjects} plot the normalized forward residual for computing certain frames in the simulation sequence, showing a faster decrease of the residual with our method.

\begin{figure}[t!] 
	\centering
	\includegraphics[width=0.9\columnwidth]{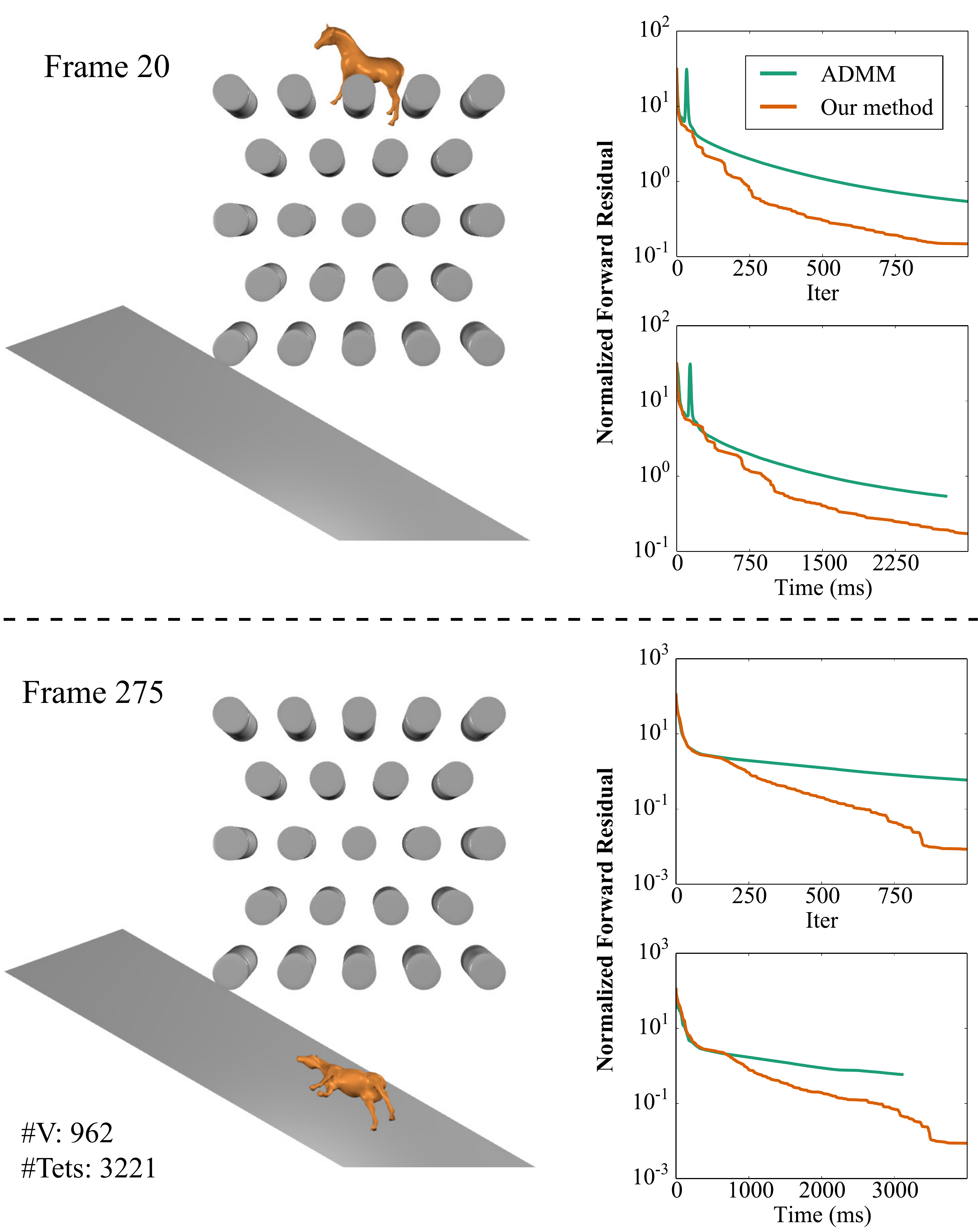}
	\caption{The same simulation of a falling horse as in Fig.~\ref{fig:HorseHole}, with more complex arrangement of static objects. Our acceleration approach remains effective.}
	\label{fig:HorseObjects}
\end{figure}

\begin{figure*}[t!] 
	\centering
	\includegraphics[width=0.95\textwidth]{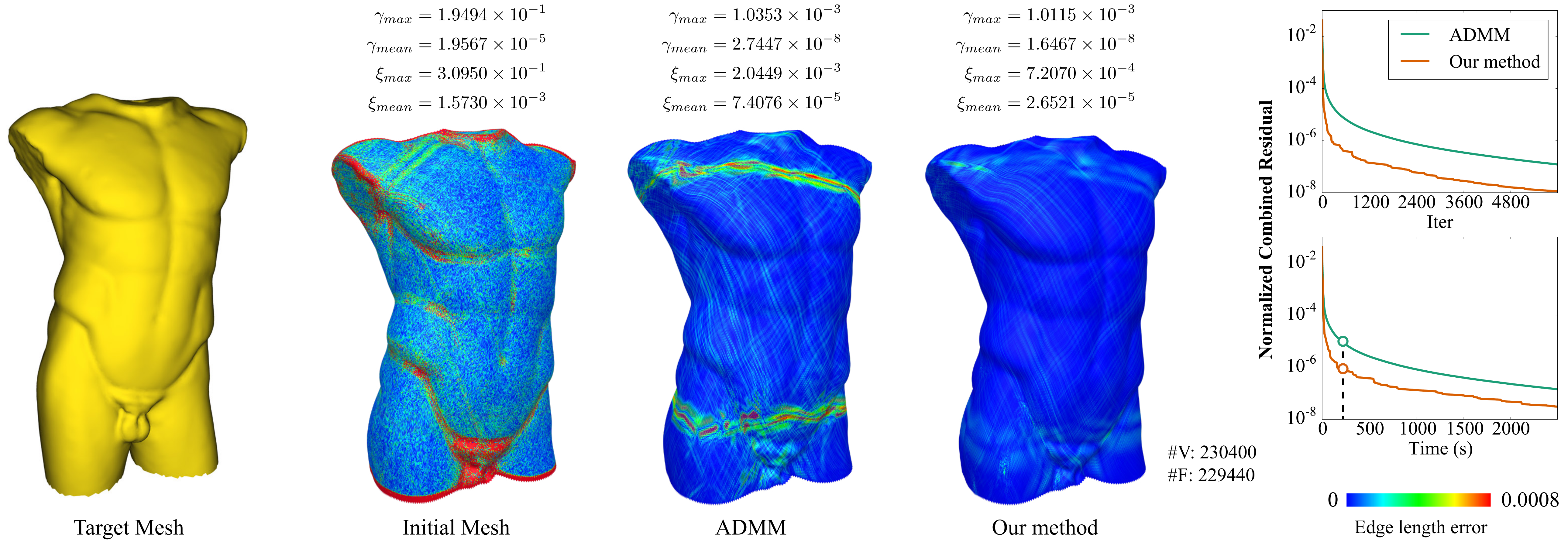}
	\caption{Our method accelerates an ADMM solver for wire mesh optimization, as shown by the normalized combined residual plots. We also show two results computed using ADMM and our accelerated solver within the same computational time (indicated in the bottom-right plot), and evaluate their violation of the angle constraints and edge length constraints using the error metrics in Eq.~\eqref{eq:WireMeshErrorMetrics}. Our result satisfies these constraints better.}
	\label{fig:WireMesh}
\end{figure*}

\begin{figure*}[t!] 
	\centering
	\includegraphics[width=0.95\textwidth]{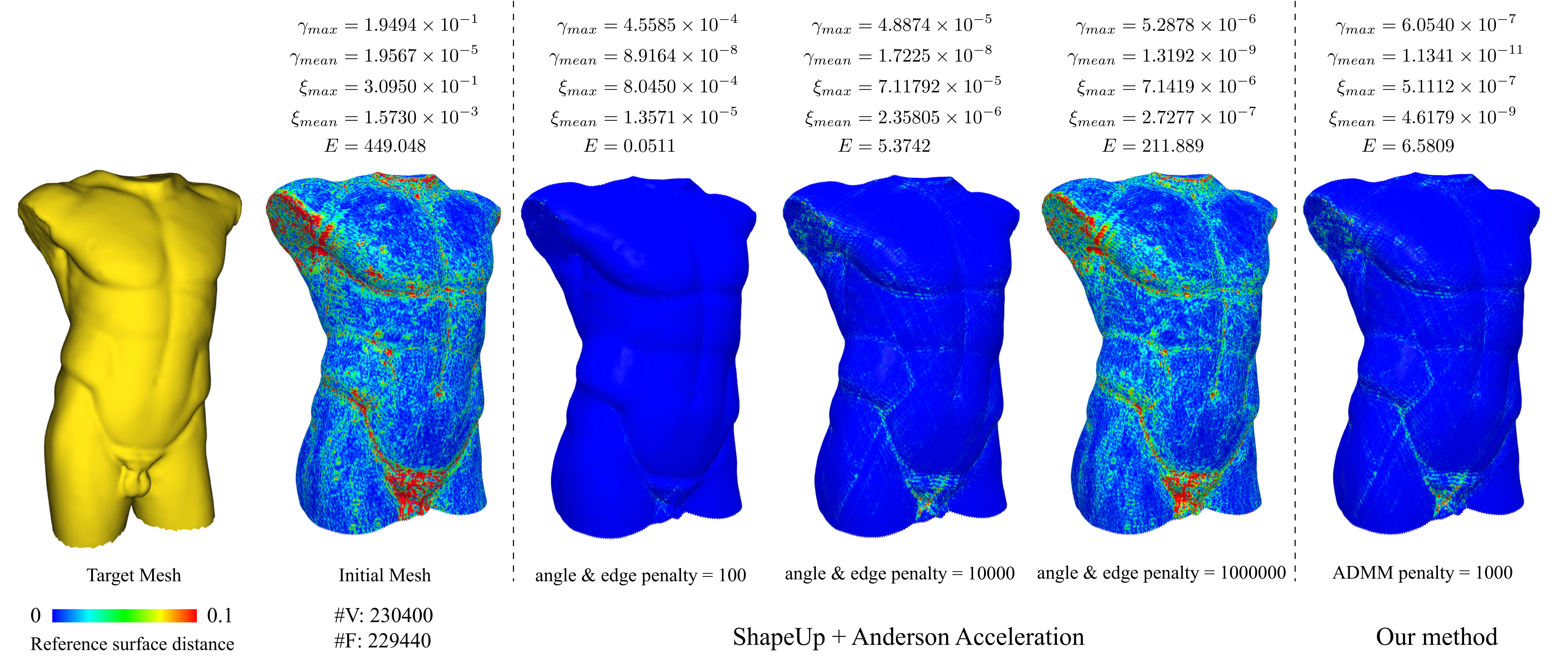}
	\caption{Comparison of wire mesh optimzation results using our accelerated ADMM solver and an accelerated quadratic penalty method as described in~\protect\cite{Peng2018}. The error metric $E$ is the sum of squared distances from the mesh vertices to the reference shape, and the color-coding illustrates the distance for each vertex. 
	Although the quadratic penalty method can improve satisfaction of the angle and edge length constraints with a larger penalty weight, this leads to greater deviation from the reference shape.}
	\label{fig:WireMeshShapeUp}
\end{figure*}

\begin{figure*}[t!] 
	\centering
	\includegraphics[width=0.98\textwidth]{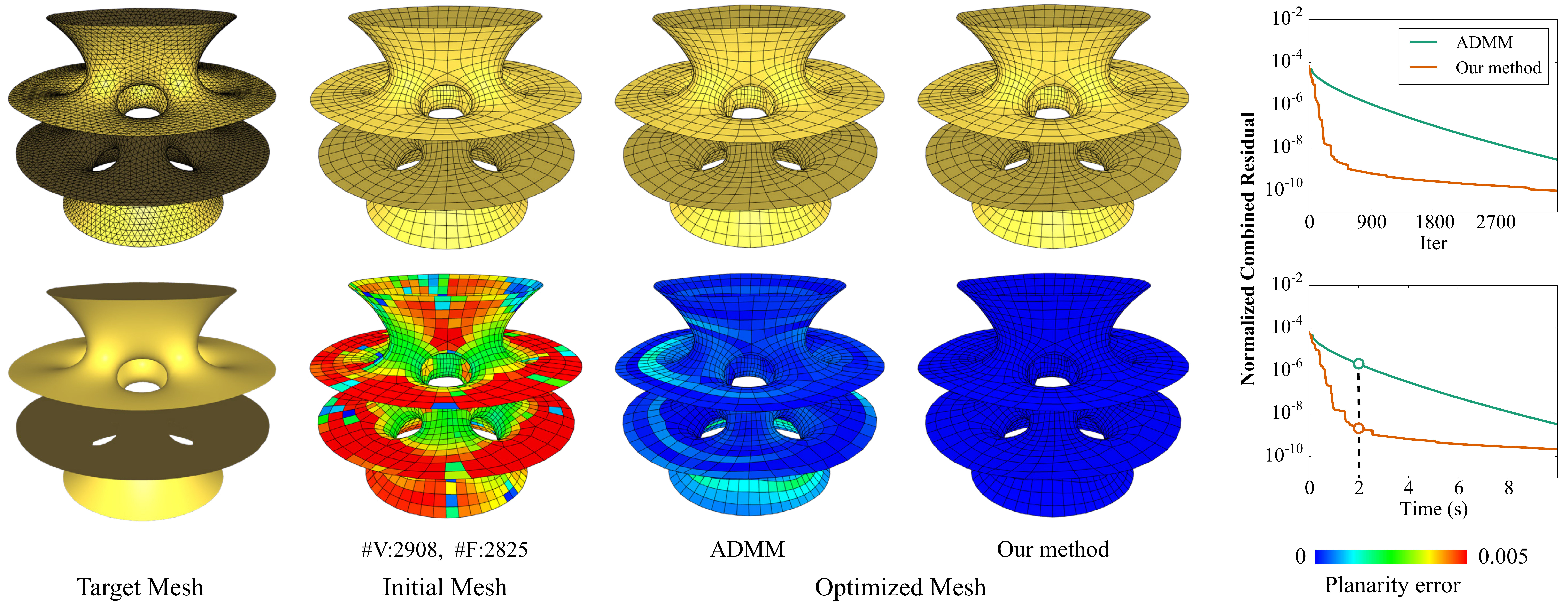}
	\caption{PQ mesh optimization using our accelerated solver convergences faster than ADMM, and achieves better satisfaction of the planarity constraints within the same computational time (highlighted in the plot in bottom right).}
	\label{fig:Costa}
\end{figure*}

\subsection{Geometry processing}
We also apply our method to an ADMM solver for mesh optimization subject to both soft and hard constraints based on~\cite{Deng2015}. The input is a mesh with vertex positions $\mathbf{x}$, soft constraints $\mathbf{A}_i \mathbf{x} \in \mathcal{C}_i $ ($i \in \mathcal{S}$), and hard constraints $\mathbf{A}_j \mathbf{x} \in \mathcal{C}_j $ ($j \in \mathcal{H}$). Here each reduction matrix $\mathbf{A}_i$ and $\mathbf{A}_j$ selects vertex positions relevant to the constraint and (where appropriate) compute their differential coordinates with respect to either their mean position or one of the vertices. \cite{Deng2015} introduce auxiliary variables $\mathbf{z}_i \in \mathcal{C}_i$ ($i \in \mathcal{S}$) and $\mathbf{z}_j \in \mathcal{C}_j$ ($j \in \mathcal{H}$) to derive an optimization problem
\begin{align}
	\min_{\mathbf{x}, \mathbf{z}}~~ &\frac{1}{2}\|\mathbf{L}(\mathbf{x} - \tilde{\mathbf{x}})\|^2 + \sum_{i \in \mathcal{S}} \left(\frac{w_i}{2} \|\mathbf{A}_i \mathbf{x} - \mathbf{z}_i\|^2 +  \sigma_{\mathcal{C}_i}(\mathbf{z}_i) \right) + 
\sum_{j \in \mathcal{H}} \sigma_{\mathcal{C}_j}(\mathbf{z}_j)\nonumber\\
\textrm{s.t.}~~& \mathbf{A}_j {\mathbf{x}} - \mathbf{z}_j = \mathbf{0}, \quad \forall~j \in \mathcal{H}.
\label{eq:GeometryOptimizatoinProblem}
\end{align}
Here $\|\mathbf{L}(\mathbf{x} - \tilde{\mathbf{x}})\|^2$ is an optional Laplacian fairing energy for the vertex positions and/or for their displacement from initial positions, whereas $\|\mathbf{A}_i \mathbf{x} - \mathbf{z}_i\|^2$ penalizes the violation of a soft constraint with a user-specified weight $w_i$. This problem is solved in~\cite{Deng2015} using the augmented Lagrangian method (ALM),
where each iteration performs multiple alternate updates of $\mathbf{z}$ and $\mathbf{x}$ followed by a single update of $\mathbf{u}$, using the same formulas as~\eqref{eq:zxu}. Wu et al.~\shortcite{Wu2011-ALM} pointed out that it is more efficient to perform only one alternate update of primal variables per iteration, in which case ALM reduces to ADMM. Therefore, we solve the problem using ADMM with the \zxu{} iteration, and apply the general approach in Algorithm~\ref{algo:zxuGeneralAA} for acceleration because the target function is not separable.

In Fig.~\ref{fig:WireMesh}, we apply our method with $m=6$ to the wire mesh optimization problem from~\cite{Grinspun14b}. The input is a regular quad mesh subject to the following constraints:
\begin{itemize}[leftmargin=*]
	\item Hard constraints: all edges have the same length $l$; within a face, each angle formed by two incident edges is in the range $[\frac{\pi}{4}, \frac{3 \pi}{4}]$.
	\item Soft constraint: each vertex lies on a given reference surface.
\end{itemize}
The mesh is optimized without the Laplacian fairing term, i.e., $\mathbf{L} = \mathbf{0}$. Our method leads to a faster decrease of the combined residual with respect to both the iteration count and the computational time. We also evaluate the violation of hard constraints using the following error metrics for angle $\alpha$ and edge length $e$:
\begin{equation}
	\xi(e) = \frac{|e - l|}{l}, \quad 
	\gamma(\alpha) = \left\{
	\begin{array}{ll}
	\frac{\pi}{4} - \alpha & \textrm{if}~\alpha < \frac{\pi}{4},\\
	\alpha - \frac{3\pi}{4} & \textrm{if}~\alpha > \frac{3\pi}{4},\\
	0 & \textrm{otherwise}.
	\end{array}
	\right.
	\label{eq:WireMeshErrorMetrics}
\end{equation}
The data and color-coding in Fig.~\ref{fig:WireMesh} show that within the same computational time, the result from our method satisfies the hard constraints better than the original ADMM.

Besides ADMM, another popular approach for enforcing hard constraints is the quadratic penalty method, which replaces the original constrained problem by an unconstrained problem with quadratic terms in the target function to penalize the violation of hard constraints~\cite{nocedal2006numerical}. Fig.~\ref{fig:WireMeshShapeUp} compares the effectiveness of these two approaches in enforcing hard constraints while decreasing the original target function. For the quadratic penalty method, we use ShapeUp~\cite{BouazizDSWP12} with Anderson acceleration as described in~\cite{Peng2018}, and solve three problem instances with different penalty weights for hard constraints and fixed weights for the other terms. Each solver is run to full convergence for comparison. We can see that although a larger penalty weight for hard constraints improves their satisfaction, it also leads to relatively less penalty and greater violation of the soft constraints. In particular, with a large penalty weight to satisfy the hard constraints to a similar level as ADMM, the result from the quadratic penalty method deviates much more from the reference surface than ADMM. It shows that ADMM is more effective in satisfying hard constraints without compromising the minimization of the target function, and our method further improves its efficiency.

In Figs.~\ref{Fig:Airport} and \ref{fig:Costa}, we apply our method to planar quad mesh optimization, a classical problem in architectural geometry~\cite{Liu2006}. The input is a quad mesh subject to the following constraints:
\begin{itemize}[leftmargin=*]
	\item Hard constraint: vertices within each face lie on a common plane.
	\item Soft constraint: each vertex lies on a given reference surface.
\end{itemize}
Following~\cite{BouazizDSWP12}, the reduction matrix for each hard constraint represents the mean-centering operator for the vertices on a common face. The target function includes a Laplacian fairness energy and a relative fairness energy for the vertex positions, as described in~\cite{Liu2011}. We measure the planarity error for each face $F$ of a given mesh using the metric ${d_{\max}(F)}/{\overline{e}}$, where $d_{\max}(F)$ is the maximum distance from a vertex of $F$ to the best fitting plane of its vertices, and $\overline{e}$ is the average edge length of the mesh.  
In both Fig.~\ref{Fig:Airport} and Fig.~\ref{fig:Costa}, our method accelerates the decrease of the combined residual, producing a result with lower planarity error than the original ADMM within the same computational time.

\subsection{Image processing}
In Fig.~\ref{fig:GaussianNoise},
we apply our method to the ADMM solver from the \mbox{ProxImaL} image optimization framework~\cite{Heide2016}. 
We compare our method with the original solver on the following problem that computes a deconvoluted image $\mathbf{x}$ from an observation image $\mathbf{f}$ with Gaussian noise and a convolution operator $\mathbf{K}$:
\begin{equation}
\min_{\mathbf{x}, \mathbf{z}}~\lambda_1 \|\mathbf{z}_1 - \mathbf{f}\|^2 + \lambda_2 \|\mathbf{z}_2^{i,j}\|
\quad
\textrm{s.t.}~\mathbf{K}{\mathbf{x}} = \mathbf{z}_1,
{~~}
(\nabla \mathbf{x})_{i,j} = \mathbf{z}_2^{i,j}~\forall i, j,
\label{eq:GaussianDenoising}
\end{equation}
where $(\nabla \mathbf{x})_{i,j}$ is the image gradient of $\mathbf{x}$ at pixel $(i,j)$.
This is solved in~\cite{Heide2016} using ADMM with the \xzu{} iteration, and we accelerate it using Algorithm~\ref{algo:xzuGeneralAA} with $m = 6$.
We modify the source code of the ProxImaL library~\footnote{\url{https://github.com/comp-imaging/ProxImaL}} to implement our accelerated solver, and use conjugate gradient to solve the linear systems in the update steps. 
Fig.~\ref{fig:GaussianNoise} shows that our method requires less computational time and lower iteration count to achieve the same residual value.

Finally, in Fig.~\ref{fig:AAQP}, we accelerate the ADMM solver used by the Coded Wavefront Sensor in~\cite{Wang2018-Megapixel} for computing the observed wavefront from a captured image. The wavefront $\mathbf{x}$ is computed by solving an optimization problem
\begin{equation}
	\min_{\mathbf{x},\mathbf{z}}~~\lambda \|\nabla \mathbf{x}\|^2 + g(\mathbf{z})
	\quad
	\textrm{s.t.}~~\nabla \mathbf{x} = \mathbf{z}, 
	\label{eq:WavefrontSolver}
\end{equation}
where $\mathbf{z}$ is an auxiliary variable for image gradient, and $g(\mathbf{z})$ is a quadratic term that measures the consistency between the wavefront and the captured image. From the general condition presented in Appendix~\ref{appendix:ufromz}, we know that in each iteration the dual variable $\mathbf{u}^{k}$ can be represented as a function of $\mathbf{z}^k$ via Eq.~\eqref{neweq:27}. Therefore, we apply Anderson acceleration to $\mathbf{z}$ alone. Moreover, as $g(\mathbf{z})$ is quadratic, Eq.~\eqref{neweq:27} implies that $\mathbf{z}^k$ and $\mathbf{u}^{k}$ are related by a linear map. Thus we use the history $\mathbf{z}$ to compute the affine combination coefficients for Anderson acceleration, and apply them to both $\mathbf{z}$ and $\mathbf{u}$ to derive the accelerated $\mathbf{z}$ and its compatible $\mathbf{u}$, similar to Algorithm~\ref{algo:zxuSeparableAA}. We modify the source code released by the authors of~\cite{Wang2018-Megapixel}~\footnote{\url{https://github.com/vccimaging/MegapixelAO}} to implement our accelerated solver. Fig.~\ref{fig:AAQP} compares the normalized combined residual plots between the two solvers, using a test example provided in the released source code. Compared to the original ADMM, our method leads to a significant reduction of computational time and iteration count for the same accuracy. 
Also included in the comparison is the GMRES acceleration for ADMM proposed in~\cite{Zhang2018-GMRES}, which is designed specifically for strongly convex quadratic problems. Following~\cite{Zhang2018-GMRES}, we restart GMRES every 10 iterations to reduce computational cost. As a general method, our approach is outperformed by GMRES acceleration, but only by a small margin. 

\begin{figure}[t!] 
	\centering
	\includegraphics[width=0.9\columnwidth]{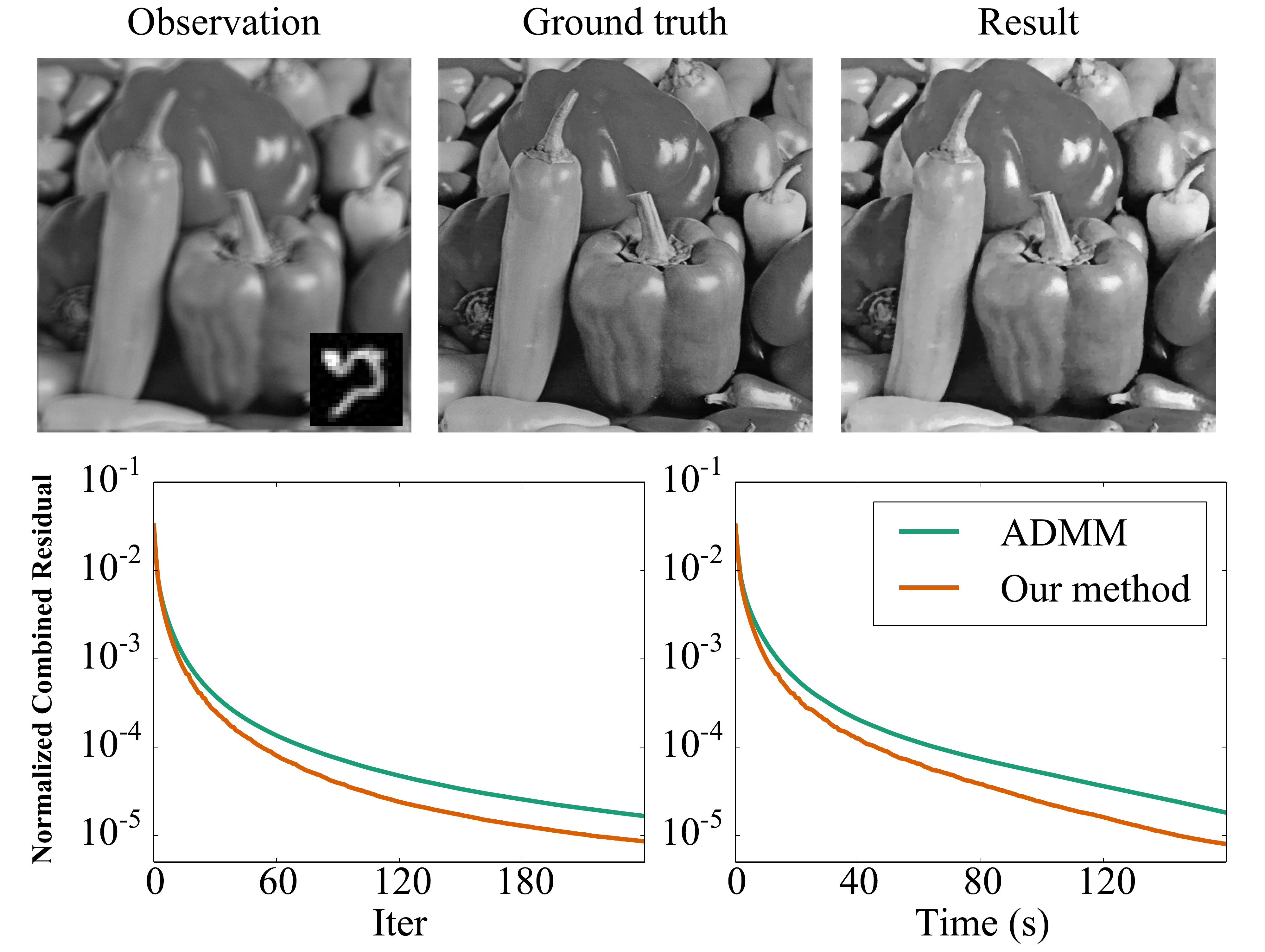}
	\caption{Our method accelerates the ADMM solver in~\protect\cite{Heide2016} for the deconvolution of a $512\times 512$ image with Gaussian noise using Eq.~\eqref{eq:GaussianDenoising}. The given convolution operator $\mathbf{K}$ is visualized in the bottom right of the observation image.}
	\label{fig:GaussianNoise}
\end{figure}

\section{Conclusion and Future Work}

In this paper, we apply Anderson acceleration to improve the convergence of ADMM on computer graphics problems. We show that ADMM can be interpreted as a fixed-point iteration of the second primal variable and the dual variable in the general case, and of only one of them when the problem has a separable target function and satisfies certain conditions.  Such interpretation allows us to directly apply Anderson acceleration in the former case, and further reduce its computational overhead in the latter case. Moreover, for each case we propose a simple residual for measuring the convergence, and use it to determine whether to accept an accelerated iterate. We apply this method to a variety of ADMM solvers in graphics, with applications ranging from physics simulation, geometry processing, to image processing. Our method shows its effectiveness on all these problems, with a notable reduction of iteration account and computational time required to reach the same accuracy. On the theoretical front, we also prove the convergence of ADMM for a common non-convex problem structure in computer graphics under weak assumptions. 
Our work addresses two main limitations of ADMM especially on non-convex problems, which will help to expand its applicability in computer graphics as a versatile solver for optimization problems that are potentially non-smooth, non-convex, and with hard constraints.

\begin{figure}[t!] 
	\centering
	\includegraphics[width=\columnwidth]{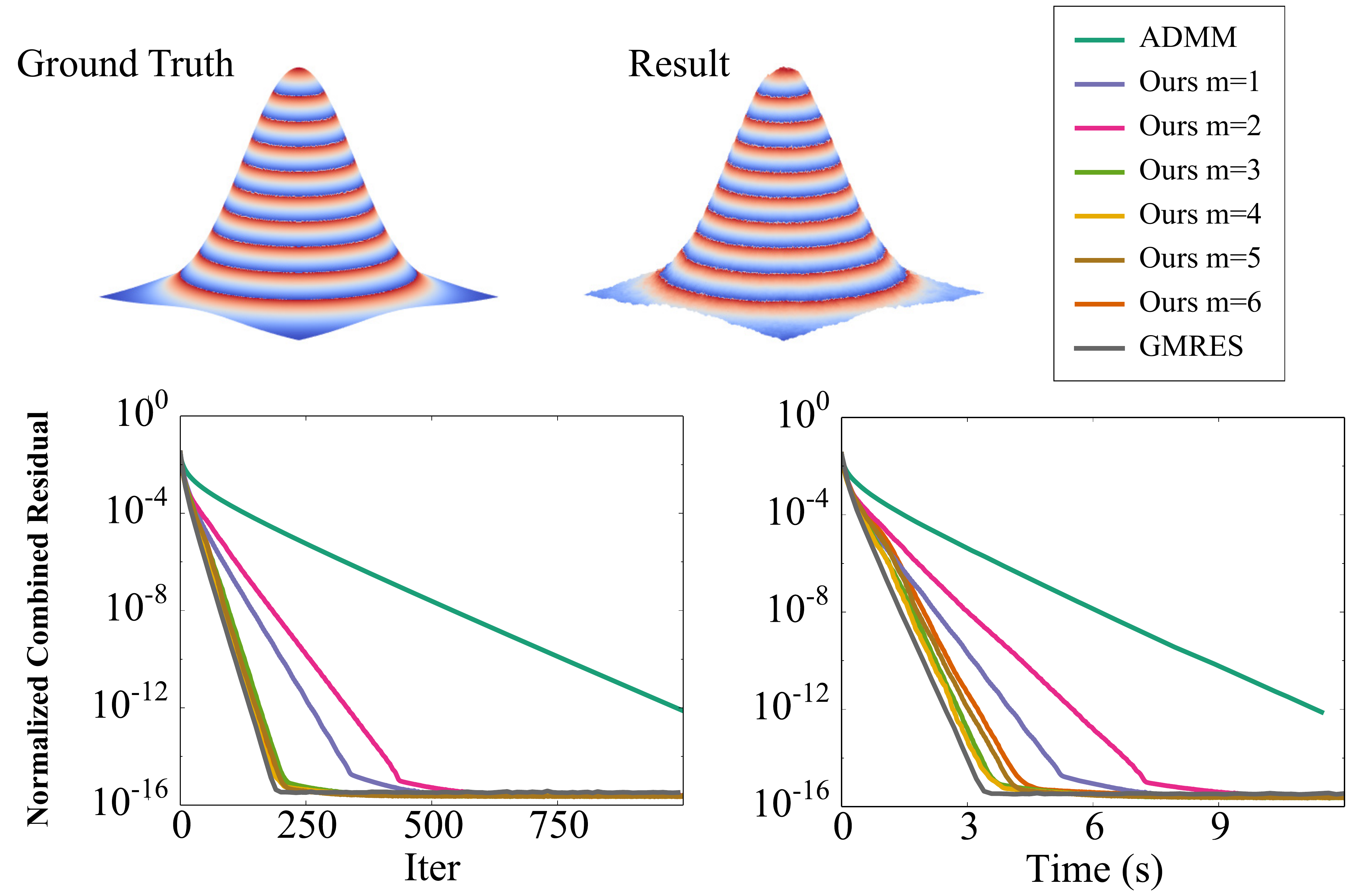}
	\caption{Our method accelerates the ADMM solver in~\protect\cite{Wang2018-Megapixel} for computing the observed wavefront from a captured image, and achieves similar performance as the specialized GMRES acceleration~\protect\cite{Zhang2018-GMRES} despite being a general acceleration technique.}
	\label{fig:AAQP}
\end{figure}

One limitation of our method is that it can be less effective for ADMM solvers with very low computational cost per iteration. In this case, the overhead of Anderson acceleration can cause a large relative increase of computational time, which partly cancels out the speedup gained from the reduction of iteration count. One such example is Fig.~\ref{fig:HeatMethod}, where we apply our method to the ADMM solver in~\cite{Tao2019} for correcting a vector field into an integrable gradient field of geodesic distance. Although our method reduces the number of iterations, its large relative overhead actually increases the computational time for achieving the same residual.

Our experiments show that Anderson acceleration is effective in reducing the number of iterations, but we do not have a theoretical guarantee for such property. This is still an open research problem, and the only existing result we are aware of is \cite{Evans2018}, which proves that Anderson acceleration improves the convergence rate for linearly converging fixed-point methods if a set of strong assumptions is satisfied. Further theoretical analysis of our method is needed to understand and guarantee its performance.

Currently we follow the convention and set the mixing parameter $\beta = 1$ for Anderson acceleration. Although it is effective in our experiments, other values of $\beta = 1$ can potentially improve the performance~\cite{Eyert1996}. The optimal choice of mixing parameter remains an open research problem, and should be explored further. 

The convergence of ADMM can also be affected by the choice of the penalty parameter and the conditioning of linear side constraints. Recently, researchers have started to analyze the optimal choice of penalty parameter and conditioning for ADMM, but only on simple convex problems~\cite{Ghadimi2015,Giselsson2017}. Overby et al.~\shortcite{Overby2017} proposed a heuristic for choosing such parameters for non-convex physical simulation problems, but there is still no theoretical guarantee for its effectiveness. A potential future research is to perform such analysis on non-convex problems, as well as how they can be used in conjunction with Anderson acceleration to further improve convergence of ADMM.

\begin{figure}[t!] 
	\centering
	\includegraphics[width=\columnwidth]{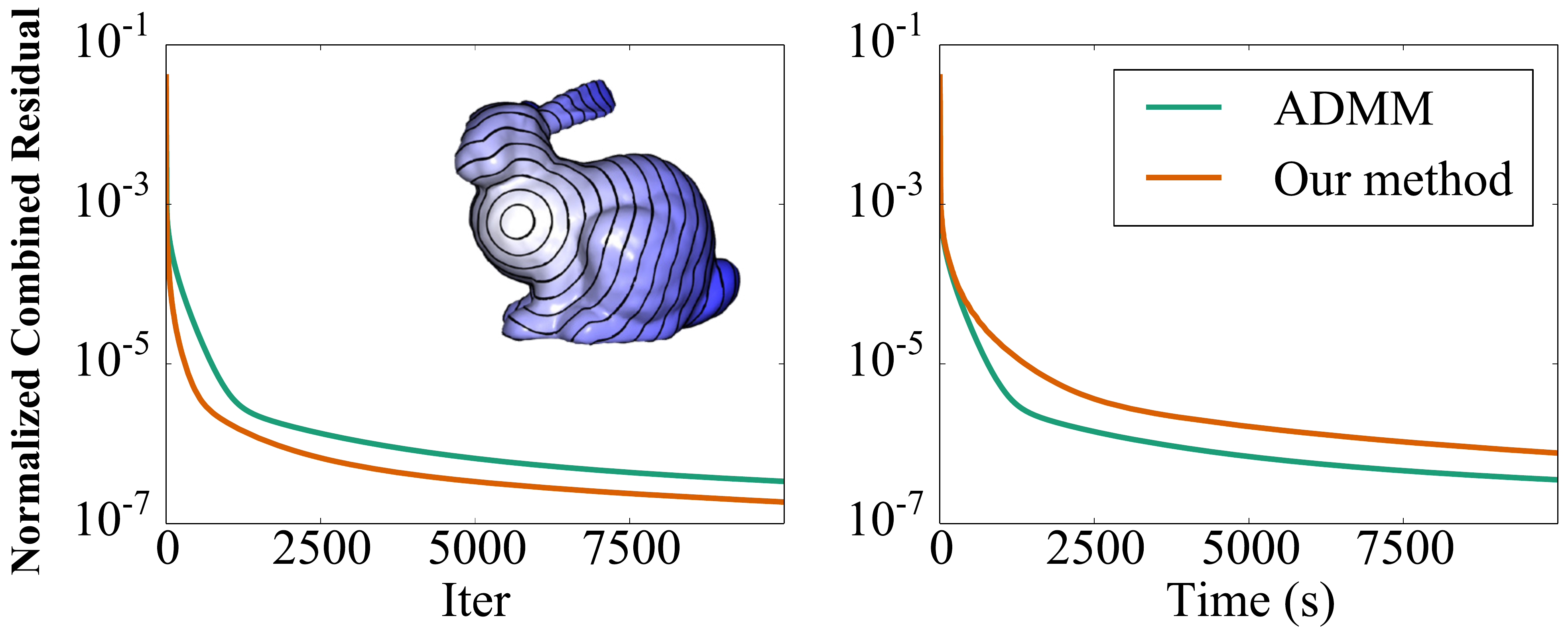}
	\caption{We apply our method to the ADMM solver in~\protect\cite{Tao2019} for correcting a vector field into an integrable gradient field. Due to the very low computational cost per iteration in the original solver, Anderson acceleration incurs a large relative overhead. As a result, although our method reduces the number of iterations, it actually increases the computational time.}
	\label{fig:HeatMethod}
\end{figure}

Finally, as ADMM is a popular solver across different problem domains, we can apply our method to problems outside computer graphics. In this paper we have focused on a problem structure common for graphics tasks. Applications in other domains may involve other problem structures and require different analyses and strategies, which will be an interesting future work.

\begin{acks}
	The target model in Figure~\ref{fig:WireMesh}, \href{https://www.thingiverse.com/thing:146386}{``Male Torso, Diadumenus Type''} by \href{https://www.thingiverse.com/CosmoWenman/about}{Cosmo Wenman}, is licensed under \href{https://creativecommons.org/licenses/by/3.0/}{CC BY 3.0}. This work was supported by National Natural Science Foundation of China (No. 61672481), and Youth Innovation Promotion Association CAS (No. 2018495).
\end{acks}

\bibliographystyle{ACM-Reference-Format}
\bibliography{AA-ADMM} 

\appendix

\section{Proof for Proposition~\ref{prop:ufromz}}
\label{sec:proof:ufromz}
By the optimality condition of \eqref{eq:Sepxzu_z} we have:
\begin{equation}
\label{eq:15}
\nabla g(\mathbf{z}^{k+1})-\mu\mathbf{B}^T(\mathbf{A}\mathbf{x}^{k+1}-\mathbf{B}\mathbf{z}^{k+1}+\mathbf{u}^k-\mathbf{c})=0.
\end{equation}
Put \eqref{eq:Sepxzu_u} into \eqref{eq:15}:
\begin{equation}
\nabla g(\mathbf{z}^{k+1})=\mu\mathbf{B}^T\mathbf{u}^{k+1},
\label{eq:16}
\end{equation}
which completes the proof. \qed

\section{Proof for Proposition~\ref{Prop:xzuResidual}}
\label{sec:proof:xzuResidual}
For the first part, suppose $\mathbf{z}^{k+1}$ is the fixed-point of the \xzu{} iteration, which means that
\begin{equation}
\mathbf{z}^{k+2}=\mathbf{z}^{k+1}.
\label{neweq:1}
\end{equation}
Then we have
\begin{align*}
&~~ \mathbf{u}^{k+2}=\mathbf{u}^{k+1} && \text{by \eqref{eq:16}}\\
\implies &~~ \mathbf{A}\mathbf{x}^{k+2}-\mathbf{B}\mathbf{z}^{k+2}-\mathbf{c}=0 && \text{by \eqref{eq:Sepxzu_u}}\\
\implies &~~ \mathbf{A}\mathbf{x}^{k+2}-\mathbf{B}\mathbf{z}^{k+1}-\mathbf{c}=0 && \text{by \eqref{neweq:1}.}
\end{align*}
For the second part, if $\mathbf{A}\mathbf{x}^{k+2}-\mathbf{B}\mathbf{z}^{k+1}-\mathbf{c}=0$ then from \eqref{eq:Sepxzu_u}:
\begin{equation}
\mathbf{A}\mathbf{x}^{k+2}-\mathbf{c}+\mathbf{u}^{k+1}=\mathbf{A}\mathbf{x}^{k+1}-\mathbf{c}+\mathbf{u}^{k}.
\end{equation}
And from \eqref{eq:Sepxzu_z} and Remark~\ref{remark:UniqueMinimum}
\begin{align*}
	\mathbf{z}^{k+1} & =\argmin_{\mathbf{z}}\left(g(\mathbf{z})+\frac{\mu}{2}\| (\mathbf{A}\mathbf{x}^{k+1}-\mathbf{c}+\mathbf{u}^k)-\mathbf{B}\mathbf{z}\|^2\right)\\
	& = \argmin_{\mathbf{z}}\left(g(\mathbf{z})+\frac{\mu}{2}\| (\mathbf{A}\mathbf{x}^{k+2}-\mathbf{c}+\mathbf{u}^{k+1})-\mathbf{B}\mathbf{z}\|^2\right) = \mathbf{z}^{k+2},
\end{align*}
which completes the proof. \qed

\section{Proof for Proposition~\ref{prop:xfromu}}
\label{sec:proof:xfromu}
By \eqref{eq:Sepzxu_u} we have:
\begin{equation}\label{eq:23}
\mathbf{B}\mathbf{z}^{k+1}-\mathbf{u}^k+\mathbf{c}=\mathbf{A}\mathbf{x}^{k+1}-\mathbf{u}^{k+1}
\end{equation}
Put \eqref{eq:23} into \eqref{eq:Sepzxu_x}:
\begin{align}
&~~ (\mathbf{G}+\mu \mathbf{A}^{T}\mathbf{A})x^{k+1}=\mathbf{G}\tilde{\mathbf{x}}+\mu\mathbf{A}^T(\mathbf{A}\mathbf{x}^{k+1}-\mathbf{u}^{k+1})\nonumber\\
\implies &~~ \mathbf{G}x^{k+1} = \mathbf{G}\tilde{\mathbf{x}} - \mu\mathbf{A}^T \mathbf{u}^{k+1}\nonumber\\
\implies &~~ \mathbf{x}^{k+1}=\tilde{\mathbf{x}}-\mu\mathbf{G}^{-1}\mathbf{A}^T\mathbf{u}^{k+1},
\label{eq:25}
\end{align}
which completes the proof. \qed

\section{Proof for Proposition~\ref{Prop:zxuResidual}}
\label{sec:proof:zxuResidual}
For the first part, suppose $\mathbf{u}^{k+1}$ is the fixed-point of the \zxu{} iteration, so that
\begin{equation}
\label{neweq:6}
\mathbf{u}^{k+2}=\mathbf{u}^{k+1}.
\end{equation}
Then by \eqref{eq:25} and \eqref{neweq:6}:
\begin{equation}
	\mathbf{x}^{k+2} = \mathbf{x}^{k+1}.
	\label{eq:FixedX}
\end{equation}
Therefore
\begin{align*}
	&~~ \mathbf{A}\mathbf{x}^{k+2}-\mathbf{B}\mathbf{z}^{k+2}-\mathbf{c}= \mathbf{0} && \text{by \eqref{eq:Sepzxu_u} and \eqref{neweq:6}}\\
	\implies &~~ \mathbf{A}\mathbf{x}^{k+1}-\mathbf{B}\mathbf{z}^{k+2}-\mathbf{c}=0 && \text{by \eqref{eq:FixedX}.}
\end{align*}
For the second part, suppose
\begin{equation}
\mathbf{A}\mathbf{x}^{k+1}-\mathbf{B}\mathbf{z}^{k+2}-\mathbf{c} = \mathbf{0}.
\label{eq:zxuResidualCondition}
\end{equation}
Then we have
\begin{align*}
&~~ \mathbf{u}^{k+1}-\mathbf{B}\mathbf{z}^{k+2}=\mathbf{u}^{k}-\mathbf{B}\mathbf{z}^{k+1} && \text{by \eqref{eq:Sepzxu_u} and \eqref{eq:zxuResidualCondition}}\\
\implies &~~ \mathbf{x}^{k+2}=\mathbf{x}^{k+1} && \text{by \eqref{eq:Sepzxu_x}}\\
\implies &~~ \mathbf{A}\mathbf{x}^{k+2}-\mathbf{B}\mathbf{z}^{k+2}-\mathbf{c} = \mathbf{0} && \text{by \eqref{eq:zxuResidualCondition}}\\
\implies &~~ \mathbf{u}^{k+2}=\mathbf{u}^{k+1} && \text{by \eqref{eq:Sepzxu_u}},
\end{align*}
which completes the proof. \qed

\section{Further Discussion for Propositions~\ref{prop:ufromz}-\ref{Prop:zxuResidual}}
\label{sec:FurtherDiscussion}
We now consider the general condition such that between the second updated primal variable and the dual variable, one of them is a function of the other. We consider the most general case:
\begin{equation}
\min_{\mathbf{x},\mathbf{z}} ~~ f(\mathbf{x}) + g(\mathbf{z}) \quad
\textrm{s.t.} ~~ \mathbf{A} \mathbf{x} - \mathbf{B} \mathbf{z} = \mathbf{c}.
\label{neweq:20}
\end{equation}
Unlike Section~\ref{sec:SeparableADMM}, we do not assume any specific form of $f$ and $g$. We then only need to discuss the following \xzu{} iteration because the conclusion for \zxu{} iteration is similar:
\begin{align}
\mathbf{x}^{k+1}&\in\argmin_{\mathbf{x}}~L(\mathbf{x}, \mathbf{z}^{k},\mathbf{u}^k), \label{neweq:21}\\
\mathbf{z}^{k+1}&\in\argmin_{\mathbf{z}}~L(\mathbf{x}^{k+1}, \mathbf{z},\mathbf{u}^k),\label{neweq:22}\\
\mathbf{u}^{k+1}&=\mathbf{u}^k+\mathbf{A}\mathbf{x}^{k+1}-\mathbf{B}\mathbf{z}^{k+1}-\mathbf{c}.\label{neweq:23}
\end{align}
We first need the subproblem \eqref{neweq:21} and \eqref{neweq:22} to be well-defined, for which the next condition is sufficient
:\begin{itemize}
\item[(C1)] $f$ and $g$ are bounded from below and lower-semi continuous.
\end{itemize}
Then we rewrite the ADMM iteration as:\begin{align}
&-\mathbf{A}^T(\mathbf{A}\mathbf{x}^{k+1}-\mathbf{B}\mathbf{z}^k-\mathbf{c}+\mathbf{u}^k)\in\partial f(\mathbf{x}^{k+1}),\label{neweq:24}\\
&\mathbf{B}^T(\mathbf{A}\mathbf{x}^{k+1}-\mathbf{B}\mathbf{z}^{k+1}-\mathbf{c}+\mathbf{u}^{k})\in\partial g(\mathbf{z}^{k+1}),\label{neweq:25}\\
&\mathbf{u}^{k+1}=\mathbf{u}^k+\mathbf{A}\mathbf{x}^{k+1}-\mathbf{B}\mathbf{z}^{k+1}-\mathbf{c}.\label{neweq:26}
\end{align}

\subsection{$\mathbf{u}$ as a function of $\mathbf{z}$}
\label{appendix:ufromz}
By \eqref{neweq:25} and \eqref{neweq:26}:
\begin{equation}\label{neweq:27}
\mathbf{B}^T\mathbf{u}^{k+1}\in\partial g(\mathbf{z}^{k+1}).
\end{equation}
Now we can see that $\mathbf{u}^{k+1}$ is a function $\mathbf{z}^{k+1}$ if and only if:
\begin{itemize}
  \item[(C2)] $\mathbf{B}$ is invertible.
  \item[(C3)] $\partial g(\mathbf{z})$ contains exactly one element $\forall\mathbf{z}\in \dom{\partial g}$.
\end{itemize}
From ~\cite[Theorem 9.18]{rockafellar2009variational}  we know that the next condition is sufficient:
\begin{itemize}
  \item[(C3$'$)]$g(\mathbf{z})$ is strictly differentiable $\forall\mathbf{z}\in \dom{\partial g}$.
\end{itemize}
Moreover, we need additional conditions in order to use Anderson acceleration on $\mathbf{z}$. Note that Anderson acceleration generates $\mathbf{z}^{AA}$ by affine combination. So if we want to use \eqref{neweq:27} to compute $\mathbf{u}_{\textrm{AA}}$ from $\mathbf{z}_{\textrm{AA}}$, the following condition is needed:
\begin{itemize}
\item[(C4)] The domain of $\partial g$, defined as $\{\mathbf{z} \mid  \partial g(\mathbf{z})\neq \emptyset \}$, is affine.
\end{itemize}

\subsection{$\mathbf{z}$ as a function of $\mathbf{u}$}
From \eqref{neweq:27} we know that $\mathbf{z}$ is a function of $\mathbf{u}$ if and only if:
\begin{itemize}
\item[(C5)] The inverse mapping of set-valued mapping $\partial g(\mathbf{z})$ is a single-valued mapping.
\end{itemize}
The next condition is sufficient to ensure (C5) but not necessary:
\begin{itemize}
\item[(C5$'$)] $g(\mathbf{z})$ is strictly convex.
\end{itemize}
Also, similar to the argument in Appendix~\ref{appendix:ufromz}, in order to apply Anderson acceleration on $\mathbf{u}$ we need the following condition:
\begin{itemize}
\item[(C6)] The range of $\partial g$, defined as $\bigcup_{\mathbf{z}\in R^n}\partial g(\mathbf{z})$, is affine.
\end{itemize}

\section{Proofs for Convergence Theorems}
\label{sec:ConvergenceTheoremProof}

This section proves the linear convergence theorems when $g$ is locally Lipschitz differentiable (Theorems~\ref{thm:xzuConvergenceLocalLipschitz} and \ref{thm:zxuConvergenceLocalLipschitz}) and the general convergence theorems (Theorems~\ref{thm:GeneralConvergenceXZU} and \ref{thm:GeneralConvergenceZXU}).
The proofs for Theorems~\ref{thm:xzuConvergenceGlobalLipschitz} and \ref{thm:zxuConvergenceGlobalLipschitz} are similar to those for Theorems~\ref{thm:xzuConvergenceLocalLipschitz} and \ref{thm:zxuConvergenceLocalLipschitz}, so we will not give their complete proofs but only summarize the main steps. Because of the order in which some lemmas are used in the proofs, we will prove Theorem~\ref{thm:GeneralConvergenceXZU} and \ref{thm:xzuConvergenceLocalLipschitz} first.
Without loss of generality, we assume $\mathbf{c} = \mathbf{0}$ in Eq.~\eqref{eq:SeparableADMMProblem} to simplify notation.

\subsection{Proof for Theorem~\ref{thm:GeneralConvergenceXZU}}
\label{Proof_Thm3.5}
Recall that Theorem~\ref{thm:GeneralConvergenceXZU} is about general convergence of the \xzu{} iteration. We first note that:
\begin{align}
&~~\nabla\hatg(\mathbf{z})=\mathbf{B}^{-T}\nabla g(\mathbf{B}^{-1}\mathbf{z})\label{eqE1:1}\\
\implies&~~\nabla\hatg(\mathbf{B}\mathbf{z})=\mathbf{B}^{-T}\nabla g(\mathbf{z}).\label{eqE1:2}
\end{align}
These two equations will be used frequently in the following. 
Note that from Assumption~\ref{assump:g}(2) we can derive \eqref{eq:15} from \eqref{eq:Sepxzu_z}. Moreover, based on the definition of $L_c$ in Assumption~\ref{assump:initialValue} we have:
\begin{prop}
Suppose the Lipschitz constant of $\nabla\hat{g}(\mathbf{z})$ over $\conv{\lev{\hatg}{\alpha}}$ is $L_1$, then
$\forall~\mathbf{B}\mathbf{z}_1, \mathbf{B}\mathbf{z}_2 \in \lev{\hatg}{\alpha}$, we have
\begin{equation}\label{eq:66}
|\hatg(\mathbf{B}\mathbf{z}_1)-\hatg(\mathbf{B}\mathbf{z}_2)-\langle\nabla \hatg(\mathbf{B}\mathbf{z}_2),\mathbf{B}\mathbf{z}_1-\mathbf{B}\mathbf{z}_2\rangle|\leq\frac{L_1}{2}\| \mathbf{B}\mathbf{z}_1-\mathbf{B}\mathbf{z}_2\|^2.
\end{equation}
Moreover, if $\mu>L_1$, and $\mathbf{z}_2\in\argmin_{\mathbf{z}}(g(\mathbf{z})+\frac{\mu}{2}\| \mathbf{B}\mathbf{z}-\mathbf{q}\|^2)$,
then:
\begin{equation*}
g(\mathbf{z}_2)+\frac{\mu}{2}\| \mathbf{B}\mathbf{z}_2-\mathbf{q}\|^2 \leq g(\mathbf{z}_1)+\frac{\mu}{2}\| \mathbf{B}\mathbf{z}_1-\mathbf{q}\|^2\notag -\frac{\mu-L_1}{2}\| \mathbf{B}\mathbf{z}_1-\mathbf{B}\mathbf{z}_2\|^2.
\end{equation*}
\label{prop:GeneralProof1}
\end{prop}
The proof is standard so we omit it. Also see~\cite[Lemma 1.2.3 \& Theorem 2.1.8]{nesterov2013introductory}. The next lemma is important.
\begin{lemma}
If Assumption~\ref{assump:g} and \ref{assump:initialValue} hold, $\frac{\mu}{2}-
\frac{L_c^2}{\mu}>\frac{L_c}{2}$, and the \xzu{} iteration satisfies $g(\mathbf{z}^k)\leq T(\mathbf{x}^0,\mathbf{z}^0)+c_1$ and  $L(\mathbf{x}^k,\mathbf{z}^k,\mathbf{u}^k)\leq L(\mathbf{x}^0,\mathbf{z}^0,\mathbf{u}^0)=T(\mathbf{x}^0,\mathbf{z}^0)$. Then
\begin{equation}
g(\mathbf{z}^{k+1})\leq T(\mathbf{x}^0,\mathbf{z}^0)+c_1,
~~~~L(\mathbf{x}^{k+1},\mathbf{z}^{k+1},\mathbf{u}^{k+1})\leq T(\mathbf{x}^0,\mathbf{z}^0).\label{eq:GeneralConvergenceLemma}
\end{equation}
\label{lemma:GeneralProof}
\end{lemma}
\begin{proof}
By the definition of $\mathbf{z}^{k+1}$ in \eqref{eq:Sepxzu_z}:
\begin{equation*}
g(\mathbf{z}^{k+1})+\frac{\mu}{2}\| \mathbf{A}\mathbf{x}^{k+1}-\mathbf{B}\mathbf{z}^{k+1}+\mathbf{u}^k\|^2 \leq g(\mathbf{z}^{k})+\frac{\mu}{2}\| \mathbf{A}\mathbf{x}^{k+1}-\mathbf{B}\mathbf{z}^{k}+\mathbf{u}^k\|^2.
\end{equation*}
And notice the definition of $\mathbf{x}^{k+1}$ in \eqref{eq:Sepxzu_x}:
\begin{equation}
f(\mathbf{x}^{k+1})+\frac{\mu}{2}\| \mathbf{A}\mathbf{x}^{k+1}-\mathbf{B}\mathbf{z}^k+\mathbf{u}^k\|^2 \leq f(\mathbf{x}^k)+\frac{\mu}{2}\| \mathbf{A}\mathbf{x}^{k}-\mathbf{B}\mathbf{z}^k+\mathbf{u}^k\|^2.
\label{eq:72}
\end{equation}
Combine the two equations above:
\begin{equation*}
T(\mathbf{x}^{k+1},\mathbf{z}^{k+1})+\frac{\mu}{2}\| \mathbf{A}\mathbf{x}^{k+1}-\mathbf{B}\mathbf{z}^{k+1}+\mathbf{u}^k\|^2 \leq L(\mathbf{x}^{k},\mathbf{z}^{k},\mathbf{u}^k)+\frac{\mu}{2}\| \mathbf{u}^k\|^2.
\end{equation*}
By \eqref{eq:Sepxzu_u} and \eqref{eq:16}:
\begin{equation}
T(\mathbf{x}^{k+1},\mathbf{z}^{k+1})+\frac{\mu}{2}\| \mathbf{u}^{k+1}\|^2 \leq
L(\mathbf{x}^{k},\mathbf{z}^{k},\mathbf{u}^k)+\frac{1}{2\mu}\| \mathbf{B}^{-T}\nabla g(\mathbf{z}^k)\|^2.
\label{eq:74}
\end{equation}
Notice that $L(\mathbf{x}^k,\mathbf{z}^k,\mathbf{u}^k)\leq T(\mathbf{x}^0,\mathbf{z}^0)$ and by the definition of $c_1$:
\begin{equation*}
g(\mathbf{z}^{k+1})\leq T(\mathbf{x}^0,\mathbf{z}^0)+c_1.
\end{equation*}
Thus we have proved the first part. For the second part, we have:
\begin{align}
&L(\mathbf{x}^{k+1},\mathbf{z}^k,\mathbf{u}^k)\leq L(\mathbf{x}^{k},\mathbf{z}^k,\mathbf{u}^k),\label{eq:76}\\
&L(\mathbf{x}^{k+1},\mathbf{z}^{k+1},\mathbf{u}^k)\leq L(\mathbf{x}^{k+1},\mathbf{z}^k,\mathbf{u}^k)-\frac{\mu-L_c}{2}\| \mathbf{B}\mathbf{z}^{k+1}-\mathbf{B}\mathbf{z}^k\|^2, \label{eq:77}\\
&L(\mathbf{x}^{k+1},\mathbf{z}^{k+1},\mathbf{u}^{k+1})= L(\mathbf{x}^{k+1},\mathbf{z}^{k+1},\mathbf{u}^k)+\mu\| \mathbf{u}^{k+1}-\mathbf{u}^k\|^2.
\label{eq:78}\end{align}
Here \eqref{eq:76} is derived from \eqref{eq:72}, \eqref{eq:77} is derived from Assumption~\ref{assump:initialValue}(2) and Proposition~\ref{prop:GeneralProof1}, and \eqref{eq:78} is trivial. Add them together, and then use \eqref{eq:16} and the fact that $\frac{\mu}{2}-
\frac{L_c^2}{\mu}>\frac{L_c}{2}$:
\begin{align}
L(\mathbf{x}^{k+1},\mathbf{z}^{k+1},\mathbf{u}^{k+1})&\leq L(\mathbf{x}^{k},\mathbf{z}^{k},\mathbf{u}^k)-(\frac{\mu}{2}-
\frac{L_c^2}{\mu}-\frac{L_c}{2})\| \mathbf{B}\mathbf{z}^{k+1}-\mathbf{B}\mathbf{z}^{k}\|^2\notag\\&\leq T(\mathbf{x}^0,\mathbf{z}^0),
\label{eq:79}
\end{align}
Which completes the proof.
\end{proof}
From Assumption~\ref{assump:initialValue}(1) and Lemma~\ref{lemma:GeneralProof}, we have:
\begin{prop}
\label{propG-2}
Suppose Assumptions~\ref{assump:g} and \ref{assump:initialValue} hold, and $\frac{\mu}{2}-
\frac{L_c^2}{\mu}>\frac{L_c}{2}$. Then the \xzu{} iteration satisfies
\begin{align}\label{eq:80}
g(\mathbf{z}^{k})\leq T(\mathbf{x}^0,\mathbf{z}^0)+c_1,~~~~ L(\mathbf{x}^{k},\mathbf{z}^{k},\mathbf{u}^{k})\leq T(\mathbf{x}^0,\mathbf{z}^0).
\end{align}
\label{prop:GeneralProof2}
\end{prop}
By Proposition~\ref{prop:GeneralProof2}, Assumption~\ref{assump:g}(3) has the same effect as the Lipschitz differentiability assumption. The next step is similar to the convergence proof in~\cite{wang2019global}, which requires the following properties for the sequence $(\mathbf{x}^k,\mathbf{z}^k,\mathbf{u}^k)$:
\begin{itemize}
\item[(P1)] \emph{Boundedness}: the generated sequence $(\mathbf{x}^k,\mathbf{z}^k,\mathbf{u}^k)$ is bounded, and $L(\mathbf{x}^k,\mathbf{z}^k,\mathbf{u}^k)$ is lower bounded.
\item[(P2)] \emph{Sufficient descent}: there is a constant $C_1(\mu)>0$ such that for sufficiently large $k$, we have:
\begin{align*}
 &L(\mathbf{x}^k,\mathbf{z}^k,\mathbf{u}^k)-L(\mathbf{x}^{k+1},\mathbf{z}^{k+1},\mathbf{u}^{k+1})\\
 \geq ~~& C_1(\mu)(\| \mathbf{B}(\mathbf{z}^{k+1}-\mathbf{z}^k )\|^2+\| \mathbf{A}(\mathbf{x}^{k+1}-\mathbf{x}^k)\|^2).
\end{align*}
\item[(P3)] \emph{Subgradient bound}: there is a constant $C_2(\mu)>0$ and $\mathbf{d}^{k+1}\in\partial L(\mathbf{x}^{k+1},\mathbf{y}^{k+1},\mathbf{u}^{k+1})$ such that
	\begin{equation*}
    \| \mathbf{d}^{k+1}\|\leq C_2(\mu)(\| \mathbf{B}(\mathbf{z}^{k+1}-\mathbf{z}^k)\|+\| \mathbf{A}(\mathbf{x}^{k+1}-\mathbf{x}^k)\|).
    \end{equation*}
\item[(P4)] \emph{Limiting continuity}: if $(\mathbf{x}^{*},\mathbf{z}^{*},\mathbf{u}^{*})$ is the limit point of the sub-sequence $(\mathbf{x}^{k_s},\mathbf{z}^{k_s},\mathbf{u}^{k_s})$ for $s\in \mathbb{N}$, then we have:
\begin{equation*}
    \lim_{s\rightarrow\infty}L(\mathbf{x}^{k_s},\mathbf{z}^{k_s},\mathbf{u}^{k_s})=L(\mathbf{x}^{*},\mathbf{z}^{*},\mathbf{u}^{*}).
\end{equation*}
\end{itemize}
Note that although the \xzu{} iteration is not same as the one defined in~\cite{li2015global}, the proof for \cite[Theorem 3]{li2015global} is not affected by the difference. Combining it with~\cite[Proposition 2]{wang2019global}, we can prove Theorem~\ref{thm:GeneralConvergenceXZU}:
\begin{proof}[Proof for Theorem~\ref{thm:GeneralConvergenceXZU}]
From~\cite[Proposition 2]{wang2019global}, \cite[Theorem 3]{li2015global}, and Proposition~\ref{prop:GeneralProof2} in our paper, we only need to show (P1)-(P4) hold for $(\mathbf{x}^k,\mathbf{z}^k,\mathbf{u}^k)$.

For (P1), from \eqref{eq:74} we have:\begin{align}
T(\mathbf{x}^k,\mathbf{z}^k)\leq T(\mathbf{x}^0,\mathbf{z}^0)+c_1.
\end{align}
From Assumption~\ref{assump:g}(1) $g(\mathbf{z})$ is level-bounded and $\mathbf{G}$ is invertible so $f(\mathbf{x})$ is also level-bounded, thus $(\mathbf{x}^k,\mathbf{z}^k)$ is bounded. The boundedness of $\mathbf{u}^k$ can be derived from~\eqref{eq:15}. The lower boundedness of $L(\mathbf{x}^{k},\mathbf{z}^{k},\mathbf{u}^{k})$ comes from Assumption~\ref{assump:initialValue}(2) and the fact that $T(\mathbf{x},\mathbf{z})\geq0$. In fact we have:
$L(\mathbf{x}^{k},\mathbf{z}^{k},\mathbf{u}^{k})\geq -c_1$.

In the derivation of \eqref{eq:76}, we did not use the fact that $f(\mathbf{x})$ is quadratic. If we take this into consideration, then \eqref{eq:76} becomes:\begin{align}\label{neq:1}
L(\mathbf{x}^{k+1},\mathbf{z}^k,\mathbf{u}^k)&\leq L(\mathbf{x}^{k},\mathbf{z}^k,\mathbf{u}^k)-l\| \mathbf{x}^{k+1}-x^{k}\|^2.
\end{align}
Here $l>0$ is some constant. \eqref{neq:1},\eqref{eq:77} and \eqref{eq:78} show that (P2) holds.

(P4) is trivial for our problem. For (P3) we have:
\begin{align}
 \nabla_{\mathbf{x}}L(\mathbf{x}^{k+1},\mathbf{z}^{k+1},\mathbf{u}^{k+1})&=\mathbf{G}(\mathbf{x}^{k+1}-\tilde{\mathbf{x}})+\mu \mathbf{A}^T(\mathbf{A}\mathbf{x}^{k+1}-\mathbf{B}\mathbf{z}^{k+1}+\mathbf{u}^{k+1})\notag\\
 &=\mu \mathbf{A}^T(\mathbf{B}\mathbf{z}^{k}-\mathbf{B}\mathbf{z}^{k+1}+\mathbf{u}^{k+1}-\mathbf{u}^{k}),\label{eq:86}\\
\nabla_{\mathbf{z}}L(\mathbf{x}^{k+1},\mathbf{z}^{k+1},\mathbf{u}^{k+1})&=\nabla g(\mathbf{z}^{k+1})+\mu\mathbf{B}^{T}(\mathbf{B}\mathbf{z}^{k+1}-\mathbf{A}\mathbf{x}^{k+1}-\mathbf{u}^{k+1})\notag\\ &=\mu\mathbf{B}^{T}(\mathbf{u}^{k}-\mathbf{u}^{k+1}),\label{eq:87} \\
\nabla_{\mathbf{u}}L(\mathbf{x}^{k+1},\mathbf{z}^{k+1},\mathbf{u}^{k+1}) & =\mu(\mathbf{A}\mathbf{x}^{k+1}-\mathbf{B}\mathbf{z}^{k+1}) =\mu(\mathbf{u}^{k+1}-\mathbf{u}^{k}).
  \label{eq:88}
 \end{align}
Here we use \eqref{eq:Sepxzu_x} and \eqref{eq:Sepxzu_u} for \eqref{eq:86}; \eqref{eq:15} for \eqref{eq:87}; \eqref{eq:Sepxzu_u} for \eqref{eq:88}. By \eqref{eq:15}, Assumption~\ref{assump:g}(3), and Assumption~\ref{assump:initialValue}:
\begin{align*}
&\|  \nabla_{\mathbf{x}}L(\mathbf{x}^{k+1},\mathbf{z}^{k+1},\mathbf{u}^{k+1})\|\leq \sqrt{\rho(\mathbf{A}^T\mathbf{A})} (\mu+L_c) \|\mathbf{B}\mathbf{z}^{k+1}-\mathbf{B}\mathbf{z}^{k}\|, \\
&\|  \nabla_{\mathbf{z}}L(\mathbf{x}^{k+1},\mathbf{z}^{k+1},\mathbf{u}^{k+1})\|\leq\sqrt{\rho(\mathbf{B}^T\mathbf{B})}L_c\| \mathbf{B}\mathbf{z}^{k+1}-\mathbf{B}\mathbf{z}^{k}\|.
\end{align*}
And notice that $ \nabla_{\mathbf{u}}L(\mathbf{x}^{k+1},\mathbf{z}^{k+1},\mathbf{u}^{k+1})=-\mathbf{B}^T\nabla_{\mathbf{z}}L(\mathbf{x}^{k+1},\mathbf{z}^{k+1},\mathbf{u}^{k+1})$, then we get the result.
\end{proof}
\subsection{Proof for Theorem~\ref{thm:xzuConvergenceLocalLipschitz}}
\label{Proof_Thm3.3}
Recall that Theorem~\ref{thm:xzuConvergenceLocalLipschitz} is about linear convergence of the \xzu{} iteration.
To simplify the notation, we define:
\begin{equation}
\mathbf{N}(\mathbf{z}) \coloneqq \mathbf{z}+\frac{1}{\mu}\mathbf{B}^{-T}\nabla g(\mathbf{B}^{-1}\mathbf{z}).
\label{eq:MappingN}
\end{equation}
\begin{prop}
	The \xzu{} iteration~\eqref{eq:Sepxzu_x}-\eqref{eq:Sepxzu_u} satisfies
	\begin{equation}
	\mathbf{N}(\mathbf{B}\mathbf{z}^{k+1})=(\mathbf{I}+\mu\mathbf{K})^{-1}(\mathbf{A}\tilde{\mathbf{x}}+\mu \mathbf{K}\mathbf{B}z^{k}+\frac{1}{\mu}\mathbf{B}^{-T}\nabla g(\mathbf{z}^k)),
	\label{eq:xzuN}
	\end{equation}
	where matrix $\mathbf{K}$ is defined in \eqref{eq:DefinitionHatGAndK}.
	\label{prop:xzuN}
\end{prop}
\begin{proof}
	By \eqref{eq:Sepxzu_u} we have:
	\begin{align}
	&~~\mathbf{B}\mathbf{z}^k-\mathbf{u}^k = \mathbf{A}\mathbf{x}^{k+1}+\mathbf{B}\mathbf{z}^k-\mathbf{B}\mathbf{z}^{k+1}-\mathbf{u}^{k+1}.\nonumber\\
	\overset{\text{by \eqref{eq:Sepxzu_x}}}{\implies}&~~(\mathbf{G}+\mu \mathbf{A}^T\mathbf{A})\mathbf{x}^{k+1} =\mathbf{G}\tilde{\mathbf{x}} +\mu\mathbf{A}^T(\mathbf{A}\mathbf{x}^{k+1}+\mathbf{B}\mathbf{z}^k-\mathbf{B}\mathbf{z}^{k+1}-\mathbf{u}^{k+1})\nonumber\\
	\implies&~~\mathbf{x}^{k+1} =\tilde{\mathbf{x}}+\mu\mathbf{G}^{-1}\mathbf{A}^T(\mathbf{B}\mathbf{z}^k-\mathbf{B}\mathbf{z}^{k+1}-\mathbf{u}^{k+1}) \nonumber\\
	\implies&~~\mathbf{A}\mathbf{x}^{k+1} =\mathbf{A}\tilde{\mathbf{x}}+\mu \mathbf{A}\mathbf{G}^{-1}\mathbf{A}^T(\mathbf{B}\mathbf{z}^k-\mathbf{B}\mathbf{z}^{k+1}-\mathbf{u}^{k+1}).\nonumber\\
	\overset{\text{by~\eqref{eq:Sepxzu_u}}}{\implies}&~~(\mathbf{I}+\mu\mathbf{A}\mathbf{G}^{-1}\mathbf{A}^T)(\mathbf{u}^{k+1}+\mathbf{B}\mathbf{z}^{k+1})=\mathbf{A}\tilde{\mathbf{x}}+\mu \mathbf{A}\mathbf{G}^{-1}\mathbf{A}^T\mathbf{B}\mathbf{z}^k+\mathbf{u}^k.\nonumber\\
	\overset{\text{by~\eqref{eq:16}}}{\implies}&~~(\mathbf{I}+\mu\mathbf{A}\mathbf{G}^{-1}\mathbf{A}^T)(\frac{1}{\mu}\mathbf{B}^{-T}\nabla g (\mathbf{z}^{k+1})+\mathbf{B}\mathbf{z}^{k+1})\nonumber\\
	& \quad = \mathbf{A}\tilde{\mathbf{x}}+\mu \mathbf{A}\mathbf{G}^{-1}\mathbf{A}^T\mathbf{B}\mathbf{z}^k+\frac{1}{\mu}\mathbf{B}^{-T}\nabla g(\mathbf{z}^k).\nonumber
	\end{align}
	From the definitions of $\mathbf{N}$ and $\mathbf{K}$, the last equation above becomes:
	\begin{align*}
	&~~(\mathbf{I}+\mu\mathbf{K})~\mathbf{N}(\mathbf{B}\mathbf{z}^{k+1})=\mathbf{A}\tilde{\mathbf{x}}+\mu\mathbf{K}\mathbf{B}\mathbf{z}^{k}+\frac{1}{\mu}\mathbf{B}^{-T}\nabla g(\mathbf{z}^{k})\\
	\implies&~~\mathbf{N}(\mathbf{B}\mathbf{z}^{k+1})=(\mathbf{I}+\mu\mathbf{K})^{-1}(\mathbf{A}\tilde{\mathbf{x}}+\mu \mathbf{K}\mathbf{B}z^{k}+\frac{1}{\mu}\mathbf{B}^{-T}\nabla g(\mathbf{z}^k)),
	\end{align*}
	which completes the proof.
\end{proof}
Next we show a sufficient condition for the convergence to a stationary point:
\begin{prop}
If the sequence $\{\mathbf{z}^k\}$ converges, then $\{\mathbf{x}^k, \mathbf{z}^k, \mathbf{u}^k\}$ converges to a stationary point defined in~\eqref{eq:StationaryPoint}.
\label{prop:StationaryPointConvergenceXZU}
\end{prop}
\begin{proof}
Suppose $\mathbf{z}^k\rightarrow\mathbf{z}^{*}$. Then by \eqref{eq:ufromz}, $\mathbf{u}^k\rightarrow\mathbf{u}^{*}=\mathbf{B}^{-T}\nabla g(\mathbf{z}^{*})$, which proves $\nabla g(\mathbf{z}^{*})-\mathbf{B}^{T}\mathbf{u}^{*}=0$. By \eqref{eq:Sepxzu_x}, $\mathbf{x}^k\rightarrow\mathbf{x}^{*}$ where
\begin{equation}\label{eq:E1}
\mathbf{x}^{*}=(\mathbf{G}+\mu\mathbf{A}^T\mathbf{A})^{-1}(\mathbf{G}\tilde{\mathbf{x}}+\mu\mathbf{A}^T(\mathbf{B}\mathbf{z}^*+\mathbf{c}-\mathbf{u}^*))
\end{equation}
In \eqref{eq:Sepxzu_u}, let $k\rightarrow\infty$ then we have
\begin{align}\label{eq:E2}
\mathbf{A}\mathbf{x}^{*}-\mathbf{B}\mathbf{z}^{*}=\mathbf{c}
\end{align}
The identity $\nabla f(\mathbf{x}^{*})+\mathbf{A}^T\mathbf{u}^{*}=0$ then follows from \eqref{eq:E1} and \eqref{eq:E2}.
\end{proof}
We now show that $\{\mathbf{z}^k\}$ converge linearly:
\begin{proof}[Proof for Theorem~\ref{thm:xzuConvergenceLocalLipschitz}]
	From~\eqref{eq:xzuN}:
	\begin{equation}
	\label{eq:H1}
	\begin{aligned}
	&\mathbf{N}(\mathbf{B}\mathbf{z}^{k+1})-\mathbf{N}(\mathbf{B}\mathbf{z}^{k})\\
	=~~&(\mathbf{I}+\mu \mathbf{K})^{-1}(\mu\mathbf{K} \mathbf{B}(\mathbf{z}^{k}-\mathbf{z}^{k-1}) +\frac{1}{\mu}\mathbf{B}^{-T}(\nabla g(\mathbf{z}^k)- \nabla g(\mathbf{z}^{k-1}))).
	\end{aligned}
	\end{equation}
	By Proposition~\ref{propG-2},
	$g(\mathbf{z}^{k})\leq T(\mathbf{x}^0,\mathbf{z}^0)+c_1$, $\forall k\in\mathbb{N}$.
	Then by the definition of $c_1$(see Assumption~\ref{assump:initialValue}) and Assumption~\ref{assump:g}(3):
	\begin{align}
	& \|\nabla\hatg(\mathbf{B}\mathbf{z}^{k+1})-\nabla\hatg(\mathbf{B}\mathbf{z}^{k})\|\leq L_c\|\mathbf{B}\mathbf{z}^{k+1}-\mathbf{B}\mathbf{z}^{k}\|, ~~\forall k\in\mathbb{N} \nonumber \\
	\implies &
	\| \mathbf{N}(\mathbf{B}\mathbf{z}^{k+1})-\mathbf{N}(\mathbf{B}\mathbf{z}^{k})\|\geq(1-\frac{L_c}{\mu})\| \mathbf{B}\mathbf{z}^{k+1}-\mathbf{B}\mathbf{z}^k\|. \label{eq:H4}
	\end{align}
	For the right hand side of \eqref{eq:H1}:
	\begin{align*}
	&\|(\mathbf{I}+\mu \mathbf{K})^{-1}(\mu\mathbf{K} \mathbf{B}(\mathbf{z}^{k}-\mathbf{z}^{k-1}) +\frac{1}{\mu}\mathbf{B}^{-T}(\nabla g(\mathbf{z}^k)- \nabla g(\mathbf{z}^{k-1})))\|\\
	\leq~~&\| (\mathbf{I}+\mu \mathbf{K})^{-1}(\mu \mathbf{K}\mathbf{B}(\mathbf{z}^{k}-\mathbf{z}^{k-1})\| +\|\frac{1}{\mu}(\mathbf{I}+\mu\mathbf{K})^{-1}\mathbf{B}^{-T}(\nabla g(\mathbf{z}^{k})-\nabla g(\mathbf{z}^{k-1})) \|.
	\end{align*}
	By the spectral mapping theorem:\begin{align}\label{eq:H6}
	&\| (\mathbf{I}+\mu\mathbf{K})^{-1}(\mu\mathbf{K})\|=\rho\left( (\mathbf{I}+\mu\mathbf{K})^{-1}(\mu\mathbf{K})\right)=\frac{\mu\rho(\mathbf{K})}{1+\mu\rho(\mathbf{K})}.
	\end{align}
	And notice that $\mathbf{K}$ is positive semi-definite:
	\begin{equation}\label{eq:H7}
	\|\frac{1}{\mu}(\mathbf{I}+\mu\mathbf{K})^{-1}\mathbf{B}^{-T}(\nabla g(\mathbf{z}^{k})-\nabla g(\mathbf{z}^{k-1}))) \|\leq \frac{L_c}{\mu}\| \mathbf{B}\mathbf{z}^{k}-\mathbf{B}\mathbf{z}^{k-1} \|.
	\end{equation}
	Combine \eqref{eq:H6} with \eqref{eq:H7}:\begin{align}\label{eq:H8}
	&\|(\mathbf{I}+\mu\mathbf{K})^{-1}(\mu \mathbf{K}(\mathbf{B}\mathbf{z}^{k}-\mathbf{B}\mathbf{z}^{k-1})+(\frac{1}{\mu}(\mathbf{B}^{-T}\nabla g(\mathbf{z}^k)-\mathbf{B}^{-T}\nabla g( \mathbf{z}^{k-1})))\| \notag\\
	&\leq (\frac{\mu\rho(\mathbf{K})}{1+\mu\rho(\mathbf{K})}+\frac{L_c}{\mu})\| \mathbf{B}\mathbf{z}^{k}-\mathbf{B}\mathbf{z}^{k-1}\|.
	\end{align}
	By \eqref{eq:H4} and \eqref{eq:H8} we have:\begin{align}\label{eq:H9}
	\| \mathbf{B}\mathbf{z}^{k+1}-\mathbf{B}\mathbf{z}^{k}\|\leq\frac{\frac{\mu\rho(\mathbf{K})}{1+\mu\rho(\mathbf{K})}+\frac{L_c}{\mu}}{1-\frac{L_c}{\mu}}\| \mathbf{B}\mathbf{z}^k-\mathbf{B}\mathbf{z}^{k-1}\|.
	\end{align}
	If $\mu>\max\left\{\frac{1}{\frac{1}{2L_c}-\rho(\mathbf{K})},\frac{1}{L_c}\right\}$ then $\gamma_1<1$, which completes the proof.
\end{proof}

\subsection{Proof for Theorem~\ref{thm:GeneralConvergenceZXU}}
\label{Proof_Thm3.6}
Theorem~\ref{thm:GeneralConvergenceZXU} is about general convergence of the \xzu{} iteration.
We first prove Proposition~\ref{prop:i-1} that defines the value $\eta$.
\begin{proof}[Proof for Proposition~\ref{prop:i-1}]
	By the definition of $\mathbf{K}$ in \eqref{eq:DefinitionHatGAndK}, we know that $\mathbf{K}(R(\mathbf{A}))\subset R(\mathbf{A})$. Since $R(\mathbf{A})$ is a linear subspace and $\mathbf{K}$ is a linear operator, for the proof it suffices to show
	$\text{ker}(\mathbf{K})\cap R(\mathbf{A})=\{0\}$,
	where $\text{ker}(\mathbf{K})$ is the kernel of $\mathbf{K}$. Now assume $\mathbf{y}\in\text{ker}(\mathbf{K})$, then for any $\mathbf{z}\in \mathbb{R}^{q}$ where $q$ is the number of rows in matrix $\mathbf{A}$, we have:
	\begin{align*}
	\langle \mathbf{A}\mathbf{G}^{-1}\mathbf{A}^T\mathbf{y},\mathbf{z}\rangle=0
	&~~\implies\langle\mathbf{G}^{-1}\mathbf{A}^T\mathbf{y}, \mathbf{A}^T\mathbf{z}\rangle=0 \\
	&~~\implies\langle\mathbf{G}^{-1}\mathbf{A}^T\mathbf{y}, \mathbf{A}^T\mathbf{y}\rangle=0 \quad \text{(take $\mathbf{z}=\mathbf{y}$)}.
	\end{align*}
	Notice that $\mathbf{G}^{-1}$ is positive definite, so we have $\mathbf{A}^T\mathbf{y}=0$,
	which is equivalent to $\mathbf{y}\perp R(\mathbf{A})$. Hence we get $\text{ker}(\mathbf{K})\cap R(\mathbf{A})=\{0\}$, which completes the proof.
\end{proof}
The next proposition provides a characterization of $\mathbf{u}^{k+1}$:
\begin{prop}
	The \zxu{} iteration~\eqref{eq:Sepzxu_z}-\eqref{eq:Sepzxu_u} satisfies:
	\begin{align}\label{eq:H123}
	\mathbf{u}^{k+1}=\mathbf{A}\mathbf{x}^{k+1}-\mathbf{A}\mathbf{x}^{k}+\frac{1}{\mu}\mathbf{B}^{-T}\nabla g(\mathbf{z}^{k+1}).
	\end{align}
\end{prop}
\begin{proof}
	From \eqref{eq:Sepzxu_u}:
	\begin{align}\label{eq:H13}
	\mathbf{u}^{k+1}-\mathbf{A}\mathbf{x}^{k+1}&=\mathbf{u}^k-\mathbf{B}\mathbf{z}^{k+1}.
	\end{align}
	From \eqref{eq:xfromu}:
		\begin{align}
		&~~\mathbf{A}\mathbf{x}^{k}+\mathbf{u}^k=\mathbf{A}\tilde{\mathbf{x}}-\mu\mathbf{K}\mathbf{u}^k+\mathbf{u}^k \nonumber \\
		\overset{\text{by~\eqref{eq:Sepzxu_z}}}{\implies} &~~\mathbf{B}\mathbf{z}^{k+1}+\frac{1}{\mu}\mathbf{B}^{-T}\nabla g(\mathbf{z}^{k+1})= \mathbf{A}\mathbf{x}^{k}+\mathbf{u}^k\nonumber\\
		\implies &~~ \frac{1}{\mu}\mathbf{B}^{-T}\nabla g(\mathbf{z}^{k+1})= \mathbf{A}\mathbf{x}^{k}+\mathbf{u}^k-\mathbf{B}\mathbf{z}^{k+1}. \label{eq:H14}
		\end{align}
	Combine \eqref{eq:H13} with \eqref{eq:H14} then we can get the result.
\end{proof}
Now we are able to bound both $\|\mathbf{u}^k\|$ and $\|\mathbf{u}^{k+1}-\mathbf{u}^k\|$:
\begin{prop}
	\label{propi-3}
	For \zxu{} iteration~\eqref{eq:Sepzxu_z}-\eqref{eq:Sepzxu_u} and $k\geq 1$ we have:
	{
	\squeezemath{2.5}
	\begin{align}
	&\|\mathbf{u}^k\|^2\leq\frac{4}{\eta^2\mu^2}\|\mathbf{A}\mathbf{x}^k-\mathbf{A}\mathbf{\tilde{x}}\|^2 +(\frac{4\rho(\mathbf{K})^2}{\mu^2\eta^2}+\frac{2}{\mu^2})\|\mathbf{B}^{-T}\nabla g(\mathbf{z}^{k})\|^2,\label{eq:H15}\\
	&\|\mathbf{u}^{k+1}-\mathbf{u}^k\|^2\leq\frac{4}{\mu^2\eta^2}\|\mathbf{A}\mathbf{x}^{k+1}-\mathbf{A}\mathbf{x}^{k}\|^2\notag\\
	&\hspace*{6em} +(\frac{4\rho(\mathbf{K})^2}{\mu^2\eta^2}+\frac{2}{\mu^2})\|\mathbf{B}^{-T}(\nabla g(\mathbf{z}^{k+1})-\nabla g(\mathbf{z}^{k}))\|^2.  \label{eq:H16}
	\end{align}
	}
\end{prop}
\begin{proof}
	To prove \eqref{eq:H15}, note that from \eqref{eq:H123}:\begin{align}\label{eq:H17}
	\|\mathbf{u}^{k}\|^2\leq 2\|\mathbf{A}\mathbf{x}^{k}-\mathbf{A}\mathbf{x}^{k-1}\|^2+\frac{2}{\mu^2}\|\mathbf{B}^{-T}\nabla g(\mathbf{z}^{k}) \|^2.
	\end{align}
	And from \eqref{eq:xfromu}:
	\begin{align}
	&~~\mathbf{A}\mathbf{x}^{k+1}=\mathbf{A}\mathbf{\tilde{x}}-\mu\mathbf{K}\mathbf{u}^{k+1} \label{eq:H27}\\
	\overset{\text{by \eqref{eq:H123}}}{\implies}&~~\mathbf{A}\mathbf{x}^{k+1}=\mathbf{A}\mathbf{\tilde{x}}-\mu\mathbf{K}(\mathbf{A}\mathbf{x}^{k+1}-\mathbf{A}\mathbf{x}^k)-\mathbf{K}\mathbf{B}^{-T}\nabla g(\mathbf{z}^{k+1})\nonumber\\
	\implies&~~\mathbf{A}\mathbf{x}^{k+1}-\mathbf{A}\mathbf{\tilde{x}}+\mathbf{K}\mathbf{B}^{-T}\nabla g(\mathbf{z}^{k+1})=-\mu\mathbf{K}(\mathbf{A}\mathbf{x}^{k+1}-\mathbf{A}\mathbf{x}^k).\label{eq:H29}
	\end{align}
	Hence by Proposition~\ref{prop:i-1}:
	\begin{equation*}
	\mu^2\eta^2\|\mathbf{A}\mathbf{x}^{k+1}-\mathbf{A}\mathbf{x}^k\|^2\leq2\|\mathbf{A}\mathbf{x}^{k+1}-\mathbf{A}\mathbf{\tilde{x}}\|^2+2\rho(\mathbf{K})^2\|\mathbf{B}^{-T}\nabla g(\mathbf{z}^{k+1})\|^2,
	\end{equation*}
	and \eqref{eq:H15} follows from this equation and \eqref{eq:H17}. For \eqref{eq:H16}, from \eqref{eq:H123}:
	\begin{align}
	& \mathbf{u}^{k+1}-\mathbf{u}^k=\mathbf{A}(\mathbf{x}^{k+1}-2\mathbf{x}^{k}+\mathbf{x}^{k-1})+\frac{1}{\mu}\mathbf{B}^{-T}(\nabla g(\mathbf{z}^{k+1})-\nabla g(\mathbf{z}^{k}))\nonumber\\
	\implies& \|\mathbf{u}^{k+1}-\mathbf{u}^k\|^2 \leq2\|\mathbf{A}(\mathbf{x}^{k+1}-2\mathbf{x}^{k}+\mathbf{x}^{k-1})\|^2 \nonumber\\
	&\qquad \qquad  \qquad \; \; \, +\frac{2}{\mu^2}\|\mathbf{B}^{-T}(\nabla g(\mathbf{z}^{k+1})-\nabla g(\mathbf{z}^{k})) \|^2. \label{eq:H32}
	\end{align}
	And by \eqref{eq:H29}:
	\begin{equation*}
		\mathbf{A}\mathbf{x}^{k+1}-\mathbf{A}\mathbf{x}^k+\mathbf{K}\mathbf{B}^{-T}(\nabla g(\mathbf{z}^{k+1})-\nabla g(\mathbf{z}^{k}))=-\mu\mathbf{K}\mathbf{A}(\mathbf{x}^{k+1}-2\mathbf{x}^{k}+\mathbf{x}^{k-1}).
	\end{equation*}
	Hence:
	\begin{equation}
	\begin{aligned}
	& \; \mu^2\eta^2\|\mathbf{A}(\mathbf{x}^{k+1}-2\mathbf{x}^{k}+\mathbf{x}^{k-1})\|^2\\
	\leq & \; 2\|\mathbf{A}\mathbf{x}^{k+1}-\mathbf{A}\mathbf{x}^k\|^2 + 2\rho(\mathbf{K})^2\|\mathbf{B}^{-T}(\nabla g(\mathbf{z}^{k+1})-\nabla g(\mathbf{z}^{k})) \|^2.
	\end{aligned}
	\label{eq:H34}
	\end{equation}
	Then \eqref{eq:H16} follows from \eqref{eq:H32} and \eqref{eq:H34}.
\end{proof}
Similar to Proposition~\ref{propG-2}, we can prove:\begin{prop}
	\label{propI-4}
	Suppose Assumptions~\ref{assump:g} and \ref{initial_zxu} hold, and $\mu$ is sufficiently large. The the \zxu{} iteration satisfies:
	\begin{equation}
	T(\mathbf{x}^k,\mathbf{z}^{k})\leq T(\mathbf{x}^0,\mathbf{z}^0)+c_2+c_3,~~~ L(\mathbf{x}^{k},\mathbf{z}^{k},\mathbf{u}^{k})\leq T(\mathbf{x}^0,\mathbf{z}^0)+c_3.
	\label{eq:H35}
	\end{equation}
\end{prop}
\begin{proof}
	We will prove this by induction. For $k=0$ this is trivial, now assume \eqref{eq:H35} holds for every $k\leq l$. Consider $k=l+1$. By the definition of $\mathbf{z}^{l+1}$ in \eqref{eq:Sepzxu_z}:
	\begin{equation}
	g(\mathbf{z}^{l+1})+\frac{\mu}{2}\| \mathbf{A}\mathbf{x}^{l}-\mathbf{B}\mathbf{z}^{l+1}+\mathbf{u}^l\|^2 \leq g(\mathbf{z}^{l})+\frac{\mu}{2}\| \mathbf{A}\mathbf{x}^{l}-\mathbf{B}\mathbf{z}^{l}+\mathbf{u}^l\|^2.
	\label{eq:H36}
	\end{equation}
	By the definition of $\mathbf{x}^{l+1}$ in \eqref{eq:Sepzxu_z}:
	\begin{equation}
	f(\mathbf{x}^{l+1})+\frac{\mu}{2}\| \mathbf{A}\mathbf{x}^{l+1}-\mathbf{B}\mathbf{z}^{l+1}+\mathbf{u}^l\|^2 \leq f(\mathbf{x}^l)+\frac{\mu}{2}\| \mathbf{A}\mathbf{x}^{l}-\mathbf{B}\mathbf{z}^{l+1}+\mathbf{u}^l\|^2.
	\label{eq:H37}
	\end{equation}
	add \eqref{eq:H37} to \eqref{eq:H36}:
	\begin{equation*}
	T(\mathbf{x}^{l+1},\mathbf{z}^{l+1})\leq L(\mathbf{x}^{l},\mathbf{z}^l,\mathbf{u}^l)+\frac{\mu}{2}\|\mathbf{u}^l\|^2.
	\end{equation*}
	By induction:
	\begin{equation*}
	L(\mathbf{x}^{l},\mathbf{z}^l,\mathbf{u}^l)\leq T(\mathbf{x}^0,\mathbf{z}^0)+c_3.
	\end{equation*}
	Since $l+1\geq1$, by Proposition~\ref{propi-3}:
	\begin{equation*}
	\frac{\mu}{2}\|\mathbf{u}^l\|^2\leq\frac{2}{\eta^2\mu}\|\mathbf{A}\mathbf{x}^l-\mathbf{A}\mathbf{\tilde{x}}\|^2 +(\frac{2\rho(\mathbf{K})^2}{\mu\eta^2}+\frac{1}{\mu})\|\mathbf{B}^{-T}\nabla g(\mathbf{z}^{l})\|^2.
	\end{equation*}
	By induction:\begin{equation*}
	T(\mathbf{x}^l,\mathbf{z}^l)\leq T(\mathbf{x}^0,\mathbf{z}^0)+c_2+c_3\leq T(\mathbf{x}^0,\mathbf{z}^0)+1.
	\end{equation*}
	By the definition of $c_2$,
	$\frac{\mu}{2}\|\mathbf{u}^l\|^2\leq c_2$.
	Hence:
	\begin{equation*}
	T(\mathbf{x}^{l+1},\mathbf{z}^{l+1})\leq  T(\mathbf{x}^0,\mathbf{z}^0)+c_2+c_3,
	\end{equation*}
	which proves the first part.
	For the second part, we first prove that the conclusion holds for $l=0$ ($k=1$). From the first part we know:
	\begin{equation*}
	T(\mathbf{x}^1,\mathbf{z}^1)\leq T(\mathbf{x}^0,\mathbf{z}^0)+c_2+c_3.
	\end{equation*}
	Notice that $f(\mathbf{x}^1)\geq0$ so we have $g(\mathbf{z}^1)\leq T(\mathbf{x}^0,\mathbf{z}^0)+c_2+c_3$. Hence by Proposition~\ref{prop:GeneralProof1}:
	\begin{equation*}
	L(\mathbf{x}^0,\mathbf{z}^1,\mathbf{u}^0)\leq L(\mathbf{x}^0,\mathbf{z}^0,\mathbf{u}^0)-\frac{\mu-L_d}{2}\|\mathbf{B}\mathbf{z}^1-\mathbf{B}\mathbf{z}^0\|^2.
	\end{equation*}
	And by Assumption~\ref{assump:SPDG}:
	\begin{equation*}
	L(\mathbf{x}^1,\mathbf{z}^1,\mathbf{u}^0)\leq L(\mathbf{x}^0,\mathbf{z}^1,\mathbf{u}^0)-\frac{\mu}{2}\|\mathbf{A}\mathbf{x}^1-\mathbf{A}\mathbf{x}^0\|^2.
	\end{equation*}
	Moreover, we have:
	\begin{equation*}
	L(\mathbf{x}^1,\mathbf{z}^1,\mathbf{u}^1)=L(\mathbf{x}^1,\mathbf{z}^1,\mathbf{u}^0)+\mu\|\mathbf{u}^1-\mathbf{u}^0\|^2
	\end{equation*}
	By \eqref{eq:H15} and $\mathbf{u}^0=0$:
	\begin{align*}
	\mu\|\mathbf{u}^1-\mathbf{u}^0\|^2&=\mu\|\mathbf{u}^1\|^2\\
	&=\frac{4}{\eta^2\mu}\|\mathbf{A}\mathbf{x}^1-\mathbf{A}\mathbf{\tilde{x}}\|^2 +(\frac{4\rho(\mathbf{K})^2}{\mu\eta^2}+\frac{2}{\mu})\|\mathbf{B}^{-T}\nabla g(\mathbf{z}^{1})\|^2.
	\end{align*}
	Moreover, we have:\begin{align*}
	\|\mathbf{B}^{-T}\nabla g(\mathbf{z}^{1})\|^2&\leq 2\|\mathbf{B}^{-T}\nabla g(\mathbf{z}^{1})-\mathbf{B}^{-T}\nabla g(\mathbf{z}^{0}) \|^2+2\|\mathbf{B}^{-T}\nabla g(\mathbf{z}^{0})\|^2\\
	&\leq2L_d\|\mathbf{B}\mathbf{z}^1-\mathbf{B}\mathbf{z}^0\|^2+2\|\mathbf{B}^{-T}\nabla g(\mathbf{z}^{0})\|^2.
	\end{align*}
	So if $\frac{\mu}{2}\geq\frac{4}{\eta^2\mu}$ and $\frac{\mu-L_d}{2}\geq2L_d(\frac{4\rho(\mathbf{K})^2}{\mu\eta^2}+\frac{2}{\mu})$, then we have:\begin{align*}
	L(\mathbf{x}^1,\mathbf{z}^1,\mathbf{u}^1)&\leq L(\mathbf{x}^0,\mathbf{z}^0,\mathbf{u}^0)+(\frac{8\rho(\mathbf{K})^2}{\mu\eta^2}+\frac{4}{\mu})\|\mathbf{B}^{-T}\nabla g(\mathbf{z}^{0})\|^2\\
	&=T(\mathbf{x}^0,\mathbf{z}^0)+(\frac{8\rho(\mathbf{K})^2}{\mu\eta^2}+\frac{4}{\mu})\|\mathbf{B}^{-T}\nabla g(\mathbf{z}^{0})\|^2.
	\end{align*}
	By the definition of $c_3$ we have
	$L(\mathbf{x}^1,\mathbf{z}^1,\mathbf{u}^1)\leq T(\mathbf{x}^0,\mathbf{z}^0)+c_3$.
	Now suppose $l\geq1$. Similar to the proof of the case $l=0$ we have:\begin{align*}
	L(\mathbf{x}^l,\mathbf{z}^{l+1},\mathbf{u}^{l})&\leq L(\mathbf{x}^l,\mathbf{z}^l,\mathbf{u}^l)-\frac{\mu-L_d}{2}\|\mathbf{B}\mathbf{z}^{l+1}-\mathbf{B}\mathbf{z}^l\|^2,\\
	L(\mathbf{x}^{l+1},\mathbf{z}^{l+1},\mathbf{u}^l)&\leq L(\mathbf{x}^l,\mathbf{z}^{l+1},\mathbf{u}^l)-\frac{\mu}{2}\|\mathbf{A}\mathbf{x}^{l+1}-\mathbf{A}\mathbf{x}^l\|^2,\\
	L(\mathbf{x}^{l+1},\mathbf{z}^{l+1},\mathbf{u}^{l+1})&= L(\mathbf{x}^{l+1},\mathbf{z}^{l+1},\mathbf{u}^l)+\mu\|\mathbf{u}^{l+1}-\mathbf{u}^l\|^2.
	\end{align*}
	By \eqref{eq:H16} we have:
	\begin{align*}
	\mu\|\mathbf{u}^{l+1}-\mathbf{u}^l\|^2&\leq \frac{4}{\mu\eta^2}\|\mathbf{A}\mathbf{x}^{l+1}-\mathbf{A}\mathbf{x}^{l}\|^2\\
	&+(\frac{4\rho(\mathbf{K})^2}{\mu\eta^2}+\frac{2}{\mu})\|\mathbf{B}^{-T}(\nabla g(\mathbf{z}^{l+1})-\nabla g(\mathbf{z}^{l}))\|^2.
	\end{align*}
	Since $g(\mathbf{z}^l),g(\mathbf{z}^{l+1})\leq T(\mathbf{x}^0,\mathbf{z}^0)+c_2+c_3$, by the definition of $L_d$:
	\begin{equation}\label{eq:H53}
	\|\mathbf{B}^{-T}(\nabla g(\mathbf{z}^{l+1})-\nabla g(\mathbf{z}^{l}))\|\leq L_d\|\mathbf{B}\mathbf{z}^{l+1}-\mathbf{B}\mathbf{z}^l\|.
	\end{equation}
	Hence we have:\begin{align*}
	\mu\|\mathbf{u}^{l+1}-\mathbf{u}^l\|^2&\leq \frac{4}{\mu\eta^2}\|\mathbf{A}\mathbf{x}^{l+1}-\mathbf{A}\mathbf{x}^{l}\|^2\\
	&+(\frac{4\rho(\mathbf{K})^2L_d^2}{\mu\eta^2}+\frac{2L_d^2}{\mu})\|\mathbf{B}\mathbf{z}^{l+1}-\mathbf{B}\mathbf{z}^l\|^2.
	\end{align*}
	If $\frac{\mu}{2}\geq\frac{4}{\eta^2\mu}$ and $\frac{\mu-L_d}{2}\geq(\frac{4\rho(\mathbf{K})^2L_d^2}{\mu\eta^2}+\frac{2L_d^2}{\mu})$, then we have:\begin{align}\label{eq:H55}
	L(\mathbf{x}^{l+1},\mathbf{z}^{l+1},\mathbf{u}^{l+1})\leq L(\mathbf{x}^l,\mathbf{z}^{l},\mathbf{u}^{l})\leq T(\mathbf{x}^0,\mathbf{z}^0)+c_3
	\end{align}
	which completes the proof.
\end{proof}

Similar to the proof of Theorem~\ref{thm:GeneralConvergenceXZU}, we need to show (P1)-(P4) hold for \zxu{} iteration. Sufficient descent has already been shown in the proof of Proposition~\ref{propI-4}. The remaining part is the same as the proof of Theorem~\ref{thm:GeneralConvergenceXZU} so we omit it.

\subsection{Proof for Theorem~\ref{thm:zxuConvergenceLocalLipschitz}}
\label{Proof_Thm3.4}
Theorem~\ref{thm:zxuConvergenceLocalLipschitz} is about linear convergence of the \zxu{} iteration.
Similar to Proposition~\ref{prop:StationaryPointConvergenceXZU}, for the convergence of the \zxu{} iteration to a stationary point, it suffices to show that the sequence $\{\mathbf{u}^k\}$ converges.
%
Then for the main proof:
\begin{proof}[Proof for Theorem~\ref{thm:zxuConvergenceLocalLipschitz}]
	By \eqref{eq:H14}:
	\begin{align*}
	& \mathbf{B}\mathbf{z}^{k+1}+\frac{1}{\mu}\mathbf{B}^{-T}\nabla g(\mathbf{z}^{k+1})= \mathbf{A}\tilde{\mathbf{x}}-\mathbf{v}^{k}\\
	\implies & \mathbf{B}(\mathbf{z}^{k+1}-\mathbf{z}^{k})+\frac{1}{\mu}\mathbf{B}^{-T}(\nabla g(\mathbf{z}^{k+1})-\nabla g(\mathbf{z}^{k}))=-(\mathbf{v}^{k}-\mathbf{v}^{k-1}).
	\end{align*}
	By \eqref{eq:H53}:
	\begin{equation*}
	(1-\frac{L_d}{\mu})\|\mathbf{B} \mathbf{z}^{k+1}-\mathbf{B} \mathbf{z}^{k} \|\leq\|\mathbf{v}^{k}-\mathbf{v}^{k-1}\|.
	\end{equation*}
	Hence we have:
	\begin{align}
	\frac{1}{\mu}\|\mathbf{B}^{-T}(\nabla g(\mathbf{z}^{k+1})-\nabla g(\mathbf{z}^{k}))\|&\leq \frac{L_d}{\mu}\|\mathbf{B}\mathbf{z}^{k+1}-\mathbf{B}\mathbf{z}^k\|\notag\\
	&\leq\frac{L_d}{\mu-L_d}\|\mathbf{v}^{k}-\mathbf{v}^{k-1}\|. \label{eq:H60}
	\end{align}
	By \eqref{eq:H123} and \eqref{eq:H27}:
	\begin{align*}
	&(\mathbf{I}+\mu\mathbf{K})\mathbf{u}^{k+1}=\mu\mathbf{K}\mathbf{u}^k+\frac{1}{\mu}\mathbf{B}^{-T}\nabla g(\mathbf{z}^{k+1})\\
	\implies &
	\mathbf{v}^{k+1}=(\mathbf{I}+\mu\mathbf{K})^{-1}\mu\mathbf{K}\mathbf{v}^k+(\mathbf{I}+\mu\mathbf{K})^{-1}(\mathbf{I}-\mu\mathbf{K})\frac{1}{\mu}\mathbf{B}^{-T}\nabla g(\mathbf{z}^{k+1}).
	\end{align*}
	Hence we have:
	\begin{align*}
	\|\mathbf{v}^{k+1}-\mathbf{v}^{k}\|&\leq \|(\mathbf{I}+\mu\mathbf{K})^{-1}\mu\mathbf{K}(\mathbf{v}^{k}-\mathbf{v}^{k-1})\|\\
	&+\frac{1}{\mu}\|(\mathbf{I}+\mu\mathbf{K})^{-1}(\mathbf{I}-\mu\mathbf{K})\mathbf{B}^{-T}(\nabla g(\mathbf{z}^{k+1})-\nabla g(\mathbf{z}^{k}))\|.
	\end{align*}
	Similar to \eqref{eq:H6} we have:
	\begin{align*}
	&~~\|(\mathbf{I}+\mu\mathbf{K})^{-1}\mu\mathbf{K}(\mathbf{v}^{k}-\mathbf{v}^{k-1})\|\leq\frac{\mu\rho(\mathbf{K})}{1+\mu\rho(\mathbf{K})}\|\mathbf{v}^{k}-\mathbf{v}^{k-1}\|\\
	\overset{\text{by \eqref{eq:H60}}}{\implies} &~~\|\mathbf{v}^{k+1}-\mathbf{v}^{k}\|\leq(\frac{\mu\rho(\mathbf{K})}{1+\mu\rho(\mathbf{K})}+\frac{L_d}{\mu-L_d} )\|\mathbf{v}^{k}-\mathbf{v}^{k-1}\|.
	\end{align*}
	Then let $\mu>\max\left\{\frac{2}{\frac{1}{L_d}-\rho(\mathbf{K})},\frac{1}{L_d}\right\}$ and we get the result.
\end{proof}
\subsection{Sketch of proofs for Theorems~\ref{thm:xzuConvergenceGlobalLipschitz} and \ref{thm:zxuConvergenceGlobalLipschitz}}
\label{Proof_Thm3.1_3.2}
For the proof of Theorem~\ref{thm:xzuConvergenceGlobalLipschitz}, the derivation can start from \eqref{eq:H4} without assumptions on the initial values. The rest of the proofs is the same as the proof of Theorem~\ref{thm:xzuConvergenceLocalLipschitz}.

For the proof of Theorem~\ref{thm:zxuConvergenceGlobalLipschitz}, the derivation of \eqref{eq:H53} does not rely on the initial value. The rest is the same.

\end{document}


\title{Supplementary Material for\\ \textit{Accelerating ADMM for Efficient Simulation and Optimization}}
\authorsaddresses{}
\maketitle
\thispagestyle{empty}

\section{Background}

In this supplementary material, we will verify the linear convergence theorems with a target function $g$ that is locally Lipschitz differentiable (Theorems 3.3 and 3.4). We will use ADMM to solve the following optimization problem from~\cite{Overby2017} for physical simulation:
\begin{equation}
\min_{\mathbf{x},\mathbf{z}} ~~ f(\mathbf{x}) + g(\mathbf{z})\quad
\textrm{s.t.} ~~ \mathbf{W}(\mathbf{z} - \mathbf{D} \mathbf{x}) = 0,
\label{eq:OverbyProblem}
\end{equation}
Here $\mathbf{x}$ is the node positions of the discretized object. $f(\mathbf{x})$ is a momentum energy of the form
\[
f(\mathbf{x}) = \frac{1}{2} (\mathbf{x} - \tilde{\mathbf{x}})^T \mathbf{G} (\mathbf{x} - \tilde{\mathbf{x}}),
\]
with $\tilde{\mathbf{x}}$ being a constant vector, and $\mathbf{G}$ being a scaled mass matrix. $\mathbf{D} \mathbf{x}$ collects the deformation gradient of each element. $\mathbf{W}$ is a diagonal scaling matrix that improves conditioning. $g(\mathbf{z})$ is an elastic potential energy. Compared to the following form of optimization problems discussed in our paper:
\begin{equation}
\min_{\mathbf{x},\mathbf{z}} ~~ f(\mathbf{x}) + g(\mathbf{z})\quad
\textrm{s.t.} ~~ \mathbf{A}\mathbf{x} - \mathbf{B} \mathbf{z} = \mathbf{c},
\label{eq:SeparableProblem}
\end{equation}
we can see that problems~\ref{eq:OverbyProblem} and \ref{eq:SeparableProblem} are equivalent if $\mathbf{A} =\mathbf{W}\mathbf{D}, \mathbf{B} = \mathbf{W}, \mathbf{c} = \mathbf{0}$.
In this report, we assume the simulation object to be a tetrahedral mesh and use the following potential energy:
\[
	g(\mathbf{z})=\sum_{i=1}^{n_t}v_i\psi(\mathbf{F}^i),
\]
where $v_i$ is the volume of each tetrahedron, and $\mathbf{F}^i \in \mathbb{R}^{3 \times 3}$ is its deformation gradient with respect ot the rest shape, and $\psi$ is the strain energy density function of StVK  materials~\cite{Sifakis2012}:
\begin{align}
\label{1}
\psi(\mathbf{F})=\lambda_1 \mathbf{E}:\mathbf{E}+\frac{1}{2}\lambda_2 tr^2(\mathbf{E}),
\end{align}
where $\lambda_1, \lambda_2$ are given material parameters, and
\[
\mathbf{E} = \frac{1}{2}(\mathbf{F}^T\mathbf{F}-\mathbf{I}) \in \mathbb{R}^{3 \times 3}
\]
and $\mathbf{I}$ is the identity matrix.

The linear convergence theorems we will verify require a local Lipschitz constant for the gradient of $g$. Thus we need to analyze the Lipschitz differentiability of $\psi$. The gradient of $\psi$ is the first Piola-Kirchhoff stress tensor:
\begin{align}
\label{3}
\mathbf{P}(\mathbf{F}) = \mathbf{F}\mathbf{S}(\mathbf{F}).
\end{align}
where $\mathbf{S}$ is the second Piola-Kirchhoff stress tensor:
\[
	\mathbf{S}(\mathbf{F}) = 2\lambda_1 \mathbf{E}+\lambda_2 tr(\mathbf{E})\mathbf{I},
\]
Note that the value of $\mathbf{E}$ depends on $\mathbf{F}$. In the following, we will denote $\mathbf{E}_j = \mathbf{E}(\mathbf{F}_j)$ for a subscript $j$. Throughout this report, for a matrix $\mathbf{A}$, $\|\mathbf{A}\|$ denotes its Frobenius norm and $\|\mathbf{A}\|_2$ denotes its $l_2$ norm.
Our task is to estimate the Lipschitz constant of $\mathbf{P}(\mathbf{F})$ with respect to $\mathbf{F}$.
\begin{prop}
\label{prop1}
$\|\mathbf{P}(\mathbf{F}_1)-\mathbf{P}(\mathbf{F}_2)\|\leq((\lambda_1+\frac{\sqrt{3}}{2}\lambda_2)\|\mathbf{F}_1\|(\|\mathbf{F}_1\|+\|\mathbf{F}_2\|)+(2\lambda_1+3\lambda_2)\|\mathbf{E}_2\|)\|\mathbf{F}_1-\mathbf{F}_2 \|.$
\end{prop}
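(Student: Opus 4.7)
The overall strategy is a telescoping decomposition of $\mathbf{P}$ followed by careful bounds on $\mathbf{S}$ and $\mathbf{E}$. Writing
\[
\mathbf{P}(\mathbf{F}_1)-\mathbf{P}(\mathbf{F}_2) = \mathbf{F}_1\bigl(\mathbf{S}(\mathbf{F}_1)-\mathbf{S}(\mathbf{F}_2)\bigr) + (\mathbf{F}_1-\mathbf{F}_2)\mathbf{S}(\mathbf{F}_2),
\]
the triangle inequality together with submultiplicativity of the Frobenius norm $\|\mathbf{A}\mathbf{B}\|\leq\|\mathbf{A}\|\|\mathbf{B}\|$ reduces the task to two estimates: a bound on $\|\mathbf{S}(\mathbf{F}_1)-\mathbf{S}(\mathbf{F}_2)\|$, which when multiplied by $\|\mathbf{F}_1\|$ is intended to produce the first summand of the target; and a bound on $\|\mathbf{S}(\mathbf{F}_2)\|$, which when multiplied by $\|\mathbf{F}_1-\mathbf{F}_2\|$ produces the second summand.

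For the first estimate I would expand $\mathbf{S}(\mathbf{F}_1)-\mathbf{S}(\mathbf{F}_2) = 2\lambda_1(\mathbf{E}_1-\mathbf{E}_2) + \lambda_2\operatorname{tr}(\mathbf{E}_1-\mathbf{E}_2)\mathbf{I}$ and handle the two pieces separately. The identity
\[
\mathbf{E}_1-\mathbf{E}_2 = \tfrac{1}{2}\bigl(\mathbf{F}_1^T(\mathbf{F}_1-\mathbf{F}_2) + (\mathbf{F}_1-\mathbf{F}_2)^T\mathbf{F}_2\bigr)
\]
combined with submultiplicativity yields $\|\mathbf{E}_1-\mathbf{E}_2\|\leq\tfrac{1}{2}(\|\mathbf{F}_1\|+\|\mathbf{F}_2\|)\|\mathbf{F}_1-\mathbf{F}_2\|$, which will contribute the $\lambda_1$ part of the first coefficient. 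For the trace piece, the key trick is to evaluate the trace \emph{before} taking a norm: since $\operatorname{tr}(\mathbf{F}^T\mathbf{F})=\|\mathbf{F}\|^2$,
\[
\operatorname{tr}(\mathbf{E}_1-\mathbf{E}_2) = \tfrac{1}{2}\bigl(\|\mathbf{F}_1\|^2-\|\mathbf{F}_2\|^2\bigr) = \tfrac{1}{2}(\|\mathbf{F}_1\|+\|\mathbf{F}_2\|)(\|\mathbf{F}_1\|-\|\mathbf{F}_2\|),
\]
and the reverse triangle inequality $\bigl|\|\mathbf{F}_1\|-\|\mathbf{F}_2\|\bigr|\leq\|\mathbf{F}_1-\mathbf{F}_2\|$ together with $\|\mathbf{I}\|=\sqrt{3}$ gives the $\tfrac{\sqrt{3}}{2}\lambda_2$ part. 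Summing,
\[
\|\mathbf{S}(\mathbf{F}_1)-\mathbf{S}(\mathbf{F}_2)\| \leq \Bigl(\lambda_1+\tfrac{\sqrt{3}}{2}\lambda_2\Bigr)(\|\mathbf{F}_1\|+\|\mathbf{F}_2\|)\|\mathbf{F}_1-\mathbf{F}_2\|.
\]

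The bound on $\|\mathbf{S}(\mathbf{F}_2)\|$ is then routine: the triangle inequality, the Cauchy--Schwarz estimate $|\operatorname{tr}\mathbf{E}_2|\leq\sqrt{3}\|\mathbf{E}_2\|$, and $\|\mathbf{I}\|=\sqrt{3}$ yield $\|\mathbf{S}(\mathbf{F}_2)\|\leq(2\lambda_1+3\lambda_2)\|\mathbf{E}_2\|$, which matches the second summand exactly. Assembling through the initial decomposition closes the argument. The only real obstacle is constant tracking: one must resist bounding $\|\operatorname{tr}(\mathbf{E}_1-\mathbf{E}_2)\mathbf{I}\|$ by $\sqrt{3}\cdot\sqrt{3}\|\mathbf{E}_1-\mathbf{E}_2\|=3\|\mathbf{E}_1-\mathbf{E}_2\|$, which would degrade the tight $\sqrt{3}/2$ coefficient to $3/2$. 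Computing the trace algebraically via the difference of squared Frobenius norms is precisely what delivers the advertised constant; everything else is triangle inequality and submultiplicativity.
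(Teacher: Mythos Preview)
Your proposal is correct and follows essentially the same route as the paper: the same telescoping decomposition of $\mathbf{P}$, the same bound on $\|\mathbf{E}_1-\mathbf{E}_2\|$, the same $(2\lambda_1+3\lambda_2)\|\mathbf{E}_2\|$ estimate for $\|\mathbf{S}(\mathbf{F}_2)\|$, and the same $\sqrt{3}/2$ coefficient in the trace term. The only cosmetic difference is that the paper bounds $|\operatorname{tr}(\mathbf{E}_1-\mathbf{E}_2)|$ by applying Cauchy--Schwarz to each summand of $\tfrac{1}{2}\operatorname{tr}\bigl(\mathbf{F}_1^T(\mathbf{F}_1-\mathbf{F}_2)+(\mathbf{F}_1-\mathbf{F}_2)^T\mathbf{F}_2\bigr)$, whereas you recognize the trace as $\tfrac{1}{2}(\|\mathbf{F}_1\|^2-\|\mathbf{F}_2\|^2)$ and factor; both yield the identical bound.
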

\begin{proof}
We have:\begin{align}\label{4}
\mathbf{E}_1-\mathbf{E}_2&=\frac{\mathbf{F}_1^T\mathbf{F}_1-\mathbf{F}_2^T\mathbf{F}_2}{2} \notag\\
 &=\frac{1}{2}(\mathbf{F}_1^T\mathbf{F}_1-\mathbf{F}_1^T\mathbf{F}_2+\mathbf{F}_1^T\mathbf{F}_2-\mathbf{F}_2^T\mathbf{F}_2) \notag\\
 &=\frac{1}{2}(\mathbf{F}_1^T(\mathbf{F}_1-\mathbf{F}_2)+(\mathbf{F}_1^T-\mathbf{F}_2^T)\mathbf{F}_2 )
\end{align}
Hence:\begin{align}\label{5}
\|\mathbf{E}_1-\mathbf{E}_2\|\leq\frac{1}{2}(\|\mathbf{F}_1\|+\|\mathbf{F}_2\|)\|\mathbf{F}_1-\mathbf{F}_2\|
\end{align}
And:\begin{align}\label{6}
\|tr(\mathbf{E}_1)\mathbf{I}-tr(\mathbf{E}_2)\mathbf{I}\|&=\sqrt{3}|tr(\mathbf{E}_1)-tr(\mathbf{E}_2)|\notag\\
&=\sqrt{3}|tr(\mathbf{E}_1-\mathbf{E}_2)|\notag\\
&=\frac{\sqrt{3}}{2}|tr(\mathbf{F}_1^T(\mathbf{F}_1-\mathbf{F}_2)+(\mathbf{F}_1^T-\mathbf{F}_2^T)\mathbf{F}_2)|\notag\\
&\leq\frac{\sqrt{3}}{2}(\|\mathbf{F}_1\|+\|\mathbf{F}_2\|)\|\mathbf{F}_1-\mathbf{F}_2\|
\end{align}

For $\mathbf{P}(\mathbf{F})$ we have:\begin{align}\label{7}
\mathbf{P}(\mathbf{F}_1)-\mathbf{P}(\mathbf{F}_2)&=\mathbf{F}_1\mathbf{S}(\mathbf{F}_1)-\mathbf{F}_2\mathbf{S}(\mathbf{F}_2)\notag\\
&=\mathbf{F}_1\mathbf{S}(\mathbf{F}_1)-\mathbf{F}_1\mathbf{S}(\mathbf{F}_2)+\mathbf{F}_1\mathbf{S}(\mathbf{F}_2)-\mathbf{F}_2\mathbf{S}(\mathbf{F}_2)\notag\\
&=\mathbf{F}_1(\mathbf{S}(\mathbf{F}_1)-\mathbf{S}(\mathbf{F}_2))+(\mathbf{F}_1-\mathbf{F}_2)\mathbf{S}(\mathbf{F}_2)
\end{align}
Therefore:\begin{align}\label{8}
\|\mathbf{P}(\mathbf{F}_1)-\mathbf{P}(\mathbf{F}_2)\|\leq\|\mathbf{F}_1\|\|\mathbf{S}(\mathbf{F}_1)-\mathbf{S}(\mathbf{F}_2)\|+\|\mathbf{F}_1-\mathbf{F}_2\|\|\mathbf{S}(\mathbf{F}_2)\|
\end{align}
By \eqref{5} and \eqref{6} we have:\begin{align}\label{9}
\|\mathbf{S}(\mathbf{F}_1)-\mathbf{S}(\mathbf{F}_2)\|\leq (\lambda_1+\frac{\sqrt{3}}{2}\lambda_2)(\|\mathbf{F}_1\|+\|\mathbf{F}_2\|)\|\mathbf{F}_1-\mathbf{F}_2\|
\end{align}
We next estimate $\|\mathbf{S}(\mathbf{F}_2)\|$:\begin{align}\label{10}
\|\mathbf{S}(\mathbf{F}_2)\|&\leq 2\lambda_1\|\mathbf{E}_2\|+\lambda_2\sqrt{3}|tr(\mathbf{E}_2)|\notag\\
&\leq (2\lambda_1+3\lambda_2)\|\mathbf{E}_2\|
\end{align}
The result comes from \eqref{5}, \eqref{6}, \eqref{8} and \eqref{10}.
\end{proof}
\begin{prop}
\label{prop2}
Assume $\|\mathbf{F}\|^2\geq27$, then we have:\begin{align}\label{11}
\psi(\mathbf{F})\geq(\frac{16}{729}\lambda_1+\frac{72}{729}\lambda_2)\|\mathbf{F}\|^4
\end{align}
\end{prop}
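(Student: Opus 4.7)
The plan is to bound the two non-negative summands in $\psi(\mathbf{F})=\lambda_1\|\mathbf{E}\|^2+\tfrac{1}{2}\lambda_2 tr^2(\mathbf{E})$ separately from below by multiples of $\|\mathbf{F}\|^4$, using the hypothesis $\|\mathbf{F}\|^2\geq 27$ only to absorb the additive constants contributed by the identity $\mathbf{I}$ in the definition of $\mathbf{E}$. The one scalar identity I will rely on throughout is $tr(\mathbf{E})=\tfrac{1}{2}(\|\mathbf{F}\|^2-3)$, which follows by taking traces in $\mathbf{E}=\tfrac{1}{2}(\mathbf{F}^T\mathbf{F}-\mathbf{I})$ together with $\|\mathbf{F}\|^2=tr(\mathbf{F}^T\mathbf{F})$.

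For the trace summand I will substitute this identity to get $\tfrac{1}{2}\lambda_2 tr^2(\mathbf{E})=\tfrac{1}{8}\lambda_2(\|\mathbf{F}\|^2-3)^2$. The hypothesis $\|\mathbf{F}\|^2\geq 27$ rearranges to $3\leq\|\mathbf{F}\|^2/9$, so $\|\mathbf{F}\|^2-3\geq\tfrac{8}{9}\|\mathbf{F}\|^2$, which gives $\tfrac{1}{2}\lambda_2 tr^2(\mathbf{E})\geq\tfrac{8}{81}\lambda_2\|\mathbf{F}\|^4=\tfrac{72}{729}\lambda_2\|\mathbf{F}\|^4$, matching the second stated coefficient exactly.

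For the Frobenius summand my plan is to drop to the spectral norm via $\|\mathbf{E}\|\geq\|\mathbf{E}\|_2$. Let $\sigma_{\max}$ denote the largest singular value of $\mathbf{F}$. From $\|\mathbf{F}\|^2=\sigma_1^2+\sigma_2^2+\sigma_3^2\leq 3\sigma_{\max}^2$ I conclude $\sigma_{\max}^2\geq\|\mathbf{F}\|^2/3\geq 9$, so $\tfrac{1}{2}(\sigma_{\max}^2-1)$ is a non-negative eigenvalue of $\mathbf{E}$ and therefore a lower bound for $\|\mathbf{E}\|_2$. Using $1\leq\|\mathbf{F}\|^2/27$ to absorb the $-1$ gives $\|\mathbf{E}\|_2\geq\tfrac{1}{2}(\|\mathbf{F}\|^2/3-1)\geq\tfrac{4}{27}\|\mathbf{F}\|^2$, and squaring yields $\|\mathbf{E}\|^2\geq\tfrac{16}{729}\|\mathbf{F}\|^4$. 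Adding this to the trace estimate gives the claimed inequality.

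The one non-routine choice, and the main potential obstacle, is deciding which lower bound to use for $\|\mathbf{E}\|$. A purely Frobenius route, for example $\|\mathbf{E}\|^2\geq\tfrac{1}{3}tr^2(\mathbf{E})$ (valid because $\mathbf{E}$ is symmetric) or the reverse triangle inequality applied to $\|\mathbf{F}^T\mathbf{F}-\mathbf{I}\|$ with $\|\mathbf{F}^T\mathbf{F}\|\geq\|\mathbf{F}\|^2/\sqrt{3}$, actually produces a larger constant, but the spectral-norm route via the single eigenvalue $\tfrac{1}{2}(\sigma_{\max}^2-1)$ is what lands on the exact coefficient $16/729$ in the statement.
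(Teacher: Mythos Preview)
Your proposal is correct and follows essentially the same approach as the paper: the paper also expresses everything via the singular values of $\mathbf{F}$, bounds $\|\mathbf{E}\|^2$ below by the single term $\tfrac14(\sigma_1^2-1)^2$ (which is exactly your spectral-norm step), and handles $tr^2(\mathbf{E})$ via $tr(\mathbf{E})=\tfrac12(\|\mathbf{F}\|^2-3)$ together with $\|\mathbf{F}\|^2-3\geq\tfrac{8}{9}\|\mathbf{F}\|^2$. The only cosmetic difference is the order in which the two substitutions are applied in the $\|\mathbf{E}\|$ estimate (the paper uses $\sigma_1^2\geq 9$ first and then $\sigma_1^2\geq\|\mathbf{F}\|^2/3$, whereas you substitute $\sigma_{\max}^2\geq\|\mathbf{F}\|^2/3$ first and then invoke $\|\mathbf{F}\|^2\geq 27$), but the resulting bound $\|\mathbf{E}\|\geq\tfrac{4}{27}\|\mathbf{F}\|^2$ is identical.
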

\begin{proof}
Assume the singular values of $\mathbf{F}$ are $\sigma_1\geq\sigma_2\geq\sigma_3$. Then we have:\begin{align}
\|\mathbf{F}\|^2&=\sum_{i=1}^{3}\sigma_i^2\\
\|\mathbf{E}\|^2&=\frac{1}{4}\sum_{i=1}^{3}(\sigma_i^2-1)^2\label{12}\\
tr^2(\mathbf{E})&=\frac{1}{4}(\sum_{i=1}^{3}\sigma_i^2-3)^2
\end{align}
Since $\|\mathbf{F}\|^2\geq27$ we have $\sigma_1\geq3$, and we have:\begin{align}\label{13}
(\sigma_1^2-1)^2=\max_{i=1,2,3}(\sigma_i^2-1)^2
\end{align}
Hence:\begin{align}\label{14}
\|\mathbf{E}\|^2&\geq\frac{1}{4}(\sigma_1^2-1)^2\geq\frac{1}{4}(\frac{8}{9}\sigma_1^2)^2\geq\frac{1}{4}(\frac{8}{27}\|\mathbf{F}\|^2)^2\\
tr^2(\mathbf{E})&\geq\frac{1}{4}(\frac{24}{27}\|\mathbf{F}\|^2)^2
\end{align}
The result comes from \eqref{13} and \eqref{14}.
\end{proof}
\begin{prop}
\label{prop3}
$\conv{\lev{a}{\psi}}\subset \{\mathbf{F}\mid\|\mathbf{F}\|\leq b\}$,
where $b=\max\{3\sqrt{3},\sqrt[4]{\frac{a}{\frac{16}{729}\lambda_1+\frac{72}{729}\lambda_2}}\}$.
\end{prop}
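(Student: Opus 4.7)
The plan is to proceed in two steps: first show that the sublevel set $\lev{a}{\psi}$ itself is contained in the Frobenius-norm ball of radius $b$, and then observe that this ball is convex so the same containment extends to the convex hull.

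For the first step, I will take an arbitrary $\mathbf{F} \in \lev{a}{\psi}$, i.e. any $\mathbf{F}$ with $\psi(\mathbf{F}) \leq a$, and split into two cases according to whether $\|\mathbf{F}\|^2$ is below or above the threshold $27$ used in Proposition~\ref{prop2}. If $\|\mathbf{F}\|^2 < 27$, then $\|\mathbf{F}\| < 3\sqrt{3} \leq b$ and we are done. Otherwise $\|\mathbf{F}\|^2 \geq 27$, so Proposition~\ref{prop2} applies and gives
\[
a \;\geq\; \psi(\mathbf{F}) \;\geq\; \Bigl(\tfrac{16}{729}\lambda_1 + \tfrac{72}{729}\lambda_2\Bigr)\|\mathbf{F}\|^4,
\]
which rearranges to $\|\mathbf{F}\| \leq \sqrt[4]{a/(\tfrac{16}{729}\lambda_1 + \tfrac{72}{729}\lambda_2)} \leq b$. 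Either way $\|\mathbf{F}\| \leq b$, so $\lev{a}{\psi} \subset \{\mathbf{F} : \|\mathbf{F}\| \leq b\}$.

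For the second step, the set $\{\mathbf{F} : \|\mathbf{F}\| \leq b\}$ is a closed ball in the Frobenius-norm (i.e.\ Euclidean) geometry on $\mathbb{R}^{3\times 3}$, hence convex. Since the convex hull of a set is the smallest convex set containing it, we conclude $\conv{\lev{a}{\psi}} \subset \{\mathbf{F} : \|\mathbf{F}\| \leq b\}$, which is the claim.

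There is essentially no hard step here — the content of the proposition is really just a repackaging of Proposition~\ref{prop2}. The only thing one has to be careful about is the case split on $\|\mathbf{F}\|^2 \geq 27$, which is exactly why $b$ is defined as the maximum of the two quantities: the first term $3\sqrt{3}$ covers matrices below the threshold where Proposition~\ref{prop2} gives no information, while the second term covers the high-norm regime.
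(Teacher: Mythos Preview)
Your proof is correct and follows essentially the same approach as the paper: reduce to showing $\lev{a}{\psi}\subset\{\mathbf{F}:\|\mathbf{F}\|\leq b\}$ by convexity of the Frobenius ball, then split on whether $\|\mathbf{F}\|^2\geq 27$ and invoke Proposition~\ref{prop2} in the high-norm case.
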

\begin{proof}
Since $\{\mathbf{F}\mid\|\mathbf{F}\|\leq b\}$ is convex, it suffice to show $\lev{a}{\psi}\subset \{\mathbf{F}\mid\|\mathbf{F}\|\leq b\}$. Now assume $\phi(\mathbf{F})\leq a$. If $\|\mathbf{F}\|\leq 3\sqrt{3}$ this is trivial, otherwise by Proposition~\ref{prop2} we have:\begin{align}\label{15}
(\frac{16}{729}\lambda_1+\frac{72}{729}\lambda_2)\|\mathbf{F}\|^4\leq a
\end{align}
which completes the proof.
\end{proof}
\begin{prop}
\begin{itemize}
  \item[(1):] The value $((2\lambda_1+\sqrt{3}\lambda_2)b^2+(\lambda_1+\frac{3}{2}\lambda_2)\sqrt{b^4+3})$ is a Lipschitz constant of $\mathbf{P}(\mathbf{F})$ over the set $\conv{\lev{a}{\psi}}$,
  where $b$ is defined in Proposition~\ref{prop3}.
  \item[(2):] $\sup\limits_{\mathbf{F}\in\lev{a}{\psi}}\|\mathbf{P}(\mathbf{F})\|^2\leq (2\lambda_1+3\lambda_2)^2b^2(\frac{1}{4}b^4+\frac{3}{4})$
\end{itemize}
\end{prop}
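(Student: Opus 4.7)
My plan is to assemble both parts from Proposition~\ref{prop1}, Proposition~\ref{prop3}, and one elementary upper bound on $\|\mathbf{E}(\mathbf{F})\|$ in terms of $\|\mathbf{F}\|$. Concretely, the single ingredient missing from what is already in hand is the claim: if $\|\mathbf{F}\|\leq b$, then $\|\mathbf{E}(\mathbf{F})\|^{2}\leq \tfrac{1}{4}(b^{4}+3)$. I would prove this via the singular value decomposition of $\mathbf{F}$, writing $4\|\mathbf{E}\|^{2}=\sum_{i=1}^{3}(\sigma_{i}^{2}-1)^{2}=\sum_{i}\sigma_{i}^{4}-2\sum_{i}\sigma_{i}^{2}+3\leq\sum_{i}\sigma_{i}^{4}+3\leq(\sum_{i}\sigma_{i}^{2})^{2}+3=\|\mathbf{F}\|^{4}+3\leq b^{4}+3$, where the penultimate inequality uses $\sum x_{i}^{2}\leq(\sum x_{i})^{2}$ for nonnegative reals applied with $x_{i}=\sigma_{i}^{2}$.

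For part (1), I would take arbitrary $\mathbf{F}_{1},\mathbf{F}_{2}\in\conv{\lev{a}{\psi}}$; Proposition~\ref{prop3} gives $\|\mathbf{F}_{1}\|,\|\mathbf{F}_{2}\|\leq b$. Plugging into the Lipschitz estimate of Proposition~\ref{prop1}, the first summand is bounded by $(\lambda_{1}+\tfrac{\sqrt{3}}{2}\lambda_{2})\cdot b\cdot 2b=(2\lambda_{1}+\sqrt{3}\lambda_{2})b^{2}$, while the $\|\mathbf{E}\|$ bound above yields the second summand as at most $(2\lambda_{1}+3\lambda_{2})\cdot\tfrac{1}{2}\sqrt{b^{4}+3}=(\lambda_{1}+\tfrac{3}{2}\lambda_{2})\sqrt{b^{4}+3}$. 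Adding these recovers exactly the asserted Lipschitz constant.

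For part (2), I invoke submultiplicativity of the Frobenius norm to get $\|\mathbf{P}(\mathbf{F})\|^{2}=\|\mathbf{F}\,\mathbf{S}(\mathbf{F})\|^{2}\leq\|\mathbf{F}\|^{2}\|\mathbf{S}(\mathbf{F})\|^{2}$, then use $\|\mathbf{S}(\mathbf{F})\|\leq(2\lambda_{1}+3\lambda_{2})\|\mathbf{E}\|$ from \eqref{10}, and finish by combining $\|\mathbf{F}\|^{2}\leq b^{2}$ (Proposition~\ref{prop3}, since $\lev{a}{\psi}\subset\conv{\lev{a}{\psi}}$) with the $\|\mathbf{E}\|^{2}$ bound above. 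There is no serious obstacle here; the whole statement is an assembly of bounds already proved. The only place that needs care is the $\|\mathbf{E}\|^{2}\leq\tfrac{1}{4}(b^{4}+3)$ estimate: one must keep $\sum\sigma_{i}^{4}\leq(\sum\sigma_{i}^{2})^{2}$ and drop only the negative $-2\sum\sigma_{i}^{2}$ term, rather than use a looser per-term inequality, otherwise the constant under the square root will not match the $\sqrt{b^{4}+3}$ appearing in the claimed Lipschitz constant.
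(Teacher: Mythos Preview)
Your proposal is correct and follows essentially the same approach as the paper: the paper also derives the bound $\|\mathbf{E}\|^{2}\leq\frac{1}{4}\|\mathbf{F}\|^{4}+\frac{3}{4}$ via the singular value identity, dropping the negative $-2\sum\sigma_i^{2}$ term and using $\sum\sigma_i^{4}\leq(\sum\sigma_i^{2})^{2}$, then combines this with Proposition~\ref{prop3} and Proposition~\ref{prop1} for (1), and with $\|\mathbf{P}(\mathbf{F})\|\leq\|\mathbf{F}\|\,\|\mathbf{S}(\mathbf{F})\|$ together with \eqref{10} for (2). The arguments are identical in structure and in the constants obtained.
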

\begin{proof}
Assume the singular values of $F$ are $\sigma_1\geq\sigma_2\geq\sigma_3$, by \eqref{12} we have:\begin{align}\label{16}
\|\mathbf{E}\|^2\leq \frac{1}{4}\sum_{i=1}^{3}\sigma_i^4+\frac{3}{4}\leq\frac{1}{4}(\sum_{i=1}^{3}\sigma_i^2)^2+\frac{3}{4}=\frac{1}{4}\|\mathbf{F}\|^4+\frac{3}{4}
\end{align}
Now assume $\mathbf{F}_1,\mathbf{F}_2\in \conv{\lev{a}{\psi}}$, by Proposition~\ref{prop3} we have:\begin{align}\label{17}
\|\mathbf{F}_1\|\leq b,\|\mathbf{F}_2\|\leq b
\end{align}
By Proposition~\ref{prop1} and \eqref{17} we have:\begin{align}\label{18}
\|\mathbf{P}(\mathbf{F}_1)-\mathbf{P}(\mathbf{F}_2)\|&\leq ((2\lambda_1+\sqrt{3}\lambda_2)b^2+(\lambda_1+\frac{3}{2}\lambda_2)\sqrt{b^4+3})\|\mathbf{F}_1-\mathbf{F}_2\|
\end{align}
which proves (1).

For (2), by \eqref{10}, Proposition~\ref{prop3} and \eqref{16}:\begin{align}\label{19}
\|\mathbf{P}(\mathbf{F})\|^2\leq b^2\|\mathbf{S}(\mathbf{F})\|^2\leq b^2(2\lambda_1+3\lambda_3)^2\|\mathbf{E}\|^2\leq (2\lambda_1+3\lambda_2)^2b^2(\frac{1}{4}b^4+\frac{3}{4})
\end{align}
\end{proof}

\section{Verification for the \xzu{} iteration}
\begin{figure}[t]
	\centering
	\includegraphics[width=6.5cm]{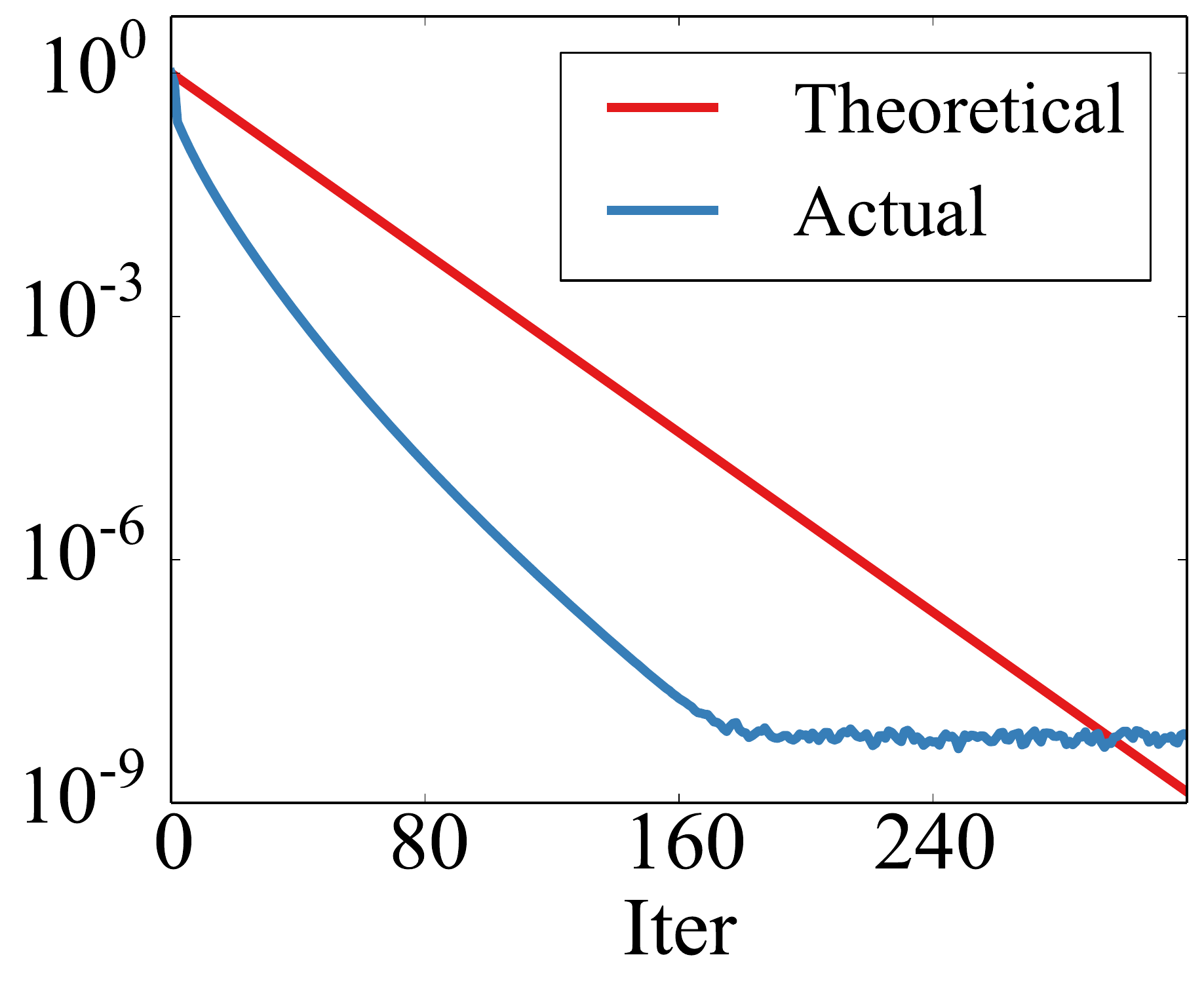}
	\caption{Comparison between theoretical and actual shrinkage of $\|\mathbf{B}\mathbf{z}^{k+1}-\mathbf{B}\mathbf{z}^k\|$ for verification of Theorem 3.3. The values are normalized by $\|\mathbf{B}\mathbf{z}^{1}-\mathbf{B}\mathbf{z}^0\|$ and plotted in logarithmic scale. The actual shrinkage ratio stays below the theoretical upper bound, before it oscillates due to numerical error when close enough to the solution.}
	\label{fig:xzu}
\end{figure}
In this subsection, we construct an example to verify the linear convergence theorem for the \xzu{} iteration (Theorem 3.3).
Assume the function $J:\mathbb{R}^9\rightarrow\mathbb{R}^{3\times3}$ assembles a vector into its matrix form. Assume $\mathbf{z}\in\mathbb{R}^{9n_t}$, $\mathbf{z}_i\in\mathbb{R}^{9}$ is its component, where $i\in[1,n_t]$. Then $g(\mathbf{z})$ becomes:
\begin{align}
\label{20}
g(\mathbf{z})=\sum_{i=1}^{n_t}v_i\psi(J(\mathbf{z}_i)).
\end{align}
The next proposition is just a corollary from the proofs of previous propositions so we omit its proof.
\begin{prop}
\label{prop2-1}
\begin{itemize}
  \item[(1):]The value $\max\limits_{i}v_i((2\lambda_1+\sqrt{3}\lambda_2)b_i^2+(\lambda_1+\frac{3}{2}\lambda_2)\sqrt{b_i^4+3})$ is a Lipschitz constant of $\nabla g(z)$ over the set $\conv{\lev{a}{g}}$,
  where $b_i=\max\{3\sqrt{3},\sqrt[4]{\frac{a/v_i}{\frac{16}{729}\lambda_1+\frac{72}{729}\lambda_2}}\}$.
  \item[(2):] $\sup\limits_{\mathbf{z}\in\lev{a}{g}}\|\nabla g(\mathbf{z})\|^2\leq \sum_{i=1}^{m}v_i^2      (2\lambda_1+3\lambda_2)^2b_i^2(\frac{1}{4}b_i^4+\frac{3}{4})$.
\end{itemize}
\end{prop}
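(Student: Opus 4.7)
My plan is to exploit the separable structure of $g(\mathbf{z})=\sum_i v_i\psi(J(\mathbf{z}_i))$ and reduce every estimate to the per-element statements already proved for $\psi$ and $\mathbf{P}$ in Propositions~\ref{prop1}--\ref{prop3}, then patch them back together with a ``max over elements'' or a sum over elements, depending on whether we want a Lipschitz constant or a supremum. The reshaping map $J:\mathbb{R}^9\to\mathbb{R}^{3\times 3}$ is a linear isometry between $(\mathbb{R}^9,\|\cdot\|_2)$ and $(\mathbb{R}^{3\times 3},\|\cdot\|_F)$, so any norm-based estimate on $\mathbf{F}$ transfers verbatim to $\mathbf{z}_i$ and any estimate on $\mathbf{P}(\mathbf{F})$ transfers to $\nabla_{\mathbf{z}_i}[\psi(J(\mathbf{z}_i))]=J^{-T}\mathbf{P}(J(\mathbf{z}_i))$ without changing constants.

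The first geometric observation I would record is that $\psi\geq 0$ (it is a sum of a squared Frobenius norm and a squared trace times positive coefficients), so whenever $g(\mathbf{z})\leq a$ each summand satisfies $v_i\psi(J(\mathbf{z}_i))\leq a$, i.e. $J(\mathbf{z}_i)\in\lev{a/v_i}{\psi}$. Applying Proposition~\ref{prop3} to each block gives $\|J(\mathbf{z}_i)\|\leq b_i$ for the $b_i$ defined in the statement. Because the block decomposition $\mathbf{z}\mapsto\mathbf{z}_i$ is a linear (hence convex-combination-preserving) projection, this bound extends from $\lev{a}{g}$ to $\conv{\lev{a}{g}}$: if $\mathbf{z}$ is a convex combination of points in $\lev{a}{g}$, each $\mathbf{z}_i$ is the same convex combination of the $i$-th blocks of those points, all of which have $\|J(\cdot)\|\leq b_i$.

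For part (1), the Hessian of $g$ is block diagonal in the $\mathbf{z}_i$'s, so
\begin{align*}
\|\nabla g(\mathbf{z})-\nabla g(\mathbf{z}')\|^2
 = \sum_{i} v_i^2\,\|\mathbf{P}(J(\mathbf{z}_i))-\mathbf{P}(J(\mathbf{z}'_i))\|^2
 \leq \max_i L_i^2 \cdot \sum_i\|\mathbf{z}_i-\mathbf{z}'_i\|^2,
\end{align*}
where $L_i=v_i((2\lambda_1+\sqrt{3}\lambda_2)b_i^2+(\lambda_1+\tfrac{3}{2}\lambda_2)\sqrt{b_i^4+3})$ is the per-element Lipschitz constant furnished by Proposition~4(1) applied on $\{\mathbf{F}:\|\mathbf{F}\|\leq b_i\}$, a set that contains both $J(\mathbf{z}_i)$ and $J(\mathbf{z}'_i)$ by the previous paragraph. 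Taking the square root gives exactly the stated $\max_i L_i$. For part (2), the block structure yields $\|\nabla g(\mathbf{z})\|^2=\sum_i v_i^2\|\mathbf{P}(J(\mathbf{z}_i))\|^2$, and inserting Proposition~4(2) for each block together with $J(\mathbf{z}_i)\in\lev{a/v_i}{\psi}$ gives the claimed sum.

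The only mildly non-routine step is the convex-hull argument for $\conv{\lev{a}{g}}$ in part (1); the rest is bookkeeping. The statement is indeed just a direct repackaging of Propositions~\ref{prop1}--\ref{prop3} plus Proposition~4 (which is why the authors omit the proof), and the per-element bounds $b_i$ simply arise by replacing $a$ by $a/v_i$ when the sum-of-nonnegatives argument distributes the energy budget across tetrahedra.
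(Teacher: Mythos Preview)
Your proposal is correct and is exactly the argument the paper has in mind: the authors explicitly omit the proof, stating it is ``just a corollary from the proofs of previous propositions,'' and your write-up supplies precisely that corollary by combining the separable block structure of $g$, the nonnegativity of $\psi$ to localize each block into $\lev{a/v_i}{\psi}$, and then Propositions~\ref{prop3} and~4 per block. The only thing you add beyond the implicit paper argument is the (correct) remark that the per-block norm bound $\|J(\mathbf{z}_i)\|\le b_i$ survives passage to $\conv{\lev{a}{g}}$ because block projection is linear and the norm ball is convex.
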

For the sake of simplicity, we suppose $\mathbf{B}=\mathbf{I}$ in Eq.~\ref{eq:SeparableProblem}.
The optimization problem for the verification example is constructed via the following procedure:
\begin{enumerate}[label=\arabic*.]
\item Choose the initial value as suggested by Assumption 3.5.
\item Let $a=T^0+1$. Compute $L_c$ and $\sup\limits_{\mathbf{z}\in\lev{a}{g}}\|\nabla g(\mathbf{z})\|^2$ using Proposition~\ref{prop2-1}.
\item Choose a matrix $\mathbf{G}$ that is large enough such that $\rho(\mathbf{K})\leq\frac{1}{2L_c}$.
\item Choose a $\mu$ that is large enough such that $c_1\leq1$ as defined in Assumption 3.5, $\frac{\mu}{2}-\frac{L_c^2}{\mu}\geq \frac{L_c}{2}$ and $\mu>\max\{\frac{1}{\frac{1}{2L_c}-\rho(\mathbf{K})},
    \frac{1}{L_c}\}$.
\end{enumerate}
Fig.~\ref{fig:xzu} plots in logarithmic scale the value of $\|\mathbf{B}\mathbf{z}^{k+1}-\mathbf{B}\mathbf{z}^k\|$ throughout the iterations, as well as a straight line where the value changes according to the constant upper bound of shrinkage ratio given in Theorem 3.3. We can see that the actual shrinkage ratio stays below the theoretical upper bound before convergence.

\section{Verification for the \zxu{} iteration}
\begin{figure}[t]
	\centering
	\includegraphics[width=6.5cm]{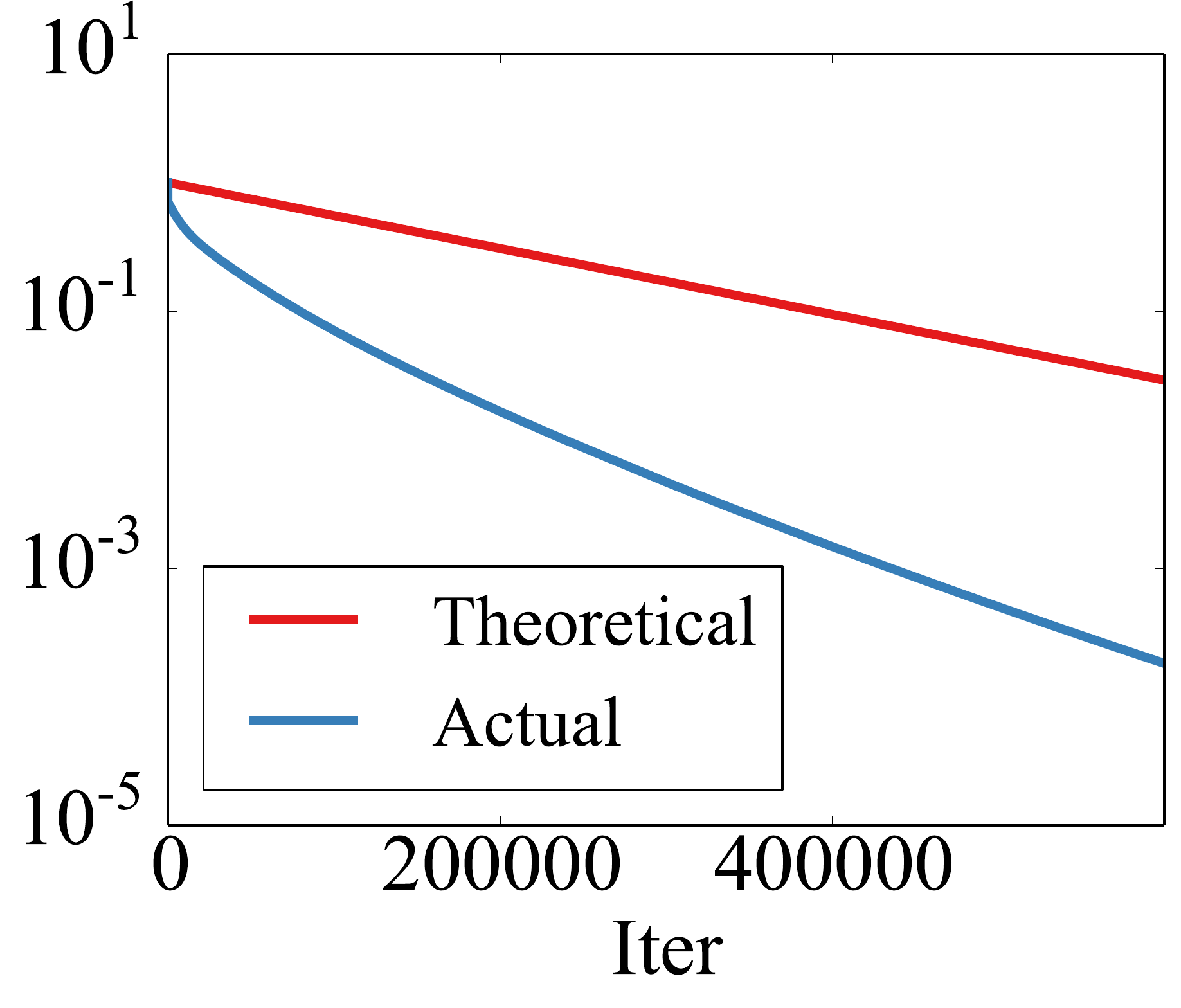}
	\caption{Comparison between theoretical and actual shrinkage of $\|\mathbf{v}^{k+1}-\mathbf{v}^k\|$ for verification of Theorem 3.4. The values are normalized by $\|\mathbf{v}^{1}-\mathbf{v}^0\|$ and plotted in logarithmic scale. The actual shrinkage ratio stays below the theoretical upper bound.}
	\label{fig:zxu}
\end{figure}

We now construct an example to verify the linear convergence theorem for the \zxu{} iteration (Theorem 3.4).
The first problem is to compute $\eta$. We first compute the SVD for $\mathbf{A}$. Suppose $\mathbf{A}=\mathbf{U}\mathbf{D}\mathbf{V}$ and let $r$ be the rank of $\mathbf{A}$. Let $\mathbf{U}_r$ be a sub-matrix of $\mathbf{U}$ consisting of its first $r$ columns. Then we know $R(\mathbf{A})=R(\mathbf{U}_r)$. Moreover, $\forall\mathbf{y}\in R(\mathbf{A})$, suppose $\mathbf{y}=\mathbf{U}_r\mathbf{z}$, then we have: $\|\mathbf{y}\|=\|\mathbf{z}\|$. Thus we choose $\eta$ to be the minimal eigenvalue of $\mathbf{U}_r^T\mathbf{K}\mathbf{U}_r$.

\begin{prop}
\label{prop3-1}
$\sup\limits_{x\in\lev{a}{f}}\|\mathbf{A}\mathbf{x}-\mathbf{A}\mathbf{\tilde{x}}\|^2\leq 2\|\mathbf{A}\|_2^2a/q $, where $q$ is the minimal eigenvalue of $\mathbf{G}$.
\end{prop}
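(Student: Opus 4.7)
The plan is to exploit the simple quadratic structure of $f$. Recall that $f(\mathbf{x}) = \frac{1}{2}(\mathbf{x}-\tilde{\mathbf{x}})^T \mathbf{G}(\mathbf{x}-\tilde{\mathbf{x}})$, so the condition $\mathbf{x}\in\lev{a}{f}$ is exactly a bound on a $\mathbf{G}$-weighted norm of $\mathbf{x}-\tilde{\mathbf{x}}$.

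First, I would unfold the level-set condition: $\mathbf{x}\in\lev{a}{f}$ means $(\mathbf{x}-\tilde{\mathbf{x}})^T \mathbf{G}(\mathbf{x}-\tilde{\mathbf{x}})\leq 2a$. Since $\mathbf{G}$ is (by construction) symmetric positive definite with smallest eigenvalue $q>0$, the standard Rayleigh-quotient inequality gives $q\|\mathbf{x}-\tilde{\mathbf{x}}\|^2 \leq (\mathbf{x}-\tilde{\mathbf{x}})^T \mathbf{G}(\mathbf{x}-\tilde{\mathbf{x}})$, hence $\|\mathbf{x}-\tilde{\mathbf{x}}\|^2 \leq 2a/q$.

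Second, I would pass through $\mathbf{A}$ using its spectral norm: $\|\mathbf{A}\mathbf{x}-\mathbf{A}\tilde{\mathbf{x}}\|^2 = \|\mathbf{A}(\mathbf{x}-\tilde{\mathbf{x}})\|^2 \leq \|\mathbf{A}\|_2^{\,2}\,\|\mathbf{x}-\tilde{\mathbf{x}}\|^2$. Combining the two inequalities and taking the supremum over $\mathbf{x}\in\lev{a}{f}$ yields the claimed bound $2\|\mathbf{A}\|_2^{\,2} a/q$.

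There is no real obstacle here; the proof is a two-line chain of elementary inequalities. The only thing worth stating explicitly is the implicit assumption that $\mathbf{G}$ is positive definite so that $q>0$ and the level sets are bounded — otherwise the supremum could be infinite. Given the paper's description of $\mathbf{G}$ as a scaled mass matrix, this is automatic.
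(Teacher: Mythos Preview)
Your proof is correct and matches the paper's argument exactly: first bound $\|\mathbf{x}-\tilde{\mathbf{x}}\|^2$ by $2a/q$ via the smallest eigenvalue of $\mathbf{G}$, then apply the spectral-norm bound $\|\mathbf{A}(\mathbf{x}-\tilde{\mathbf{x}})\|^2\leq\|\mathbf{A}\|_2^2\|\mathbf{x}-\tilde{\mathbf{x}}\|^2$. Your explicit remark on the positive definiteness of $\mathbf{G}$ is a welcome clarification but otherwise there is nothing to add.
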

\begin{proof}
By the definition of $f(\mathbf{x})$ we have:\begin{align}\label{21}
\lev{a}{f}=\{\mathbf{x}:\frac{1}{2}\|\mathbf{x}-\mathbf{\tilde{x}}\|_{\mathbf{G}}^2\leq a\}\subset \{\|\mathbf{x}-\mathbf{\tilde{x}}\|^2\leq 2a/q \}
\end{align}
Moreover, we have:\begin{align}\label{22}
\|\mathbf{A}\mathbf{x}-\mathbf{A}\mathbf{\tilde{x}}\|^2\leq\|\mathbf{A}\|_2^2\|\mathbf{x}-\mathbf{\tilde{x}}\|^2
\end{align}
which completes the proof.
\end{proof}
To determine $c_2$ defined in Assumption 3.6 we run one dimensional line-search for $t$ and compute the upper bound for $\sup\limits_{\mathbf{x}\in\lev{t}{f}}\frac{2}{\eta^2\mu}\|\mathbf{A}\mathbf{x}-\mathbf{A}\mathbf{\tilde{x}}\|^2+\sup\limits_{\mathbf{z}\in\lev{a-t}{g}}(\frac{2\rho(\mathbf{K})^2}{\mu\eta^2} +\frac{1}{\mu}) \| \mathbf{B}^{-T}\nabla g(\mathbf{z})\|^2$ by using Proposition~\ref{prop3-1} and Proposition~\ref{prop2-1}.

The optimization problem used of verification is constructed via the following procedure:
\begin{enumerate}[label=\arabic*.]
\item Choose initial value as suggested by Assumption 3.6. Then compute $\eta$.
\item Let $a=T^0+1$. Compute $L_d$ and $\sup\limits_{\mathbf{z}\in\lev{a}{g}}\|\nabla g(\mathbf{z})\|^2$ using Proposition~\ref{prop2-1}.
\item Choose a matrix $\mathbf{G}$ that is large enough such that $\rho(\mathbf{K})\leq\frac{1}{L_d}$.
\item Choose a $\mu$ that is large enough such that $c_2+c_3\leq1$ defined in Assumption 3.6, $\frac{\mu}{2}\geq\frac{4}{\eta^2\mu}$, $\frac{\mu-L_d}{2}\geq(\frac{4\rho(\mathbf{K})^2L_d^2}{\mu\eta^2}+\frac{2L_d^2}{\mu})$ and $\mu>\max\{\frac{1}{\frac{2}{L_d}-\rho(\mathbf{K})},
    \frac{1}{L_d}\}$.
\end{enumerate}
Fig.~\ref{fig:zxu} plots in logarithmic scale the value $\|\mathbf{v}^{k+1}-\mathbf{v}^k\|$ throughout the iterations, as well as a straight line where the value shrinks according to the constant theoretical upper bound given in Theorem 3.4. We can see that the actual shrinkage ratio stays below the upper bound.

\bibliographystyle{ACM-Reference-Format}
\bibliography{AA-ADMM}